\begin{document}

\begin{frontmatter}

\title{Unifying Theories of Reactive Design Contracts}

\author{Simon Foster, Ana Cavalcanti, Samuel Canham, Jim Woodcock and Frank Zeyda}

\begin{abstract}
  Design-by-contract is an important technique for model-based design in which a composite system is specified by a
  collection of contracts that specify the behavioural assumptions and guarantees of each component. In this paper, we
  describe a unifying theory for \emph{reactive design} contracts that provides the basis for modelling and verification
  of reactive systems. We provide a language for expression and composition of contracts that is supported by a rich
  calculational theory. In contrast with other semantic models in the literature, our theory of contracts allow us to
  specify both the evolution of state variables and the permissible interactions with the environment. Moreover, our
  model of interaction is abstract, and supports, for instance, discrete time, continuous time, and hybrid computational
  models. Being based in Unifying Theories of Programming (UTP), our theory can be composed with further computational
  theories to support semantics for multi-paradigm languages. Practical reasoning support is provided via our proof
  framework, Isabelle/UTP, including a proof tactic that reduces a conjecture about a reactive program to three
  predicates, symbolically characterising its assumptions and guarantees about intermediate and final observations. This
  allows us to verify programs with a large or infinite state space. Our work advances the state-of-the-art in semantics
  for reactive languages, description of their contractual specifications, and compositional verification.
\end{abstract}

\end{frontmatter}

\section{Introduction}
\label{sec:intro}
Verification of large-scale systems of systems and cyber-physical systems is challenging due to the size and complexity
of the underlying models~\cite{Vincentelli2012}. The design-by-contract paradigm~\cite{Meyer92,Jones2003,Bauer2012}
provides a precise approach for compositional verification. In this approach, the verifier defines contracts with two
parts: (1) the required behaviour that a constituent guarantees to implement, and (2) the assumptions the constituent
can make about its environment~\cite{Benveniste2007}.

A composite system can thus be characterised by a number of contracts, one for each constituent, the composition of
which fulfils the overall system-level contract that specifies the behaviour of the system as a whole. Each constituent
system can be shown to fulfil its required behaviour under certain assumptions. Violation of a constituent's assumptions
leads to unpredictable behaviour of the entire system. To enable verification, we also need a theory of contracts
applicable to a wide range of paradigms and accompanied by automated tool support.

In this article, we provide a novel unifying theory of contracts for a wide-spectrum of stateful reactive
languages~\cite{Oliveira&09,Woodcock14,Foster16b}, with practical verification support provided in the Isabelle/HOL
theorem prover~\cite{Isabelle,Foster16a}. Our contracts are supported by a rich algebraic theory with composition
operators and laws to calculate the overall assumptions and guarantees of composite contracts. Whilst contracts can be
applied as a specification mechanism for the purpose of verification, they can also function as a denotational model for
reactive languages according to the ``programs-as-predicates'' philosophy~\cite{Hehner93,Oliveira&09}. This means that
our contract theory is a semantic model sufficient to characterise both specifications and implementations, and thus
avoids a formalisation gap between different languages.

Our theory is based in Hoare and He's Unifying Theories of Programming~\cite{Hoare&98,Cavalcanti&06} (UTP): a meta-model
framework for describing denotational semantics in terms of an alphabetised predicate calculus that acts as a
\emph{lingua franca}. Different semantic models can be encoded into this common domain, and compared and linked with one
another. Unlike other works~\cite{He2005,Zhan2008}, our theory of contracts is wholly embedded into the alphabetised
relational calculus: this gives a direct route to automated reasoning. We build on previous UTP theories of reactive
processes~\cite{Hoare&98} and reactive designs~\cite{Cavalcanti&06,Oliveira&09}, whilst making several improvements and
generalisations.

Our UTP theory is highly extensible in several ways. Whilst the previous work supports only discrete sequence-based
traces, we adopt our work on algebraic trace models~\cite{Foster17a}. This allows us to describe contracts for a
spectrum of languages, from untimed and discrete time languages, through to continuous time and hybrid systems with
differential equations. Moreover, our theory can be extended with additional semantic information, such as refusals as
employed in \Circus~\cite{Woodcock2001-Circus,Oliveira&09}, but also potentially other models such as timed testing
traces~\cite{Woodcock14,Foster16b}.

Finally, our theory has been mechanised in our Isabelle-based proof assistant for the UTP, which we call
Isabelle/UTP~\cite{Foster16a,Zeyda16,Foster14}. This provides us with theory construction and verification facilities,
and also the ability to develop verification tools from UTP-based semantic models. Notably, we demonstrate a prototype
proof tactic for proving refinements between reactive contracts by re-expressing refinement conjectures as three
implications between the pre-, peri-, and postconditions that are expressed purely in relational calculus. These proof
obligations can then be discharged using relational and predicate calculus tactics in Isabelle/HOL, which greatly
improves the potential for automation. Moreover, our contract theory allows us to characterise and reason about reactive
programs purely symbolically, which allows us to verify programs with a very large or infinite state.

All the theorems proved here are proved in Isabelle/UTP; these proofs can be found in our Isabelle/UTP
repository\footnote{Isabelle/UTP Repository: \url{https://github.com/isabelle-utp/utp-main}}. Additionally, most
theorems and definitions in the paper are accompanied by a small Isabelle icon (\isalogo). In the electronic version,
each icon is hyperlinked to the corresponding mechanised artefact in the repository. This, we hope, will convince the
reader of the level of rigour employed in this work.

\vspace{.5ex}

\noindent In summary, the novel contributions of this paper are as follows:
\begin{enumerate}
  \item a novel UTP theory of stateful reactive contracts based on a generalised semantic trace model;
  \item a contract notation with assumptions, and guarantees of intermediate and final states;
  \item theorems for calculating composite contracts, including parallel composition;
  \item an automated proof method for proving contractual refinement conjectures.
\end{enumerate}

\noindent The remainder of the paper is organised as follows. In Section~\ref{sec:prelim}, we provide the necessary
context for our work, including details of UTP, reactive processes, and reactive designs. In Section~\ref{sec:rrel}, we
describe our theories of reactive relations and conditions, which form an important building block of our contracts. In
Section~\ref{sec:contracts}, we introduce reactive design contracts; including their notation, algebraic laws, and a
number of small illustrative examples. In Section~\ref{sec:mondex}, we use the laws described in
Section~\ref{sec:contracts} to demonstrate the use of contracts in reasoning about a small case study, which has also
been mechanised in Isabelle/UTP. In Section~\ref{sec:grd}, we provide a detailed overview of our UTP theory, including
its healthiness conditions, signature, and justification for the laws presented in Section~\ref{sec:contracts}. In
particular we show how recursion and parallel composition are handled. In Section~\ref{sec:mech}, we give an overview of
our mechanisation of reactive designs in Isabelle, highlight idiosyncrasies needed for the encoding, and describe our
reactive-design proof tactics. In Section~\ref{sec:related}, we give a survey of related work on design-by-contract and
explain the distinguishing features of our work. Finally, in Section~\ref{sec:conclusion} we conclude.

\section{Preliminaries}
\label{sec:prelim}
In this section we review preliminaries of UTP and its core theories, and introduce a number of foundational theorems,
including novel results regarding the class of continuous healthiness conditions (Theorem~\ref{thm:contthm}). All
theorems in this and following sections have been mechanically proved in Isabelle/UTP~\cite{Foster16a}; for details
please see Section~\ref{sec:mech} and the accompanying Isabelle proofs. This, we believe, adds a substantially more
rigorous basis for UTP and the presented results than the previous works.

\subsection{UTP}
\label{sec:utp}

UTP~\cite{Hoare&98} seeks to identify the fundamental computational paradigms that exist as foundations of programming
language semantics and formalises them using UTP theories. UTP theories can describe what it means for a language to be
concurrent~\cite{Hoare&98,Cavalcanti&06}, real-time~\cite{Sherif2010}, or object-oriented~\cite{SCS06}. The UTP thus
promotes reuse of theoretical building blocks that underlie programming languages. 

UTP is based on an alphabetised relational calculus~\cite{Hoare&98} with operators of higher-order predicate calculus
and relation algebra. An alphabetised predicate $P$ consists of a set of typed variables $\alpha(P)$, and a predicate
which may refer to only those variables in $\alpha(P)$. Alphabetised relations are alphabetised predicates whose
alphabet consists of pairs of variables that denote initial values ($x$) and later values ($x'$). The relations are
ordered by refinement, $P \refinedby Q$, which is denoted below:

\begin{definition} $P \refinedby Q ~ \defs ~ [Q \implies P]$ \label{def:refinement} where
  $\alpha(P) = \alpha(Q)$ \isalink{https://github.com/isabelle-utp/utp-main/blob/07cb0c256a90bc347289b5f5d202781b536fc640/utp/utp_pred.thy\#L141}
\end{definition}

\noindent Here, $[P]$ denotes the universal closure of $P$; if $\alpha(P) = \{x_1 \cdots x_n\}$, then
$[P] \defs (\forall x_1 \cdots x_n @ P)$. Refinement states that the predicate $Q$ implies $P$ over all variables, and
thus $P$ sets an upper bound on the possible observations exhibited by $Q$. Consequently, $\refinedby$ is a partial
order on the set of alphabetised relations.

The domain of alphabetised relations forms a number of important algebraic structures including (1) a complete
lattice~\cite{Hoare&98}, where the order is refinement ($\refinedby$), $\false$ is top, and $\true$ is bottom; (2) a
relation algebra~\cite{Tarski41,Foster14}; (3) a cylindric algebra~\cite{Tarski71,Foster16a}; and (4) a
quantale~\cite{Foster16a}, which induces (5) a Kleene algebra~\cite{Armstrong2015}. Together these provide a rich set of
base properties supporting program verification~\cite{Armstrong2015,Foster16a}. 

UTP follows the ``programs-as-predicates'' approach~\cite{Hehner93}, where a program is modelled as a relation in the
alphabetised predicate calculus. For example, an assignment $x := x + 1$, which increments variable $x$ whilst leaving
all other variables unchanged, can be denoted using the predicate $x' = x + 1 \land y' = y$, where $y$ denotes the
collection of variables other than $x$. Programming operators, such as sequential composition ($P \relsemi Q$),
if-then-else conditional ($\conditional{P}{b}{Q}$), non-deterministic choice ($P \sqcap Q$), and recursion ($\mu F$),
are denoted as predicates~\cite{Hoare&98,Cavalcanti04}, as shown below.

\begin{definition}[UTP Programming Operators] \label{def:utp-prog} \isalink{https://github.com/isabelle-utp/utp-main/blob/07cb0c256a90bc347289b5f5d202781b536fc640/utp/utp_rel_laws.thy}
  \begin{align*}
    P \relsemi Q &\defs \exists v_0 @ P[v_0/v'] \land Q [v_0/v] & \alpha(P) = \alpha(Q) = \{v, v'\} \\
    x := v &\defs x' = v \land y' = y \land \cdots \land z' = z \\
    \II    &\defs x := x \\
    \bigsqcap_{i \in I} \, P(i)  &\defs \bigvee i \in I @ P(i) \\
    \conditional{P}{b}{Q} &\defs (b \land P) \lor (\neg b \land Q) & \alpha(P) = \{v, v'\}, \alpha(b) = \{v\} \\
    \mu F &\defs \bigsqcap \{ X | F(X) \refinedby X \}
  \end{align*}
\end{definition}

Moreover, the empty relation $\false$ is usually used to denote either a program that fails to terminate, or else is
``miraculous'', having no possible behaviours. This embedding of programs into logic naturally provides great
opportunities for verification by automated proof~\cite{Foster16a}. Moreover, the standard laws of
programming~\cite{Hoare87} are all theorems with respect to the operator denotations. We also emphasise that these
operators are all alphabet polymorphic, and can therefore be used to compose predicates of varying types, so long as the
side conditions are satisfied. A selection of theorems of Definition~\ref{def:utp-prog} is shown below.

\begin{theorem}[Relational Calculus Laws] \label{thm:rellaws} \isalink{https://github.com/isabelle-utp/utp-main/blob/90ec1d65d63e91a69fbfeeafe69bd7d67f753a47/utp/utp_rel_laws.thy}

\noindent\begin{minipage}{.5\linewidth}
\begin{align*}
  P \relsemi \II = \II \relsemi P &= P \\
  P \relsemi \false = \false \relsemi P &= \false \\
  (P \relsemi Q) \relsemi R &= P \relsemi (Q \relsemi R) \\
  (\conditional{P}{b}{Q}) \relsemi R &= \conditional{P \relsemi R}{b}{Q \relsemi R}
\end{align*}
\end{minipage}
\begin{minipage}{.5\linewidth}
\begin{align*}
  \left(\bigsqcap_{i \in I} \, P(i)\right) \relsemi Q &= \bigsqcap_{i \in I} \, P(i) \relsemi Q \\[1ex]
  P \relsemi \left(\bigsqcap_{i \in I} \, Q(i)\right) &= \bigsqcap_{i \in I} \, P \relsemi Q(i)
\end{align*}
\end{minipage}
\end{theorem}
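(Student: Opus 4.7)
The plan is to prove each equation by unfolding the definitions from Definition~\ref{def:utp-prog} and then appealing to ordinary predicate-calculus reasoning, treating the alphabetised relational calculus as syntactic sugar for higher-order predicate logic. In every case the key tools will be: $\alpha$-renaming of bound observation variables, the one-point rule, and the standard distribution laws of $\exists$, $\land$, and $\lor$.

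For the identity laws $P \relsemi \II = \II \relsemi P = P$, I would unfold $\II$ to $v' = v$ and then expand the definition of $\relsemi$ to obtain $\exists v_0 @ P[v_0/v'] \land (v' = v_0)$, to which the one-point rule applies, yielding $P$. The mirror case is symmetric. For the zero law $P \relsemi \false = \false \relsemi P = \false$, unfolding gives $\exists v_0 @ P[v_0/v'] \land \false$, and conjunction with $\false$ collapses the body to $\false$, so the existential is $\false$ too. For associativity, I would unfold $(P \relsemi Q) \relsemi R$ twice, renaming the two intermediate variables freshly to $v_0$ and $v_1$; after pushing the innermost existential outward (using the fact that $R[v_1/v]$ does not mention $v_0$ and $P[v_0/v']$ does not mention $v_1$), the resulting predicate is symmetric in the grouping, and unfolding $P \relsemi (Q \relsemi R)$ in the same way yields the same conjunction of three substituted predicates guarded by $\exists v_0, v_1$.

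For the conditional distribution law, the crucial observation is the side condition $\alpha(b) = \{v\}$, so $b$ is unaffected by the substitution $[v_0/v']$ that arises when $(\conditional{P}{b}{Q}) \relsemi R$ is unfolded. Unfolding the conditional as $(b \land P) \lor (\neg b \land Q)$, distributing the sequential composition over $\lor$ (which is an instance of the non-deterministic choice law for $I = \{0,1\}$), and then refolding yields $\conditional{P \relsemi R}{b}{Q \relsemi R}$. The two non-deterministic choice laws reduce, via the definition $\bigsqcap_{i \in I} P(i) \defs \bigvee i \in I @ P(i)$, to the distributivity of $\exists$ and $\land$ over arbitrary disjunctions, which is a basic fact of predicate calculus; for the right-distribution law one additionally needs that $P[v_0/v']$ is independent of the bound index $i$, so it commutes out of the disjunction inside the scope of $\exists v_0$.

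The only delicate step is associativity, where one must be disciplined about the freshness of the two intermediate observation variables and the independence of each substituted predicate from the intermediate variable it does not depend on; otherwise the naive one-point/substitution rewrites can collide. All other laws are straightforward consequences of unfolding and predicate calculus, and they are all routine in Isabelle/UTP once $\relsemi$ has been characterised as a relational join over a shared intermediate observation.
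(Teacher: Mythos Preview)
Your proposal is correct and constitutes the standard predicate-calculus derivation of these relational laws. The paper itself does not give an explicit proof of this theorem; it simply states the laws, explains their content in the following paragraph, and defers to the Isabelle/UTP mechanisation (via the linked theory file and tactics such as \textsf{rel-auto}). Your unfolding-and-one-point approach is precisely what such relational tactics automate, so there is no genuine methodological difference to report: you have written out by hand what the paper discharges mechanically.
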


The relational skip ($\II$) is a left and right identity, and $\false$ is a left and right annihilator. Sequential
composition is associative, it distributes through conditional from the right (but not the left), and it distributes
through internal choice from the left and right.

UTP theories are well-defined subsets of the alphabetised relations that satisfy certain properties desirable for a
particular computational paradigm. For example, to model real-time programs, we need a way of recording how much time
has passed since execution began, and ensuring that the passage of time is well-behaved by, for instance, forbidding
reverse time travel. In UTP, this can be achieved by extending our alphabet with special observational variables, such
as $clock, clock' : \nat$, which can be used to model time, and imposing invariants, such as $clock \le clock'$. UTP
theories are therefore specified in terms of three parts:
\begin{enumerate}
\item an alphabet of typed \emph{observational variables}, which are used to encode observable semantic quantities
  important for the theory;
\item a \emph{healthiness condition} ($\healthy{HC}$), that specifies the invariants as a function from predicates to
  predicates with the above alphabet;
\item a \emph{signature}: that is, a set of constructors and operators that are healthy elements of the theory.
\end{enumerate}
A theory's alphabet is often open to extension, such that additional observational variables can be added, or
the types of variables specialised, assuming a notion of subtyping exists. This also means that UTP theories can readily
be combined by merging the alphabets and composing the healthiness conditions.

If a relation is a fixed-point of the healthiness conditions, $P = \healthy{HC}(P)$, then it is said to be
$\healthy{HC}$-healthy. For our theory of real-time, we can specify a healthiness condition
$\healthy{HT}(P) \defs (P \land clock \le clock')$, which conjoins a relation with the invariant. Then we have it that
$$\textsf{delay}(n : \nat) \defs clock' = clock + n$$ is a $\healthy{HT}$-healthy relation, since it always satisfies $clock \le clock'$, and therefore is
is a fixed-point of $\healthy{HT}$.

A UTP theory's domain is the set of healthy predicates: $\theoryset{\healthy{HC}} \defs \{P | \healthy{HC}(P) = P\}$.
For this reason, it is necessary that a healthiness function is idempotent
($\healthy{HC} \circ \healthy{HC} = \healthy{HC}$), and usually also monotonic. Monotonicity ensures that the UTP theory
forms a complete lattice, substantiated by the Knaster-Tarski theorem~\cite{Tarski55}. This gives rise to a theory top
($\thtop{T}$), a bottom ($\thbot{T}$), an infimum ($\thinf{T}\,A$), for $A \subseteq \theoryset{\healthy{HC}}$, a
supremum ($\thsup{T}\,A$), a least fixed-point operator $\thlfp{T}\,F$, for
$F : \theoryset{\healthy{HC}} \to \theoryset{\healthy{HC}}$ and a greatest fixed-point operator $\thgfp{T}\,F$. The top and
bottom can be obtained by applying the healthiness condition to $\false$ and $\true$, respectively. However, the induced
lattice does not in general share the same operators as the alphabetised predicate lattice. Thus, for our purposes, we
are interested in the stronger property of \textbf{continuity}, which gives rise to additional properties.
\begin{definition}[Continuous Healthiness Conditions] \isalink{https://github.com/isabelle-utp/utp-main/blob/07cb0c256a90bc347289b5f5d202781b536fc640/utp/utp_healthy.thy\#L112}

  \noindent\healthy{HC} is said to be continuous if it satisfies $\healthy{HC}(\bigsqcap A) = \bigsqcap \{ \healthy{HC}(P) | P \in A
  \}$ for $A \neq \emptyset$.
\end{definition}
\noindent This notion of continuity, also know as universal disjunctivity, is stronger than the related notion of
Scott-continuity~\cite{Scott71}, which requires that $A$ also be directed Every continuous healthiness condition is also
monotonic and thus induces a complete lattice. Continuity also means that the theory's infimum ($\thinf{T}$) is the same
operator as the alphabetised predicate infimum ($\bigsqcap$) for non-empty sets. So, a number of additional laws can be
imported into the theory, some of which are illustrated below.
\begin{theorem}[Continuous Theory Laws] \label{thm:contthm} \isalink{https://github.com/isabelle-utp/utp-main/blob/07cb0c256a90bc347289b5f5d202781b536fc640/utp/utp_theory.thy}
\begin{align}
  \thbot{T} &~~=~~ \healthy{HC}(\true) \\
  \thtop{T} &~~=~~ \healthy{HC}(\false) \label{thm:utp-lattice-top} \\
  \thbot{T} \intchoice P &~~=~~ \thbot{T} & \text{if $P$ is $\healthy{HC}$-healthy} \\
  \thtop{T} \intchoice P &~~=~~ P &\text{if $P$ is $\healthy{HC}$-healthy} \\
  \thinf{T} \, A &~~=~~ \bigsqcap \, A &\text{if $A \neq \emptyset$ and $A \subseteq \theoryset{\healthy{HC}}$} \\
  \thlfp{T} X @ F(X) & ~~=~~ \mu X @ F(\healthy{HC}(X)) & \text{if}~~ F : \theoryset{\healthy{HC}} \to \theoryset{\healthy{HC}} \label{thm:contthm-lfp}
\end{align}
\end{theorem}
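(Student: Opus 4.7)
My plan is to work equation by equation, drawing on three consequences of continuity: binary distributivity $\healthy{HC}(P \intchoice Q) = \healthy{HC}(P) \intchoice \healthy{HC}(Q)$ (a specialisation of the definition to a two-element set), monotonicity of $\healthy{HC}$ (stated in the paragraph preceding the theorem), and idempotence (built into being a healthiness condition). Knaster--Tarski, already invoked for the induced lattice structure on $\theoryset{\healthy{HC}}$, does the rest.

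For the first pair, $\healthy{HC}(\true)$ and $\healthy{HC}(\false)$ are healthy by idempotence. Since $\true \refinedby P$ and $P \refinedby \false$ hold for every predicate, monotonicity combined with healthiness of $P$ yields $\healthy{HC}(\true) \refinedby P$ and $P \refinedby \healthy{HC}(\false)$ for any healthy $P$, identifying these as the theory bottom and top respectively.

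Equations for $\thbot{T} \intchoice P$ and $\thtop{T} \intchoice P$ follow by applying continuity to the two-element sets $\{\true, P\}$ and $\{\false, P\}$ with $P$ healthy, together with the trivialities $\true \intchoice P = \true$ and $\false \intchoice P = P$: for instance, $\thbot{T} \intchoice P = \healthy{HC}(\true) \intchoice \healthy{HC}(P) = \healthy{HC}(\true \intchoice P) = \healthy{HC}(\true) = \thbot{T}$, and dually for the top. The infimum equation is essentially the definition of continuity: for non-empty $A \subseteq \theoryset{\healthy{HC}}$, $\healthy{HC}(\bigsqcap A) = \bigsqcap \{\healthy{HC}(P) \mid P \in A\} = \bigsqcap A$, so $\bigsqcap A$ is healthy and hence its own theory infimum.

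The hard part is the least-fixed-point law. I would set $G(X) \defs F(\healthy{HC}(X))$; $G$ is monotonic on the ambient lattice (assuming $F$ is monotonic on $\theoryset{\healthy{HC}}$, which is anyway needed for $\thlfp{T}\,F$ to exist), so $\mu G$ is well-defined by Knaster--Tarski. Because $F$ maps into the healthy set, $\mu G = F(\healthy{HC}(\mu G))$ is itself healthy; thus $\healthy{HC}(\mu G) = \mu G$ and so $\mu G = F(\mu G)$, making it a fixed-point of $F$ in the theory and hence $\thlfp{T}\,F \refinedby \mu G$. Conversely, $\thlfp{T}\,F$ is a healthy fixed-point of $F$, so also of $G$, giving $\mu G \refinedby \thlfp{T}\,F$. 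The principal obstacle is bookkeeping: one must keep the ambient and induced lattices disentangled and propagate the healthiness constraint correctly through both least-fixed-point definitions, lest the argument silently conflate a fixed-point of $F$ on $\theoryset{\healthy{HC}}$ with a fixed-point of $G$ on the ambient lattice.
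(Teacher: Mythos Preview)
Your argument is correct throughout. The paper itself gives no detailed proof (deferring to the Isabelle mechanisation), but it does remark that for the first four identities monotonicity of $\healthy{HC}$ alone suffices. Your derivations of (3) and (4) via binary continuity are valid but heavier than necessary: once (1) and (2) are established, you have $\thbot{T} \refinedby P \refinedby \thtop{T}$ for every healthy $P$, and then (3) and (4) are immediate from comparability in the ambient lattice (the infimum of two comparable elements is the lesser). Your treatment of (5) is exactly what is needed, and your fixed-point argument for (6) is clean; note that it too uses only monotonicity and idempotence of $\healthy{HC}$, not continuity, so in fact only (5) genuinely requires the continuity hypothesis---a point the paper's wording leaves slightly ambiguous.
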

For the first four these identities, monotonicity of $\healthy{HC}$ is actually a sufficient assumption. Of particular
interest is \eqref{thm:contthm-lfp} that shows how a theory's weakest fixed-point operator ($\thlfp{T}$) can be
rewritten to the alphabetised-predicate weakest fixed-point ($\mu$). The requirement is that the continuous healthiness
condition $\healthy{HC}$ can be applied after each unfolding of the fixed-point to ensure that the function $F$ is only
ever presented with a healthy predicate. It should be noted that this requirement for continuous healthiness functions
does not similarly restrict fixed-point functions. Specifically, our laws apply for any monotonic function $F$ and thus
at this level there is no restriction in modelling of unbounded non-determinism. Indeed, we have encountered very few
healthiness conditions that are monotonic but not continuous.

Healthiness conditions in UTP are often built by composition of several component functions. That being the case,
continuity and idempotence properties of the overall healthiness condition can also be obtained by composition.

Though UTP was originally a purely theoretical framework for denotational semantics~\cite{Hoare&98}, more recently it
has been adapted into an implementation within the Isabelle proof assistant~\cite{Isabelle}, called
Isabelle/UTP~\cite{Foster14,Foster16a}. This proof tool can be used develop UTP theories, by defining healthiness
conditions and proving algebraic laws, in order to support mechanically verified denotational semantics. Moreover, such
a mechanised denotational semantics can be used to construction verification tools for different languages by harnessing
Isabelle's powerful automated proof facilities~\cite{Blanchette2011}. In this article, we use Isabelle/UTP to both
define and verify our UTP theory of reactive contracts, and to produce an associated verification technique.

\subsection{Designs}
\label{sec:designs}

The UTP theory of designs~\cite{Hoare&98,Cavalcanti04} has two observational variables, $ok, ok' : \Bool$, flags that
denote whether a program was started and whether it terminated, respectively. The design, $\design{P_1}{P_2}$, states
that if a program is started and the state satisfies precondition $P_1$, then it will terminate and satisfy
postcondition $P_2$. This is encoded in the following predicative definition.
\begin{definition} \label{def:design} $\design{P_1}{P_2}   ~~\defs~~ (ok \land P_1) \implies (ok' \land P_2)$ \isalink{https://github.com/isabelle-utp/utp-main/blob/07cb0c256a90bc347289b5f5d202781b536fc640/theories/designs/utp_des_core.thy\#L111}
\end{definition}
\noindent Here, $P_1$ and $P_2$ are relations on variables excluding $ok$ and $ok'$. Effectively, this encoding allows a
pair of predicates to be encoded as a single predicate. An simple example is the design
$D \defs \design{\true}{x' = 1}$, where $x : \nat$ is a program variable. If $D$ is given permission to execute
($ok = true$), then the program terminates ($ok' = true$) with $x = 1$. If the program is not given permission to
execute ($ok = false$), then $x$ can taken any value.

Designs have a natural notion of refinement which requires that the precondition is weakened, and the postcondition
strengthened within the window of the precondition, as shown by the theorem below.

\begin{theorem} $\design{P_1}{P_2} \refinedby \design{Q_1}{Q_2} ~ \iff ~ (P_1 \implies Q_1) \land (Q_2 \land P_1 \implies P_2)$ \isalink{https://github.com/isabelle-utp/utp-main/blob/07cb0c256a90bc347289b5f5d202781b536fc640/theories/designs/utp_des_core.thy\#L584}
\end{theorem}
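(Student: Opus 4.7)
\medskip

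\noindent\textbf{Proof plan.} The plan is to unfold both definitions on the left-hand side and then exploit the fact that the Boolean observational variables $ok, ok'$ do not occur in $P_1, P_2, Q_1, Q_2$, so that a pointwise case split on their four possible valuations reduces the refinement to a purely propositional obligation.

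First, I would apply Definition~\ref{def:refinement} to rewrite
$\design{P_1}{P_2} \refinedby \design{Q_1}{Q_2}$ as $[\design{Q_1}{Q_2} \implies \design{P_1}{P_2}]$, and then unfold each design via Definition~\ref{def:design} to obtain the universally-closed implication
\[
  [((ok \land Q_1) \implies (ok' \land Q_2)) \implies ((ok \land P_1) \implies (ok' \land P_2))].
\]
Since $ok$ and $ok'$ are Booleans and are disjoint from the alphabets of $P_i, Q_i$, the universal quantifier over them can be decomposed into a conjunction over the four assignments.

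Next I would dispatch the cases. When $ok$ is false, both antecedents $ok \land P_1$ and $ok \land Q_1$ are false, so the implication holds trivially and contributes no constraint. When $ok$ is true but $ok'$ is false, the body collapses to $\neg Q_1 \implies \neg P_1$, i.e.\ the closure of $P_1 \implies Q_1$. When both $ok$ and $ok'$ are true, it collapses to $(Q_1 \implies Q_2) \implies (P_1 \implies P_2)$. So refinement is equivalent to
$[P_1 \implies Q_1] \land [(Q_1 \implies Q_2) \implies (P_1 \implies P_2)]$.

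The main (mildly subtle) step is then to verify that, in the presence of $P_1 \implies Q_1$, the second conjunct is equivalent to $Q_2 \land P_1 \implies P_2$. For the forward direction, assume $Q_2 \land P_1$; since $Q_2$ trivially makes $Q_1 \implies Q_2$ valid, the hypothesis yields $P_1 \implies P_2$, and hence $P_2$. For the converse, assume $Q_1 \implies Q_2$ and $P_1$; using $P_1 \implies Q_1$ we obtain $Q_1$, hence $Q_2$, and thus $P_2$ from the other hypothesis. This equivalence is pure propositional reasoning and should discharge automatically in Isabelle. The only place requiring care is arranging the case split over $ok, ok'$ cleanly, which in the mechanisation is typically handled by a \textsf{rel\_auto} or \textsf{pred\_auto} tactic after expanding the two definitions.
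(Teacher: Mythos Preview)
Your proposal is correct and follows the standard argument: unfold the refinement and design definitions, case-split on the Boolean observational variables $ok, ok'$, and discharge the resulting propositional equivalence. The paper itself does not give an explicit proof of this theorem but defers to the Isabelle/UTP mechanisation, where it is handled automatically by the relational/predicate tactics you mention; your hand proof is precisely the reasoning those tactics would carry out.
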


Design relations are closed under sequential composition, disjunction, and conjunction~\cite{Hoare&98,Cavalcanti04}, all
of which retain the denotations given in Definition~\ref{def:utp-prog} with the alphabet containing $ok$ and $ok'$. The
main design healthiness conditions are \healthy{H} and \healthy{N}, which are given below~\cite{Hoare&98,Cavalcanti04}.
\begin{definition}[Design Healthiness Conditions] $ $ \isalink{https://github.com/isabelle-utp/utp-main/blob/07cb0c256a90bc347289b5f5d202781b536fc640/theories/designs/utp_des_healths.thy}

  \begin{center}
  \begin{tabular}{ccc}
  {$\begin{aligned}
    \healthy{H1}(P) ~~\defs~~& (ok \implies P) \\
    \healthy{H2}(P) ~~\defs~~& P \relsemi \ckey{J} \\
    \healthy{H3}(P) ~~\defs~~& P \relsemi \IID \\
  \end{aligned}$} & ~~ &
  {$\begin{aligned}
    \ckey{J}        ~~\defs~~& (ok \implies ok') \land v' = v \\
    \IID            ~~\defs~~& \design{\true}{\II} \\
    \healthy{H}     ~~\defs~~& \healthy{H1} \circ \healthy{H2} \\
    \healthy{N}     ~~\defs~~& \healthy{H1} \circ \healthy{H3}
  \end{aligned}$}
  \end{tabular}
\end{center}
\end{definition}
$\healthy{H1}$ states that until a design has been given permission to execute, as recorded via $ok$, no observations
are possible. $\healthy{H2}$ states that no design can require non-termination. A more intuitive characterisation of the
$\healthy{H2}$ fixed-points is $P[false/ok'] \implies P[true/ok']$: every non-terminating behaviour of $P$ for which
$ok' = false$ has an equivalent terminating behaviour for which $ok' = true$. The composition, $\healthy{H}$, precisely
characterises the set of design relations constructed using $\design{P_1}{P_2}$~\cite{Cavalcanti04}.

$\healthy{H3}$ designs additionally require that $P$ is a condition: it does not refer to dashed variables. This
subclass of designs is useful for ``normal'' specifications, where the precondition does not refer to the final
state. $\ckey{H3}$ designs, with a few notable exceptions, are the most common form of design, and are thus sometimes
known as \emph{normal designs}~\cite{Guttman2010}, as indicated by healthiness condition $\healthy{N}$. Since every
$\healthy{H3}$ predicate is also $\healthy{H2}$ healthy, in defining $\healthy{N}$, we do not need to include
$\healthy{H2}$ in the composition.

$\healthy{H}$ and $\healthy{N}$ are both idempotent and continuous, and thus the theories they define are both complete
lattices. The bottom element $\botD$ is abortive, arising, for instance due to a violated precondition, and the top
$\topD$ is miraculous. The infimum $P \sqcap Q$ is a non-deterministic choice between two designs, and refinement
reduces non-determinism: $P \sqcap Q \refinedby P$.

\subsection{Reactive Processes}

The theory of reactive processes~\cite{Hoare&98,Cavalcanti&06} unifies the semantics of different reactive
languages. The two main goals of reactive processes are to (1) embed traces into the relational calculus, which is
achieved through healthiness conditions \healthy{R1} and \healthy{R2}, and (2) introduce intermediate observations,
which is achieved through healthiness condition \healthy{R3}. In addition to $ok$ and $ok'$, the theory has three pairs
of observational variables:
\begin{enumerate}
  \item $wait, wait' : \Bool$ that determine whether a process (or its predecessor) is waiting for interaction with its
    environment, that is, it is quiescent, or else has correctly terminated;
  \item $tr, tr' : \seq\,\textit{Event}$ that describes the trace before and after the process' execution; and 
  \item $\refu, \refu' : \power \textit{Event}$ that describe the events being refused during a quiescent state, as
    required by the failures-divergences model of CSP~\cite{Hoare85,Roscoe2005}. If all events are being refused, the
    process is in a deadlock situation.
  \end{enumerate}
Since reactive programs often run indefinitely, the theory of reactive processes distinguishes good and bad
non-termination, the being characterised by divergence. This is achieved by reinterpreting $ok$ to indicate
divergence. Specifically, if $ok'$ is false then a reactive process has diverged, meaning it is exhibiting
unpredictable or erroneous behaviour. If $ok'$ is true, and $wait'$ is false, the process has not terminated, but
neither has it diverged.

The theory of reactive processes is used to provide a UTP denotational semantics for both
CSP~\cite{Hoare87,Cavalcanti&06}, based on relational encoding of the failures-divergences model~\cite{Roscoe2005}, and
also the stateful process language \Circus~\cite{Woodcock2001-Circus,Oliveira2005-PHD,Oliveira&09}. \Circus provides all
the usual operators of CSP for expressing networks of communicating processes, together with state-based constructs such
as variable assignment. \Circus processes encapsulate a number of state variables, operations that act on those
variables, and actions that encode the reactive behaviour of the process using channels.

In previous work~\cite{Foster17a}, we have generalised the standard UTP theory of reactive
processes~\cite{Hoare&98}. Our generalised theory~\cite{Foster17a} removes the $\refu$ and $\refu'$ variables, which
allows us to characterise behavioural semantic models other than failures-divergences. Moreover, we add
$\state, \state' : \Sigma$ to explicitly model state as suggested by \cite{BGW09}, where $\Sigma$ is a state space
type. In our previous work~\cite{Foster17a}, we have shown how the UTP theory of reactive processes can be generalised
by characterising the trace model with an abstract algebra, called a ``trace algebra''.  We characterise traces with an
abstract set $\tset$ equipped with two operators: trace concatenation $\tcat : \tset \to \tset \to \tset$, and the empty
trace $\tempty : \tset$, which obey the following axioms~\cite{Foster17a}.
\begin{definition}\label{def:tralg} A trace algebra $(\tset, \tcat, \tempty)$ is a cancellative monoid satisfying the
  following axioms: \isalink{https://github.com/isabelle-utp/utp-main/blob/90ec1d65d63e91a69fbfeeafe69bd7d67f753a47/theories/reactive/Trace_Algebra.thy}
\begin{align*}
  x \tcat (y \tcat z) &= (x \tcat y) \tcat z \tag{TA1} \label{law:tassoc} \\
  \tempty \tcat x = x \tcat \tempty &= x \tag{TA2} \label{law:tunit} \\
  x \tcat y = x \tcat z ~~\implies~~  y &= z \tag{TA3} \label{law:tcancl1} \\
  x \tcat z = y \tcat z ~~\implies~~  x &= y \tag{TA4} \label{law:tcancl2} \\
  x \tcat y = \tempty  ~~\implies~~  x &= \tempty \tag{TA5} \label{law:tnai}
\end{align*}
\end{definition}
\noindent An example model is formed by finite sequences, $\langle a, b, \cdots, c \rangle$, that is
$(\seq A, \cat, \langle\rangle)$ forms a trace algebra, where $\cat$ is sequence concatenation. Using the two trace
algebra operators, we can also define a trace prefix operator ($x \le y$), and trace difference ($x - y$), which removes
a prefix $y$ from $x$. From these algebraic foundations, we have reconstructed the complete theory of reactive
processes, including its healthiness conditions and associated laws, in particular those for sequential and parallel
composition~\cite{Foster17a}. We thus generalise the type of $tr$ and $tr'$ to be an instance of a trace algebra
$\tset$, and recreate the three reactive healthiness conditions~\cite{Hoare&98,Cavalcanti&06}.
\begin{definition}[Stateful Reactive Healthiness Conditions] \label{def:reahealths} \isalink{https://github.com/isabelle-utp/utp-main/blob/07cb0c256a90bc347289b5f5d202781b536fc640/theories/reactive/utp_rea_healths.thy}
\begin{align*}
  \healthy{R1}(P)   & ~~\defs~~ P \land tr \le tr' \\
  \healthy{R2}_c(P)   & ~~\defs~~ \conditional{P[\tempty, \trace /tr, tr']}{tr \le tr'}{P} \\
  \healthy{R3}_h(P) & ~~\defs~~ \conditional{\IIsrd}{wait}{P} \\
  \IIsrd            & ~~\defs~~ \conditional{(\conditional{(\exists st @ \II)}{wait}{\II})}{ok}{\ckey{R1}(\true)} \\
  \trace            & ~~\defs~~ (tr' - tr) \\
  \healthy{R}_s     & ~~\defs~~  \healthy{R1} \circ \healthy{R2}_c \circ \healthy{R3}_h
\end{align*}
\end{definition}
$\healthy{R1}$ states that $tr$ is monotonically increasing; processes are not permitted to undo past
events. $\healthy{R2}_c$ is a version of \healthy{R2}~\cite{Hoare&98}, created to overcome an issue with definedness of
sequence difference~\cite{Foster17a}, but semantically equivalent in the context of \healthy{R1}. It states that a
process must be history independent: the only part of the trace it may constrain is $tr' \tminus tr$, that is, the
portion since the previous observation $tr$. Specifically, if the history is deleted, by substituting $\tempty$ for
$tr$, and $tr' - tr$ for $tr'$, then the behaviour of the process is unchanged. Our formulation of $\healthy{R2}_c$
deletes the history only when $tr \le tr'$, which ensures that $\healthy{R2}_c$ does not depend on $\healthy{R1}$, and
thus commutes with it. Intuitively, an $\healthy{R1}$-$\healthy{R2}_c$ healthy predicate syntactically does not
constrain the trace history ($tr$), but only the trace contribution expression ($\trace$), as the following theorem
illustrates.
\begin{theorem}[$\healthy{R1}$-$\healthy{R2}_c$ trace contribution] \label{thm:trcontr} \isalink{https://github.com/isabelle-utp/utp-main/blob/07cb0c256a90bc347289b5f5d202781b536fc640/theories/reactive/utp_rea_healths.thy\#L476}
$$\healthy{R1}(\healthy{R2}_c(P)) = (\exists t @ P[\tempty,t/tr,tr'] \land tr' = tr \tcat t)$$
\end{theorem}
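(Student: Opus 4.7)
The plan is to unfold the healthiness conditions and then collapse the conditional using the guard. By definition, $\healthy{R1}(\healthy{R2}_c(P)) = \healthy{R2}_c(P) \land tr \le tr'$, which expands to $(\conditional{P[\tempty,\trace/tr,tr']}{tr \le tr'}{P}) \land tr \le tr'$. Using the basic conditional law $(\conditional{A}{b}{B}) \land b \;=\; A \land b$, this immediately reduces to $P[\tempty, tr'-tr/tr, tr'] \land tr \le tr'$. So the real content is showing this equals $(\exists t @ P[\tempty,t/tr,tr'] \land tr' = tr \tcat t)$.

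I would then prove the two implications separately, appealing to the trace algebra. For the forward direction, I would witness $t := tr' - tr$. Since $tr \le tr'$ is by definition equivalent to $\exists u @ tr' = tr \tcat u$, and trace difference extracts precisely this suffix (i.e.\ $tr \le tr' \implies tr' = tr \tcat (tr' - tr)$, which is a consequence of the cancellative monoid axioms \eqref{law:tunit} and \eqref{law:tcancl1}), the witness satisfies $tr' = tr \tcat t$, and the substitution $P[\tempty, tr'-tr/tr, tr']$ is syntactically $P[\tempty, t/tr, tr']$.

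For the backward direction, assume some $t$ with $P[\tempty, t/tr, tr']$ and $tr' = tr \tcat t$. Then $tr \le tr'$ is immediate from the definition of prefix, and left-cancellation \eqref{law:tcancl1} (together with the characterisation of $-$ as the inverse to $\tcat$ on its prefix) forces $tr' - tr = t$. Substituting back recovers the required equality of substitutions. Combining both directions yields the stated equation.

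The only delicate step is the interaction between the trace subtraction $tr' - tr$ and the existentially quantified suffix $t$; this is exactly where cancellativity of the trace algebra is needed, and without \eqref{law:tcancl1} the uniqueness of $t$ would fail. I expect this to be the main obstacle, but since we are working inside the theory of trace algebras (Definition~\ref{def:tralg}) where trace difference is well-defined precisely by this cancellation property, the argument is routine once the conditional is disposed of. In Isabelle/UTP the proof should discharge by unfolding the healthiness conditions and applying a substitution lemma together with the trace-algebra simp set.
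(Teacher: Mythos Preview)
Your proof is correct and follows the natural approach: unfold the healthiness conditions, collapse the conditional under the $\healthy{R1}$ conjunct, and then establish the bi-implication using the trace-algebra facts that $tr \le tr'$ iff $tr' = tr \tcat t$ for some $t$, and that this $t$ is uniquely $tr' - tr$ by left cancellation. The paper does not give a written proof of this theorem, deferring instead to the Isabelle/UTP mechanisation; your argument is precisely the expected unfolding-and-simplification that such a mechanised proof would perform, and there is nothing to add.
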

Finally, we have $\healthy{R3}_h$, a version of \healthy{R3} from~\cite{BGW09} that introduces the concept of
intermediate observations, whilst ensuring that state variables are not included. $\healthy{R3}_h$ states that if a
process observes $wait$ to be true, then its predecessor has not yet terminated and thus it should behave like the
reactive identity, $\IIsrd$. For example, in a composition $P \relsemi Q$, if $P$ has not terminated then $Q$, if
$\healthy{R3}_h$ healthy, will behave as $\IIsrd$.

The reactive identity maintains the present value of all variables, other than the state $st$, when the predecessor is
in an intermediate state, or behaves like $\healthy{R1}(\true)$ if $ok$ is false. The latter scenario means that the
predecessor has diverged and thus we can guarantee nothing other than that the trace increases. Intuitively, an
$\healthy{R3}_h$ process conceals the state of any predecessor in an intermediate state. This allows that several
independent state valuations are concurrently possible, yet concealed from one another, until an observation is made
through an event interaction.

For comparison, we recall the definition of healthiness condition $\healthy{R3}$, which was previously used in the
theories for both CSP~\cite{Hoare&98,Cavalcanti&06} and \Circus~\cite{Oliveira&09}.
\begin{definition}[$\healthy{R3}$ Healthiness Condition] \isalink{https://github.com/isabelle-utp/utp-main/blob/07cb0c256a90bc347289b5f5d202781b536fc640/theories/reactive/utp_rea_healths.thy\#L658}
$$\healthy{R3}(P) ~\defs~ \conditional{\IIrea}{wait}{P} \qquad \qquad
  \IIrea ~\defs~ \conditional{\II}{ok}{\ckey{R1}(\true)}$$
\end{definition}
The only difference from $\healthy{R3}_h$ is that the identity $\IIrea$ is used in intermediate states. This operator
does not give special treatment to state variables: they are simply identified in intermediate states like other
observational variables. As discussed in detail in Section~\ref{sec:grd}, $\healthy{R3}_h$ allows a simpler treatment of
state variables, supports additional algebraic laws for assignment and state substitution, and solves the problem with
external choice for which it was originally designed~\cite{BGW09}, though at the cost of losing McEwan's interruption
operator~\cite{McEwan06}. Thus, the use of $\healthy{R3}_h$ instead of $\healthy{R3}$ is a design decision based on the
particular modelling facilities of interest.

We compose the three constituents to yield $\healthy{R}_s$, the overall healthiness condition of (stateful) reactive
processes, which is idempotent and continuous.
\begin{theorem}[Reactive Process Theory Properties] \label{thm:reaprop} \isalink{https://github.com/isabelle-utp/utp-main/blob/07cb0c256a90bc347289b5f5d202781b536fc640/theories/rea_designs/utp_rdes_healths.thy\#L226}
  \begin{itemize}
    \item $\healthy{R}_s$ is idempotent: $\Rs(\Rs(P)) = \Rs(P)$;
    \item $\healthy{R}_s$ is continuous: $\Rs(\bigsqcap A) = \bigsqcap P \in A @ \Rs(P)$.
    \end{itemize}
  \end{theorem}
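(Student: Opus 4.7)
The plan is to establish both properties compositionally by first proving that each of the three constituent functions $\healthy{R1}$, $\healthy{R2}_c$, and $\healthy{R3}_h$ is idempotent and continuous, and then verifying that they pairwise commute. Once commutativity is in hand, idempotence of $\healthy{R}_s = \healthy{R1} \circ \healthy{R2}_c \circ \healthy{R3}_h$ follows from the fact that the composition of pairwise-commuting idempotents is itself idempotent, and continuity follows from the general observation that composition preserves universal disjunctivity.

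For the idempotence lemmas I would proceed as follows. For $\healthy{R1}$, a routine unfolding gives $P \land tr \le tr' \land tr \le tr' = P \land tr \le tr'$. For $\healthy{R2}_c$, after one application the result already satisfies $(tr' - tr)[\tempty, tr'-tr/tr,tr'] = tr' - tr$, so a second substitution has no effect on the branch where $tr \le tr'$; in the other branch the predicate is unchanged. For $\healthy{R3}_h$, the conditional is governed by $wait$, which is unaffected by the first application, and $\IIsrd$ contains no free $wait$ occurrence outside of its defining conditional, so we obtain the identity directly by case split.

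For continuity of each constituent, the argument is uniform: conjunction with a fixed predicate, substitution of constants for variables, and conditional on a predicate whose alphabet is disjoint from the choice points all distribute over arbitrary non-empty infima, and $\IIsrd$ is a fixed healthy element, so infima are preserved pointwise. The commutation arguments require more care. That $\healthy{R1}$ commutes with $\healthy{R2}_c$ is noted in the text already, because $\healthy{R2}_c$ only rewrites when $tr \le tr'$. That $\healthy{R1}$ commutes with $\healthy{R3}_h$ uses distributivity of conjunction over the conditional together with the observation that $\IIsrd$ is already $\healthy{R1}$-healthy.

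The main obstacle will be verifying commutativity of $\healthy{R2}_c$ with $\healthy{R3}_h$. Here one must inspect how the substitution $[\tempty, tr' - tr / tr, tr']$ interacts with the definition of $\IIsrd$, which constrains $tr$ and $tr'$ both through its $ok$-branch $\healthy{R1}(\true)$ and through the skip $\II$ in the non-divergent branch (where $tr' = tr$). The key fact to establish is that $\IIsrd$ is itself $\healthy{R2}_c$-healthy, so that pushing $\healthy{R2}_c$ through the outer conditional on $wait$ leaves the $\IIsrd$ branch untouched while transforming $P$ in the other branch exactly as $\healthy{R3}_h(\healthy{R2}_c(P))$ demands. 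Once this lemma is secured by a straightforward but finicky case analysis on $wait$ and $ok$, the final composition law and hence the theorem follow directly; continuity is then an immediate consequence of the componentwise continuity results together with Theorem~\ref{thm:contthm}.
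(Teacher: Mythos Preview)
Your approach is correct and matches the paper's methodology. The paper does not give an explicit written proof of this theorem, deferring instead to the Isabelle/UTP mechanisation, but it does indicate the intended strategy just before the theorem: ``Healthiness conditions in UTP are often built by composition of several component functions. That being the case, continuity and idempotence properties of the overall healthiness condition can also be obtained by composition.'' Your decomposition into idempotence and continuity of $\healthy{R1}$, $\healthy{R2}_c$, $\healthy{R3}_h$ plus pairwise commutativity is exactly this compositional route, and the same pattern is followed explicitly in the later proof of Theorem~\ref{thm:rdes-lattice} for $\healthy{SRD}$ and $\healthy{NSRD}$.

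Two small points. First, your closing reference to Theorem~\ref{thm:contthm} is misplaced: that theorem records \emph{consequences} of continuity (the form of $\thbot{T}$, $\thtop{T}$, $\thlfp{T}$, etc.), not a method for establishing it. Continuity of $\healthy{R}_s$ follows, as you already say earlier, simply because a composition of universally disjunctive functions is universally disjunctive; Theorem~\ref{thm:contthm} is then invoked \emph{after} the present theorem to harvest the lattice structure. Second, in the commutativity of $\healthy{R2}_c$ with $\healthy{R3}_h$ you correctly identify the key lemma that $\IIsrd$ is $\healthy{R2}_c$-healthy, but you should also note that the substitution $[\tempty, tr'-tr/tr,tr']$ leaves $wait$ untouched, which is what lets you push $\healthy{R2}_c$ through the outer conditional in the first place; this is implicit in your sketch but worth making explicit since it is the reason the case split on $wait$ suffices.
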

\noindent As for designs, a corollary of this theorem is that we obtain a complete lattice and the continuous theory
properties of Theorem~\ref{thm:contthm}. Thus we now have a UTP theory of stateful reactive processes to use as the
foundation for reactive design contracts.

\section{Reactive Relations and Conditions}
\label{sec:rrel}
In this section, we begin the main novel contributions of our paper, by introducing a theory of reactive relations that
we use to describe assumptions and guarantees in our reactive contracts in Section~\ref{sec:contracts}. A reactive
relation is an $\healthy{R1}$-$\healthy{R2}_c$-healthy predicate that does not have $ok$, $ok'$, $wait$, and $wait'$ in
its alphabet. Such a relation is effectively an alphabetised relation with the non-relational trace variable $\trace$
present. We define the following healthiness condition for reactive relations.
\begin{definition}[Reactive Relations] \label{def:RR} \isalink{https://github.com/isabelle-utp/utp-main/blob/07cb0c256a90bc347289b5f5d202781b536fc640/theories/reactive/utp_rea_rel.thy\#L16}
  \begin{align*}
    \healthy{RR}(P) = (\exists ok, ok', wait, wait' @ \healthy{R1}(\healthy{R2}_c(P)))
  \end{align*}
\end{definition}
\noindent $\healthy{RR}$ restricts access to $ok$ and $wait$ through existential quantification. In general, if
$(\exists x @ P) = P$ then it must be the case that $P$ does not refer to $x$. With the help of
Theorem~\ref{thm:reaprop}, we can show that $\healthy{RR}$ is both idempotent and continuous.
\begin{theorem} $\healthy{RR}$ is idempotent and continuous. \isalink{https://github.com/isabelle-utp/utp-main/blob/07cb0c256a90bc347289b5f5d202781b536fc640/theories/reactive/utp_rea_rel.thy\#L22} 
\end{theorem}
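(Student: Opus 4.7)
The plan is to reduce the claim for $\healthy{RR}$ to properties already established for its three building blocks: the outer existential quantifier, $\healthy{R1}$, and $\healthy{R2}_c$. The crucial syntactic observation is that neither $\healthy{R1}$ nor $\healthy{R2}_c$ mentions $ok$, $ok'$, $wait$, or $wait'$: inspecting Definition~\ref{def:reahealths}, $\healthy{R1}$ only conjoins the constraint $tr \le tr'$, and $\healthy{R2}_c$ only rewrites $tr$ and $tr'$. Standard predicate calculus (the classical rule for $\exists$ over operands not containing the bound variable) therefore yields commutation lemmas $(\exists v @ \healthy{R1}(P)) = \healthy{R1}(\exists v @ P)$ and $(\exists v @ \healthy{R2}_c(P)) = \healthy{R2}_c(\exists v @ P)$ for each $v \in \{ok, ok', wait, wait'\}$.

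For idempotence, I would unfold $\healthy{RR}(\healthy{RR}(P))$ and use those commutation lemmas to float the inner block of existentials outward through $\healthy{R1}$ and $\healthy{R2}_c$. Two successive existentials over the same variable collapse to one, leaving $\exists ok, ok', wait, wait' @ \healthy{R1}(\healthy{R2}_c(\healthy{R1}(\healthy{R2}_c(P))))$. The result then follows from idempotence of each of $\healthy{R1}$ and $\healthy{R2}_c$ together with their mutual commutativity; both are componentwise corollaries of Theorem~\ref{thm:reaprop}, where $\healthy{R}_s = \healthy{R1} \circ \healthy{R2}_c \circ \healthy{R3}_h$ is shown to be idempotent, and have already been noted in the discussion following Definition~\ref{def:reahealths}.

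For continuity, I would chain three distributivity facts, valid for any non-empty $A$: (i) $\bigsqcap A$ coincides with pointwise disjunction in the alphabetised predicate lattice; (ii) existential quantification distributes over arbitrary disjunctions; (iii) $\healthy{R1}$ and $\healthy{R2}_c$ are each continuous, which is again the componentwise counterpart to the continuity of $\healthy{R}_s$ in Theorem~\ref{thm:reaprop}. Composing these three observations yields $\healthy{RR}(\bigsqcap A) = \bigsqcap\{\healthy{RR}(P)\;|\; P \in A\}$, as required.

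The only non-mechanical step is the appeal to the componentwise idempotence, continuity and commutativity of $\healthy{R1}$ and $\healthy{R2}_c$; if these are not already available as named lemmas from the reactive-process theory, they need to be proved separately by unfolding the definitions, which is routine given the trace-algebra axioms of Definition~\ref{def:tralg}. Every other step is predicate-calculus bookkeeping about fresh variables and the distributivity of disjunction.
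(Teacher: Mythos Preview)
Your proposal is correct and follows essentially the same approach as the paper, which gives no explicit proof beyond the remark that the result follows ``with the help of Theorem~\ref{thm:reaprop}''. Your decomposition---commuting the existential over $ok,ok',wait,wait'$ past $\healthy{R1}$ and $\healthy{R2}_c$ (legitimate since neither mentions those variables), then appealing to the componentwise idempotence and continuity of $\healthy{R1}$ and $\healthy{R2}_c$---is exactly the natural elaboration of that hint.
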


\noindent We can therefore also show that reactive relations form a complete lattice.

\begin{theorem} $\theoryset{\healthy{RR}}$ forms a complete lattice, with bottom element $\healthy{R1}(\true)$, and top
  element $\false$. \isalink{https://github.com/isabelle-utp/utp-main/blob/07cb0c256a90bc347289b5f5d202781b536fc640/theories/reactive/utp_rea_rel.thy\#L741}
\end{theorem}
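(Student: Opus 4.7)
The plan is to derive the complete lattice structure directly from the previous theorem, which established that $\healthy{RR}$ is idempotent and continuous, and then identify the top and bottom elements by calculation. Since continuity implies monotonicity, Knaster--Tarski guarantees that $\theoryset{\healthy{RR}}$ is a complete lattice; moreover, Theorem~\ref{thm:contthm} tells us that the theory bottom is $\healthy{RR}(\true)$ and the theory top is $\healthy{RR}(\false)$. The remaining work is thus to show $\healthy{RR}(\true) = \healthy{R1}(\true)$ and $\healthy{RR}(\false) = \false$.

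First, I would compute the top. Unfolding Definition~\ref{def:RR}, $\healthy{RR}(\false) = (\exists ok,ok',wait,wait' @ \healthy{R1}(\healthy{R2}_c(\false)))$. By the definition of $\healthy{R2}_c$, substitution into $\false$ yields $\false$ on both branches of the conditional, so $\healthy{R2}_c(\false) = \false$; then $\healthy{R1}(\false) = \false \land tr \le tr' = \false$; finally, existentially quantifying over $\false$ preserves $\false$. Hence $\thtop{\healthy{RR}} = \false$.

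For the bottom, a symmetric calculation works. We have $\healthy{R2}_c(\true) = \true$ since both branches of the conditional collapse, and $\healthy{R1}(\true) = (tr \le tr')$ is already a predicate whose free variables are just $tr$ and $tr'$. Consequently the existential quantifier over $ok,ok',wait,wait'$ in $\healthy{RR}(\true)$ is vacuous, giving $\healthy{RR}(\true) = \healthy{R1}(\true)$. By Theorem~\ref{thm:contthm} this is precisely $\thbot{\healthy{RR}}$.

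Neither step is conceptually hard; the only subtlety is justifying that $\healthy{R1}(\true)$ truly has no dependence on the control variables, so that the existential quantifier simplifies away. In the Isabelle mechanisation this reduces to an alphabet-membership check, whereas on paper it is immediate from the definition of $\healthy{R1}$. Thus the main obstacle is really just assembling the two standard lattice identities from Theorem~\ref{thm:contthm} together with these small syntactic simplifications.
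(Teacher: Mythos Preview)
Your proposal is correct and follows essentially the same approach as the paper: invoke Knaster--Tarski via the idempotence and continuity of $\healthy{RR}$, use Theorem~\ref{thm:contthm} to identify the extremal elements as $\healthy{RR}(\true)$ and $\healthy{RR}(\false)$, and then reduce each by unfolding the definitions. The paper presents the calculation for $\healthy{RR}(\true)$ in detail and remarks that $\healthy{RR}(\false) = \false$ follows similarly, whereas you give both explicitly in the opposite order, but the argument is the same.
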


\begin{proof}
  We obtain a complete lattice by Knaster-Tarski~\cite{Tarski71}, and by Theorem~\ref{thm:contthm} the bottom and top
  elements are $\healthy{RR}(\true)$ and $\healthy{RR}(\false)$, respectively. For illustration, we give the following
  calculation that shows how the former reduces to $\healthy{R1}(\true)$.
\begin{align*}
  \healthy{RR}(\true) 
  &= (\exists ok, ok', wait, wait' @ \healthy{R1}(\healthy{R2}_c(\true))) & [\ref{def:RR}] \\
  &= (\exists ok, ok', wait, wait' @ \healthy{R1}(\conditional{\true[\tempty, \trace /tr, tr']}{tr \le tr'}{\true})) & [\ref{def:reahealths}] \\
  &= (\exists ok, ok', wait, wait' @ \healthy{R1}(\conditional{\true}{tr \le tr'}{\true})) & [\text{vacuous substitution}] \\
  &= (\exists ok, ok', wait, wait' @ \healthy{R1}(\true)) & [\text{relational calculus}] \\
  &= (\exists ok, ok', wait, wait' @ tr \le tr') & [\ref{def:reahealths}] \\
  &= (tr \le tr') & [\text{predicate calculus}] \\
  &= \healthy{R1}(\true) & [\ref{def:reahealths}]
\end{align*}
Calculation of $\healthy{RR}(\false) = \false$ follows a similar form.
\end{proof}

\noindent Here, $\healthy{R1}(\true)$ is the most non-deterministic relation where the trace is monotonically
increasing. As for relations, $\false$ is miraculous reactive relation with no possible observations.

Since reactive relations are a kind of condition, it is useful to have an associated Boolean algebra to support contract
and specification construction. However, logical negation is not closed under $\healthy{R1}$ and thus it is necessary to
redefine negation, and also implication, for similar reasons, for reactive relations.

\begin{definition}[Reactive Relation Logical Operators] \isalink{https://github.com/isabelle-utp/utp-main/blob/07cb0c256a90bc347289b5f5d202781b536fc640/theories/reactive/utp_rea_rel.thy\#L95}
  $$\true_r ~~\defs~~ \healthy{R1}(\true) \qquad \negr P ~~\defs~~ \healthy{R1}(\neg P) \qquad P \implies_r Q ~~\defs~~ (\negr P \lor Q)$$
\end{definition}

\noindent The universal relation $\true$ is not $\healthy{RR}$-healthy, since it allows any combination of $tr$ and
$tr'$. Consequently, we define $\truer$, which is the bottom element. Reactive negation, $\negr P$, negates $P$ and then
applies \healthy{R1}. Effectively this yields a predicate whose corresponding set of trace extensions does not satisfy
$P$. Since $\healthy{RR}$ is already closed under the other Boolean operators, such as $\lor$, $\land$, and $\false$, we
can apply them directly and prove the following theorem.
\begin{theorem} $(\healthy{RR}, \land, \lor, \negr, \truer, \false)$ forms a Boolean algebra. \isalink{https://github.com/isabelle-utp/utp-main/blob/07cb0c256a90bc347289b5f5d202781b536fc640/theories/reactive/utp_rea_rel.thy\#L574}
\end{theorem}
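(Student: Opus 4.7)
The plan is to establish the three ingredients of a Boolean algebra in turn: closure of $\healthy{RR}$ under the signature $(\land, \lor, \negr, \truer, \false)$; the usual lattice axioms (commutativity, associativity, absorption, and distributivity) inherited from the underlying predicate calculus; and the two complement laws $P \land \negr P = \false$ and $P \lor \negr P = \truer$ for every $P \in \theoryset{\healthy{RR}}$.

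For closure, the constants $\truer = \healthy{R1}(\true)$ and $\false$ have already been shown to be $\healthy{RR}$-healthy in the preceding theorem. Closure under $\land$ and $\lor$ is inherited from the fact that each of the components of $\healthy{RR}$ (namely $\healthy{R1}$, $\healthy{R2}_c$, and the existential quantification over $ok, ok', wait, wait'$) distributes through binary conjunction and disjunction: this is a routine calculation on the definitions. The more delicate closure obligation is for reactive negation, $\negr P = \healthy{R1}(\neg P)$. Idempotence of $\healthy{R1}$ gives $\healthy{R1}$-healthiness of the result immediately, and since $P$ does not mention $ok$, $ok'$, $wait$, $wait'$ neither does $\neg P$, so the existential quantifications are vacuous. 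The non-trivial part is $\healthy{R2}_c$-healthiness: here I would use Theorem~\ref{thm:trcontr} to rewrite $P$ in the canonical form $\exists t @ P_0[\tempty,t/tr,tr'] \land tr' = tr \tcat t$, negate, and apply $\healthy{R1}$ to verify that the substitution $[\tempty, \trace/tr, tr']$ appearing in $\healthy{R2}_c$ has no effect modulo $tr \le tr'$.

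The lattice axioms — commutativity, associativity, idempotence, absorption, and the two distributive laws — lift immediately from the Boolean algebra of alphabetised predicates on the fixed alphabet of $\healthy{RR}$-healthy relations, since $\land$ and $\lor$ retain their predicate-calculus denotations. The complement laws are then short calculations using $\healthy{R1}$-healthiness of $P$: for the meet, $P \land \negr P = P \land \healthy{R1}(\neg P) = (P \land \neg P) \land (tr \le tr') = \false$; for the join, $P \lor \negr P = \healthy{R1}(P) \lor \healthy{R1}(\neg P) = \healthy{R1}(P \lor \neg P) = \healthy{R1}(\true) = \truer$, using distributivity of $\healthy{R1}$ through $\lor$ and $P = \healthy{R1}(P)$.

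The main obstacle is establishing closure of $\negr$ under $\healthy{R2}_c$. Naively, $\healthy{R2}_c$ does not commute with classical negation, so the work consists of showing that wrapping with $\healthy{R1}$ neutralises the difference. Once the canonical form from Theorem~\ref{thm:trcontr} is in hand, this reduces to a trace-algebraic check that the truncation substitution preserves $\neg P$ precisely on the set where $tr \le tr'$; everything else in the proof is essentially predicate calculus.
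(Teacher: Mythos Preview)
Your overall decomposition into closure, lattice axioms, and complement laws is sound, and your calculations for the two complement identities are correct. The paper itself gives no written proof of this theorem beyond remarking that $\healthy{RR}$ is already closed under $\land$, $\lor$, and $\false$, and then deferring to the Isabelle mechanisation; so there is no detailed argument in the text to compare against.

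There is, however, one unnecessary complication in your plan. You write that ``naively, $\healthy{R2}_c$ does not commute with classical negation'' and therefore reach for the canonical-form expansion of Theorem~\ref{thm:trcontr}. In fact $\healthy{R2}_c$ \emph{does} commute with negation: it is built from substitution and the if-then-else conditional, and both of these commute with $\neg$, namely $(\neg P)[\sigma] = \neg(P[\sigma])$ and $\neg(\conditional{P}{b}{Q}) = \conditional{\neg P}{b}{\neg Q}$. Hence $\healthy{R2}_c(\neg P) = \neg\,\healthy{R2}_c(P)$, so if $P$ is $\healthy{R2}_c$-healthy then $\neg P$ is too. Closure of $\negr P = \healthy{R1}(\neg P)$ under $\healthy{R2}_c$ then follows immediately from the commutativity of $\healthy{R1}$ and $\healthy{R2}_c$ that the paper already records. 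Your route via Theorem~\ref{thm:trcontr} would also succeed, but it is working harder than necessary; the ``main obstacle'' you identify is in fact a one-line observation.
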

We can also prove the following closure properties for the standard relational operators of Definition~\ref{def:utp-prog}:
\begin{theorem}[Relational Operators Closure Properties] \isalink{https://github.com/isabelle-utp/utp-main/blob/90ec1d65d63e91a69fbfeeafe69bd7d67f753a47/theories/reactive/utp_rea_rel.thy\#L341}
  \begin{itemize}
    \item If $P$ and $Q$ are $\healthy{RR}$ then $P \relsemi Q$ is $\healthy{RR}$;
    \item If $I \neq \emptyset$ and $\forall i @ P(i)$ is $\healthy{RR}$, then $\bigsqcap_{i \in I} \, P(i)$ is $\healthy{RR}$.
  \end{itemize}
\end{theorem}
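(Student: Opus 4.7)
The plan is to establish the two closures independently, using the decomposition of $\healthy{RR}$ from Definition~\ref{def:RR} as an existential quantifier over $ok, ok', wait, wait'$ wrapping $\healthy{R1} \circ \healthy{R2}_c$ (noting that these two commute). For the infimum, continuity of $\healthy{RR}$ makes the argument essentially immediate: for non-empty $I$ with each $P(i)$ a fixed point of $\healthy{RR}$, we have $\healthy{RR}(\bigsqcap_{i \in I} P(i)) = \bigsqcap_{i \in I} \healthy{RR}(P(i)) = \bigsqcap_{i \in I} P(i)$, so the infimum is itself $\healthy{RR}$-healthy.

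For sequential composition, I would prove closure under each of the three constituent conditions in turn. First, absence of $ok, ok', wait, wait'$ from $P \relsemi Q$ follows because Definition~\ref{def:utp-prog} only quantifies over the intermediate values of relational variables, and neither operand mentions the four restricted variables; hence no references to them are introduced. Second, $\healthy{R1}$ closure is standard: if both operands imply $tr \le tr'$, then the intermediate trace $tr_0$ in the composition satisfies $tr \le tr_0$ and $tr_0 \le tr'$, and transitivity of the prefix order (derivable from Definition~\ref{def:tralg}) delivers $tr \le tr'$. Third, $\healthy{R2}_c$ closure is the technical crux and will be the main obstacle.

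For this last step, the natural strategy is to exploit Theorem~\ref{thm:trcontr}, which re-expresses each $\healthy{R1}$-$\healthy{R2}_c$-healthy relation in the trace-contribution normal form $\exists t @ P[\tempty, t/tr, tr'] \land tr' = tr \tcat t$. Writing $P$ and $Q$ in this form and expanding the composition, the intermediate variables give $tr_0 = tr \tcat t_P$ and $tr' = tr_0 \tcat t_Q$, whence associativity~\eqref{law:tassoc} yields $tr' = tr \tcat (t_P \tcat t_Q)$. This exposes $P \relsemi Q$ in the same normal form, establishing $\healthy{R1}$-$\healthy{R2}_c$ healthiness of the composition. The real difficulty is the bookkeeping in this calculation—matching the intermediate state, trace, and existential variables between the two operands—though the analogous closure result has already been established in the generalised reactive processes theory of~\cite{Foster17a}, so much of the work can be reused and repackaged under the $\healthy{RR}$ umbrella.
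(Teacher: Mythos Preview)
Your proposal is sound. The paper does not supply a written proof for this theorem; it is stated with only a pointer to the Isabelle/UTP mechanisation, and the surrounding text merely remarks that closure under $\relsemi$ and $\bigsqcap$ lets one import the relational-calculus laws. Your decomposition---continuity for the infimum, and the three-way split (variable independence, $\healthy{R1}$ via transitivity of $\le$, $\healthy{R2}_c$ via the trace-contribution normal form of Theorem~\ref{thm:trcontr}) for sequential composition---is exactly the natural route, and indeed matches the shape of the closure arguments developed for generalised reactive processes in~\cite{Foster17a}, which you rightly identify as the place where the heavy lifting has already been done.
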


\noindent $\healthy{RR}$ is closed under sequential composition ($\relsemi$) and nondeterministic choice
($\bigsqcap$). Consequently, we can reuse many of the corresponding algebraic laws of the alphabetised relational
calculus listed in Section~\ref{sec:utp}. This is a significant advantage to the UTP approach of conservatively
extending existing theories. However, the relational assignment operator is not healthy, because we can use it to
perform arbitrary updates to the trace; $tr := \tempty$ is not $\healthy{R1}$ healthy, for instance. Consequently, we need
to define new operators for manipulating and querying a reactive program's state ($\Sigma$) via observational variable
$st : \Sigma$. We therefore also define the following operators:
\begin{definition}[Reactive Relational State Operators Assignment] \isalink{https://github.com/isabelle-utp/utp-main/blob/07cb0c256a90bc347289b5f5d202781b536fc640/theories/reactive/utp_rea_prog.thy\#L232}
  \begin{align*}
    \assignsr{\sigma} &\defs (tr' = tr \land st' = \sigma(st) \land v' = v) \\
    \IIr              &\defs \assignsr{id} \\
    \rstpred{s}       &\defs \healthy{R1}(s) & \text{provided $s$ refers to undashed state variables only}
  \end{align*}
\end{definition}
\noindent $\assignsr{\sigma}$ is an assignment operator, in the style of Back's update action~\cite{Back1998}, that
applies a substitution function $\sigma : \Sigma \to \Sigma$ to the state-space variable $st$, and leaves all other
variables unchanged. Since the alphabet is open, we use the shorthand $v$ to refer to the variable set excluding $ok$,
$wait$, $tr$, and $st$. Substitutions functions can be constructed using the notation
$\{x_1 \mapsto v_1, \cdots, x_n \mapsto v_n\}$ which associates $n$ expressions ($v_i$) to corresponding variables
($x_i$). A singleton assignment can be denoted as $\assignr{x}{v} \defs \assignsr{\{x \mapsto v\}}$, where $v$ is an
expression on undashed state variables.

As usual, we also introduce the degenerate form $\IIr$, which simply retains the values of all variables. We also define
a state condition operator $\rstpred{s}$, where $s$ is a predicate over undashed state variables only: it is a condition
not mentioning variables $ok$, $wait$, or $tr$. The operator requires that $s$ holds on the state variables, whilst
leaving the trace unconstrained. We can demonstrate the following healthiness properties for these operators.

\begin{theorem} $\assignsr{\sigma}$, $\IIr$, and $\rstpred{s}$ are $\healthy{RR}$-healthy \isalink{https://github.com/isabelle-utp/utp-main/blob/07cb0c256a90bc347289b5f5d202781b536fc640/theories/reactive/utp_rea_prog.thy\#L245}
\end{theorem}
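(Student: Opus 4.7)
The plan is to handle the three operators in turn, reducing each healthiness proof to two observations: (1) the predicate syntactically does not mention $ok$, $ok'$, $wait$, or $wait'$, so the outer existential quantification in $\healthy{RR}$ is vacuous; and (2) $\healthy{R1}(\healthy{R2}_c(\cdot))$ is a fixed-point, which we establish by appealing to Theorem~\ref{thm:trcontr}.

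For $\assignsr{\sigma}$, the definition $(tr' = tr \land st' = \sigma(st) \land v' = v)$ manifestly omits $ok$, $ok'$, $wait$, $wait'$. Applying Theorem~\ref{thm:trcontr} yields
\[\healthy{R1}(\healthy{R2}_c(\assignsr{\sigma})) ~=~ (\exists t @ (t = \tempty \land st' = \sigma(st) \land v' = v) \land tr' = tr \tcat t),\]
and the trace-algebra unit law \eqref{law:tunit} collapses this to $(tr' = tr \land st' = \sigma(st) \land v' = v) = \assignsr{\sigma}$. The case $\IIr$ is then immediate by taking $\sigma = id$, so the bulk of the work is a single calculation.

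For $\rstpred{s} = \healthy{R1}(s)$, by hypothesis $s$ refers only to undashed state variables; in particular it mentions neither $ok$, $ok'$, $wait$, $wait'$, $tr$, nor $tr'$, so the substitution $[\tempty,t/tr,tr']$ is vacuous on $s$. Theorem~\ref{thm:trcontr} then gives
\[\healthy{R1}(\healthy{R2}_c(\rstpred{s})) ~=~ (\exists t @ s \land tr' = tr \tcat t) ~=~ s \land (\exists t @ tr' = tr \tcat t),\]
and the remaining existential is precisely the definition of trace prefix, $tr \le tr'$, so the result is $\healthy{R1}(s) = \rstpred{s}$.

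The only mildly delicate point is the $\assignsr{\sigma}$ case, where one must be careful that the equality $tr' = tr$ combines correctly with the witness $t = \tempty$ via the monoid axioms of the trace algebra; beyond this, each step is a routine rewrite using Theorem~\ref{thm:trcontr} and the defining equations. Since no other subtleties arise, I expect the full mechanised proof to be short.
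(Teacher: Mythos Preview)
Your approach is sound and aligns with the paper's brief informal justification, which simply observes that the conjunct $tr' = tr$ in $\assignsr{\sigma}$ is $\healthy{R1}$- and $\healthy{R2}_c$-healthy, that $ok$ and $wait$ are absent, and that $\rstpred{s}$ is $\healthy{R1}$ by construction and $\healthy{R2}_c$ because $s$ contains no trace variables. Your use of Theorem~\ref{thm:trcontr} is a slightly more explicit calculational route to the same conclusion.

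One minor oversight: in the $\rstpred{s}$ case you invoke Theorem~\ref{thm:trcontr} with $P = \rstpred{s} = \healthy{R1}(s) = s \land tr \le tr'$, not with $P = s$, so the substitution $[\tempty,t/tr,tr']$ is not vacuous on the whole predicate; it yields $s \land (\tempty \le t)$. Since $\tempty \le t$ holds trivially in any trace algebra (take the witness $t$ itself in the prefix definition), the conjunct drops out and your calculation goes through unchanged, but this step should be made explicit rather than folded into ``vacuous on $s$''.
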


Assignment is $\healthy{RR}$, since we conjoin with $tr' = tr$, which is $\healthy{R1}$ and $\healthy{R2}_c$ healthy,
and do not refer to $ok$ or $wait$. $\IIr$ is $\healthy{RR}$ for the same reasons. The state condition is \healthy{RR}
healthy as it is clearly $\healthy{R1}$, and also it is $\healthy{R2}_c$ since $s$ contains no reference to trace
variables.

A useful subset of the reactive relations is the reactive conditions, which we use to encode contractual
preconditions. A relational condition $b$ is a relation that does not refer to dashed variables. Such conditions can be
characterised as fixed points of the idempotent function $\healthy{C}(b) \defs b \relsemi \true$. For example, the
precondition of a $\healthy{H3}$ design is a condition on the initial state variables only, and so is
$\healthy{C}$-healthy. For reactive relations, we cannot exclude all dashed variables as we wish to express trace
constraints using $\trace$, which includes $tr$ and $tr'$. Consequently, reactive conditions are characterised by the
following healthiness condition, $\healthy{RC}$.

\begin{definition}[Reactive Conditions] \label{def:RC} \isalink{https://github.com/isabelle-utp/utp-main/blob/07cb0c256a90bc347289b5f5d202781b536fc640/theories/reactive/utp_rea_cond.thy\#L9}
  $$\healthy{RC1}(P) ~~\defs~~ \negr ((\negr P) \relsemi \true_r) \qquad \quad
    \healthy{RC} ~~\defs~~ \healthy{RC1} \circ \healthy{RR}$$
\end{definition}

\noindent We require that $\true_r$ is a right unit of the predicate's negated form, which means, firstly, that it can
refer only to undashed state and observational variables other than $tr$. Secondly, the behaviour of $tr$ is restricted
by having $tr \le tr'$ as a right unit. Intuitively, this means that a reactive condition's complement is extension
closed~\cite{Roscoe2005}: if trace $t_1$ is permitted by $\negr P$ then for any trace $t_2$, $t_1 \cat t_2$ is also
permitted. Extension closure is here characterised by effectively requiring that $\truer$ is a right unit of $\negr P$.

The reason for this constraint is that if a trace violates a reactive condition, that is the precondition of a reactive
contract, then any extension should also violate it. A reactive condition is technically a relation, but can be
considered as a condition on the state variables and the trace variable ($\trace$). We can show, for example, that any
state condition $\rstpred{s}$, which does not constrain $\trace$, is $\healthy{RC1}$ by the following calculation:
\begin{example}[State Condition is \healthy{RC1}] \isalink{https://github.com/isabelle-utp/utp-main/blob/07cb0c256a90bc347289b5f5d202781b536fc640/theories/reactive/utp_rea_prog.thy\#L643}
\begin{align*}
  \healthy{RC1}(\rstpred{s})
  ~~=~~& \negr ((\negr \rstpred{s}) \relsemi \truer) & \text{[\ref{def:RC}]} \\
  ~~=~~& \negr (\rstpred{\lnot s} \relsemi \truer) & \text{[predicate calculus]} \\
  ~~=~~& \negr ((tr \le tr' \land \lnot s) \relsemi tr \le tr') & \text{[$\healthy{R1}$, $\truer$, $\rstpred{-}$ definitions]}\\
  ~~=~~& \negr (tr \le tr' \relsemi tr \le tr' \land (\lnot s) \relsemi tr \le tr') & \text{[distribution: $\neg s$ is condition]} \\
  ~~=~~& \negr (tr \le tr' \land (\lnot s) \relsemi tr \le tr') & \text{[transitivity of $\le$]}\\
  ~~=~~& \negr (tr \le tr' \land (\exists t_0 @ (\lnot s) \land t_0 \le tr')) & \text{[$\relsemi$ definition, substitution]} \\ 
  ~~=~~& \negr (tr \le tr' \land \lnot s) & \text{[predicate calculus]}\\
  ~~=~~& \negr (\negr \rstpred{s}) & \text{[$\negr$, $\healthy{R1}$ definitions]} \\
  ~~=~~& \rstpred{s} & \text{[double negation]}
\end{align*}

\noindent The calculation first pushes the negation into the state condition, to yield $\rstpred{\neg s}$. Since this is
$\healthy{R1}$, but does not otherwise constrain $tr$ and $tr'$, any extension of the trace is permitted. Consequently,
$\truer$ is a right unit of $\rstpred{\neg s}$, and then, by relational calculus, $\rstpred{s}$ is $\healthy{RC1}$
healthy. \qed
\end{example}
Reactive conditions can also constrain $tr'$, but only if the corresponding trace extension $\trace$ refers only to a
prefix of the trace, leaving the suffix unconstrained. Consider, for example, $\negr (\langle a \rangle \le \trace)$, a
reactive relation that forbids $a$ from being the first element of $\trace$. It permits the empty trace $\tempty$, and any
trace $\langle b, \cdots \rangle$, where $b \neq a$. It forbids the trace $\langle a \rangle$ and any extension
thereof. It is \healthy{RC} healthy, because its negated form is extension closed, as confirmed below.
\begin{example}[Constrained Prefix is \healthy{RC1}] \label{ex:constrprf} \isalink{https://github.com/isabelle-utp/utp-main/blob/07cb0c256a90bc347289b5f5d202781b536fc640/theories/reactive/utp_rea_cond.thy\#L212}
\begin{align*}
  \healthy{RC1}(\negr (\langle a \rangle \le \trace))
  ~~=~~& \negr ((\negr \negr (\langle a \rangle \le \trace)) \relsemi \true_r) & \text{[\healthy{RC1} definition]} \\
  ~~=~~& \negr ((\langle a \rangle \le \trace) \relsemi \true_r) & \text{[double negation]} \\
  ~~=~~& \negr (tr \cat \langle a \rangle \le tr' \relsemi tr \le tr') & \text{[$\trace$, $\truer$ definition]} \\
  ~~=~~& \negr (tr \cat \langle a \rangle \le tr') & \text{[composition of $\le$]} \\
  ~~=~~& \negr (\langle a \rangle \le \trace) & \text{[$\trace$ definition]}
\end{align*}
\noindent Crucially, the relation $tr \cat \langle a \rangle \le tr'$ is extension closed, and consequently has $\truer$ as a right
unit. \qed
\end{example}
Reactive conditions thus serve to restrict permissible initial behaviours in the trace; the previous example states that
the event $a$ must not be performed initially. Thus, an alternative characterisation of reactive conditions is that the
trace is prefix closed, which can be characterised by the following healthiness condition.

\begin{definition} $\healthy{RC2}(P) \defs \healthy{R1}(P \relsemi tr' \le tr)$ \isalink{https://github.com/isabelle-utp/utp-main/blob/07cb0c256a90bc347289b5f5d202781b536fc640/theories/reactive/utp_rea_cond.thy\#L72}
\end{definition}

\noindent $\healthy{RC2}$ first sequentially composes $P$ with $tr' \le tr$, which is the converse of $\truer$, and
states that the trace monotonically decreases. This has the effect of abstracting references to variables other than
$tr$, and recording every trace which is a prefix of the traces $tr'$ produced by $P$. Then, $\healthy{R1}$ is applied to
remove traces that are shorter than those of the initial $tr$ passed to $P$. The intuition is given by the following theorem:

\begin{theorem} If $P \is \healthy{RR}$ then $\healthy{RC2}(P) = (\exists (t_0, t_1) @ (\exists \state' @ P[\tempty,t_1/tr,tr'] \land t_0 \le t_1 \land tr' = tr \cat t_0))$ \isalink{https://github.com/isabelle-utp/utp-main/blob/90ec1d65d63e91a69fbfeeafe69bd7d67f753a47/theories/reactive/utp_rea_cond.thy\#L94}
\end{theorem}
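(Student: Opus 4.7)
My plan is to unfold $\healthy{RC2}(P)$ using its definition, then exploit the $\healthy{RR}$-healthiness of $P$ via Theorem~\ref{thm:trcontr} to expose a normal form for $P$ in which $tr$ and $tr'$ appear only through the concatenation $tr' = tr \cat t_1$. After that the calculation becomes a bookkeeping exercise on the sequential composition, and the main trace-algebraic insight appears at the very end.

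Concretely, I would first rewrite
\begin{align*}
  \healthy{RC2}(P) &= \healthy{R1}(P \relsemi (tr' \le tr)) \\
                   &= (P \relsemi (tr' \le tr)) \land tr \le tr'.
\end{align*}
Next, I would unfold $\relsemi$ by introducing intermediate witnesses $v_0 = (ok_0, wait_0, tr_0, \state_0,\ldots)$. Because $P$ is $\healthy{RR}$, it does not mention $ok, ok', wait, wait'$, so the $ok_0$ and $wait_0$ existentials collapse, and only $tr_0$ and $\state_0$ remain as relevant intermediate witnesses. This leaves
$$\healthy{RC2}(P) = (\exists tr_0, \state_0 @ P[tr_0, \state_0/tr', \state'] \land tr' \le tr_0) \land tr \le tr'.$$

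Then I would apply Theorem~\ref{thm:trcontr} to $P$, obtaining $P = (\exists t_1 @ P[\tempty, t_1/tr, tr'] \land tr' = tr \cat t_1)$, and substitute $tr_0$ for $tr'$ and $\state_0$ for $\state'$ inside the existential. The residual body $P[\tempty, t_1/tr, tr']$ contains no $tr$ or $tr'$, so only the state substitution acts nontrivially on it. Renaming $\state_0$ to $\state'$ (a bound variable), this gives
$$(\exists tr_0, \state', t_1 @ P[\tempty, t_1/tr, tr'] \land tr_0 = tr \cat t_1 \land tr' \le tr_0) \land tr \le tr'.$$
Using $tr_0 = tr \cat t_1$ to eliminate $tr_0$ yields the combined constraint $tr \le tr' \land tr' \le tr \cat t_1$.

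The crux is the final step: showing that $tr \le tr' \land tr' \le tr \cat t_1$ is equivalent, over the trace algebra, to $\exists t_0 @ tr' = tr \cat t_0 \land t_0 \le t_1$. The forward direction takes $t_0 \defs tr' - tr$, which is well-defined since $tr \le tr'$, and gives $tr' = tr \cat t_0$; then from $tr \cat t_0 \le tr \cat t_1$ the cancellation law (TA3) yields $t_0 \le t_1$. The reverse direction is immediate from monotonicity of $\cat$. I expect this to be the main obstacle, as it is the only place the trace axioms of Definition~\ref{def:tralg} do genuine work; the rest is predicate-calculus reshuffling. Applying this equivalence and pulling the existential over $t_0$ outward delivers exactly the claimed normal form $(\exists (t_0, t_1) @ (\exists \state' @ P[\tempty, t_1/tr, tr'] \land t_0 \le t_1 \land tr' = tr \cat t_0))$.
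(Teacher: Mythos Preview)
Your proposal is correct. The paper itself does not spell out a proof of this theorem in the text; it states the result, gives an informal reading of the two bound traces $t_0$ and $t_1$, and defers the argument to the linked Isabelle/UTP mechanisation. Your unfolding of $\healthy{RC2}$, appeal to Theorem~\ref{thm:trcontr} to normalise $P$'s trace dependence, and the final trace-algebra equivalence (using \ref{law:tassoc} and \ref{law:tcancl1} to pass from $tr \le tr' \land tr' \le tr \cat t_1$ to $\exists t_0 @ tr' = tr \cat t_0 \land t_0 \le t_1$) is exactly the natural calculational route one would take, and matches the shape of the mechanised proof. One small remark: when you say ``only $tr_0$ and $\state_0$ remain as relevant intermediate witnesses'', this tacitly assumes the base reactive alphabet $\{ok, wait, tr, \state\}$; any further observational variables $r$ would be abstracted in the same way as $\state$, which is consistent with the paper's comment that the composition with $tr' \le tr$ ``has the effect of abstracting references to variables other than $tr$''.
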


Here, $t_1$ is one of the traces contributes by $P$, and $t_0$ is arbitrary prefix of $t_1$. Application of
$\healthy{RC2}$ inserts all such $t_1$ traces for every $t_0$ trace, and constructs the overall trace to be the $tr$
extended by $t_0$. If adding these prefixes as observations has no effect, because they are present already, then the
reactive relation is $\healthy{RC2}$ healthy. Though $\healthy{RC2}$ is not identical to $\healthy{RC1}$, we can show
that it has the same set of fixed points.

\begin{theorem} If $P \is \healthy{RR}$, then $P \is \healthy{RC1}$ if and only if $P \is \healthy{RC2}$ \isalink{https://github.com/isabelle-utp/utp-main/blob/07cb0c256a90bc347289b5f5d202781b536fc640/theories/reactive/utp_rea_cond.thy\#L82}
\end{theorem}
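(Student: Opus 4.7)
The plan is to prove the biconditional by showing that $\healthy{RC1}$ and $\healthy{RC2}$, when restricted to $\healthy{RR}$-healthy predicates, characterise the same fixed points: namely, reactive relations whose trace-contribution set is prefix closed and which consequently do not constrain the dashed state $st'$.

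First, I would derive an explicit form for $\healthy{RC1}(P)$ analogous to the one given for $\healthy{RC2}(P)$. Unfolding the definition and the sequential composition against $\truer$, one obtains
\begin{align*}
\healthy{RC1}(P) ~=~ \healthy{R1}(\forall tr_0, st_0 @ tr \le tr_0 \le tr' \implies P[tr_0, st_0 / tr', st']).
\end{align*}
Two observations are immediate. First, $\healthy{RC1}(P)$ is independent of $st'$, since $st'$ has been universally abstracted; this mirrors the dual fact that $\healthy{RC2}(P)$ is $st'$-independent via the $\exists \state'$ in the preceding theorem. Hence fixed points of either condition are $st'$-insensitive. Second, after rewriting using $\healthy{R2}_c$-healthiness of $P$ in terms of $\trace$, both conditions express the same property: $P$ at a given contribution entails $P$ at every shorter prefix.

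For the forward direction, assume $P = \healthy{RC1}(P)$. To show $P \implies \healthy{RC2}(P)$, I would witness the existential in the preceding characterisation of $\healthy{RC2}$ with $t_1 = t_0 = tr' - tr$, using $\healthy{R2}_c$ to recast $P$ in zero-history form. For $\healthy{RC2}(P) \implies P$, given $t_0 \le t_1$ with $P[\tempty, t_1/tr, tr']$ holding for some $\state'$, $\healthy{RC1}$-healthiness of $P$ allows me to shrink $t_1$ down to its prefix $t_0$, and one more invocation of $\healthy{R2}_c$ recovers $P$ at the original trace $tr' = tr \tcat t_0$. The backward direction is symmetric: $\healthy{RC1}(P) \implies P$ follows by instantiating $tr_0 := tr'$ and $st_0 := st'$ in the universal; conversely, $P \implies \healthy{RC1}(P)$ uses $\healthy{RC2}$-healthiness to supply every required prefix observation, with the universal over $st_0$ harmless owing to $st'$-insensitivity.

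The main obstacle is the asymmetry in the state quantifier — universal in $\healthy{RC1}$, existential in $\healthy{RC2}$. Establishing $st'$-insensitivity of the fixed points upfront is what makes the interchange sound; once pinned down, the remaining steps reduce to routine manipulations using the trace-algebra axioms of Definition~\ref{def:tralg} and $\healthy{R2}_c$.
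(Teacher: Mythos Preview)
Your proposal is correct and, in fact, considerably more detailed than what the paper provides: the paper gives only an informal intuition (that prefix closure of $P$ is the contrapositive of extension closure of $\negr P$) and defers the actual proof to the Isabelle mechanisation. Your derivation of the explicit universal form of $\healthy{RC1}(P)$, the observation that fixed points of either condition are $st'$-insensitive (resolving the universal/existential asymmetry), and the four-way implication chase all go through and constitute a proper pen-and-paper argument that matches the paper's stated intuition.
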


The theorem shows that a reactive relation is prefix closed when its complement is extension closed, and vice-versa. The
intuition is that if reactive condition $P$ admits $t_1$, then it must also admit any prefix $t_0 \le t_1$. If $t_0$ was
excluded from $P$ then any extension, including $t_1$, would also be excluded, since $\negr P$ is extension closed,
contradicting our assumption. We can therefore use $\healthy{RC2}$ healthiness to demonstrate $\healthy{RC1}$
healthiness. Both healthiness conditions are idempotent and continuous, and consequently $\healthy{RC}$ predicates form
a complete lattice. In particular, we retain the lattice top and bottom elements $\false$ and $\true_r$, and also the
connectives $\land$ and $\lor$. However, $\healthy{RC}$ predicates are not closed under reactive negation, since this
does not preserve prefix closure. The unrestricted use of negation in this context, however, is not necessary for the
purposes of this paper.

We also define a reactive weakest liberal precondition operator~\cite{Dijkstra75,Hoare78}.

\begin{definition}[Reactive Weakest (Liberal) Precondition] \label{def:wpR} \isalink{https://github.com/isabelle-utp/utp-main/blob/90ec1d65d63e91a69fbfeeafe69bd7d67f753a47/theories/reactive/utp_rea_wp.thy\#L10}
  \begin{align*}
    P \wpR Q ~~\defs~~&  \neg_r\, (P \relsemi \neg_r\, Q) & \text{provided $P$ is $\healthy{RR}$ and $Q$ is $\healthy{RC}$}
  \end{align*}
\end{definition}

\noindent 

\noindent The predicate has the usual intuition: $P \wpR Q$ is the weakest reactive condition such that if reactive
relation $P$ terminates, it achieves a final observation satisfying reactive condition $Q$. The definition is similar to
that given for relations in~\cite{Hoare&98,Cavalcanti04}, which effectively takes the complement of the observations
under which $P$ fails to establish $Q$. We have simply replaced relational negation with reactive negation.  

The predicate $P \wpR Q$ is a reactive condition ($\healthy{RC}$) provided that $P$ is reactive relation and $Q$ is a
reactive condition. Although, we are using complement, which does not retain prefix closure, we apply it twice which
leads to restoration of prefix closure in the final form.

From this definition, we can prove a number of standard \ckey{wlp} laws~\cite{Dijkstra75,Hoare78}, which we enumerate
below.

\begin{theorem}[Reactive Weakest Precondition Laws] \isalink{https://github.com/isabelle-utp/utp-main/blob/90ec1d65d63e91a69fbfeeafe69bd7d67f753a47/theories/reactive/utp_rea_wp.thy\#L41}
  \begin{align}
    P \wpR \true_r ~~=~~& \true_r \\
    P \wpR (Q \land R) ~~=~~& (P \wpR Q \land P \wpR R) \\
    (P ~\infixIf b \infixElseR~ Q) \wpR R ~~=~~& (P \wpR R ~ \infixIf b \infixElseR ~ Q \wpR R) \\
    (P \relsemi Q) \wpR R  ~~=~~ & P \wpR (Q \wpR R) \\
    \IIr \wpR R ~~=~~ & R \\
    \assignsr{\sigma} \wpR R ~~=~~ & \substapp{\sigma}{R} \label{thm:rwp-assigns} \\
    \false \wpR P ~~=~~ & \true_r \\
    (P \intchoice Q) \wpR R ~~=~~ & (P \wpR R) \land (Q \wpR R) \\
    \left(\bigsqcap i \in A @ P(i)\right) \wpR R ~~=~~ & \left(\forall i \in A @ P(i) \wpR R\right)
  \end{align}
\end{theorem}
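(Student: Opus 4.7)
The plan is to unfold the definition $P \wpR Q = \negr(P \relsemi \negr Q)$ in every case and reduce each identity to standard relational-calculus manipulations, relying on two meta-properties of reactive negation. First, $\negr$ is an involution on $\healthy{R1}$-healthy predicates: $\negr \negr P = P$ whenever $P$ is $\healthy{R1}$, which follows by expanding $\healthy{R1}(\neg \healthy{R1}(\neg P))$ and using idempotence of $\healthy{R1}$. Second, $\negr$ satisfies De Morgan: $\negr(P \land Q) = \negr P \lor \negr Q$ and $\negr(P \lor Q) = \negr P \land \negr Q$, because $\healthy{R1}$ distributes over both $\land$ and $\lor$. Each time $\negr$ is stripped in the calculations, the argument is automatically $\healthy{R1}$, since $\negr R$ is $\healthy{R1}$ by construction and $\healthy{RR}$ is closed under $\relsemi$.

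With these in hand, the propositional-shape laws are short. For (1) and (7), $\negr \truer = \false$ and $\false$ annihilates $\relsemi$, giving the result after one more application of $\negr$. For (2), pushing $\negr$ through the conjunction turns $Q \land R$ into $\negr Q \lor \negr R$, left-distributivity of $\relsemi$ over $\lor$ (Theorem~\ref{thm:rellaws}) splits the composition, and the outer $\negr$ reintroduces conjunction via the second De Morgan law. Laws (8) and (9) are the same argument with $\intchoice$ realised as $\lor$ and then the binary case generalised to an arbitrary $\bigsqcap$ using its distributivity through $\relsemi$ from Theorem~\ref{thm:rellaws}. For the conditional law (3), right-distributivity of the conditional through $\relsemi$ (again Theorem~\ref{thm:rellaws}) combined with the fact that $\negr$ commutes with the reactive conditional — since both $\healthy{R1}$ and Boolean negation do, as $b$ does not mention $tr$ or $tr'$ — delivers the law.

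The compositional law (4) follows from associativity of $\relsemi$ together with involution: $(P \relsemi Q) \relsemi \negr R = P \relsemi (Q \relsemi \negr R) = P \relsemi \negr\negr(Q \relsemi \negr R) = P \relsemi \negr(Q \wpR R)$, where the middle step is justified because $Q \relsemi \negr R$ is $\healthy{R1}$. Law (5) is then an immediate corollary of the assignment law (6) via $\IIr = \assignsr{id}$, noting that identity substitution is the identity on predicates.

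The main obstacle is therefore (6). It requires an auxiliary fact that $\assignsr{\sigma} \relsemi Q = \substapp{\sigma}{Q}$ for $\healthy{RR}$-healthy $Q$, adapting the familiar assignment-composition law of Theorem~\ref{thm:rellaws} to the reactive setting in which $\assignsr{\sigma}$ touches only $st$ while preserving $tr$ and the other observational variables. In tandem, one needs the commutativity $\substapp{\sigma}{\negr R} = \negr\, \substapp{\sigma}{R}$, which holds because $\sigma$ acts only on state variables whereas $\healthy{R1}$ mentions only $tr$ and $tr'$. Once these two alphabet-respecting lemmas are in place, the calculation collapses via involution on the outer $\negr$, exactly as in the other cases.
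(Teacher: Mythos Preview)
Your approach is correct. The paper does not give a written proof of this theorem; it simply remarks that the laws are the standard Dijkstra--Hoare \textsf{wlp} laws and defers to the Isabelle/UTP mechanisation (the linked theory file), where they are discharged by the relational-calculus tactics such as \textsf{rel-auto}. Your calculational unfolding of $P \wpR Q = \negr(P \relsemi \negr Q)$, together with involution of $\negr$ on $\healthy{R1}$-healthy predicates, the De~Morgan laws for $\negr$, and the distributivity properties of $\relsemi$ from Theorem~\ref{thm:rellaws}, is exactly the kind of reasoning those tactics automate, so your proposal and the mechanised proof are in substance the same argument.
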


\noindent These laws are similar to those given by Dijkstra~\cite{Dijkstra75} and Hoare~\cite{Hoare78}. In particular,
we note that the miraculous reactive relation $\false$ has the weakest liberal precondition $\truer$, which is why it is
``liberal'': we can make no judgements about a non-terminating reactive relation. The assignment law
(Theorem~\ref{thm:rwp-assigns}) uses a substitution operator $\substapp{\sigma}{R}$ to apply substitution function
$\sigma$ to predicate $R$. In words, $\assignsr{\sigma}$ achieves $R$ provided that $R$ holds when all its variables are
replaced by those given in the assignment $\sigma$.

We have now constructed a model for simple reactive programs containing both traces, assignments, and conditions. Like
UTP relations, reactive relations do not have the expressivity to account for non-terminating reactive behaviours. These
are accounted for by our theory of reactive contracts, which we define the next section.

\section{Reactive Design Contracts}
\label{sec:contracts}
In this section, we describe the signature of our theory of reactive design contracts and algebraic theorems. Due to its
complexity, we defer the definition of the UTP theory's healthiness condition (\healthy{NSRD}) until
Section~\ref{sec:grd}. As we have mentioned in the introduction, reactive programs can be denoted as contracts that
represent their assumptions and guarantees. Our goal is to provide a general method for calculating the contract of
reactive program, supported by equational theorems that can reduce a composition of multiple contracts into a single
unified contract specification, which can then be subjected to verification. All the laws we present are mechanically
proven theorems of our UTP theory; here we also provide some intuition for why they hold. We illustrate the use of our
contract notation with a number of \Circus-based~\cite{Woodcock2001-Circus} examples, which give intuition, though
stateful failure-divergences is not the only applicable semantic model.

\subsection{Contracts and Refinement}
\label{sec:contref}

Reactive program components normally proceed through three phases during execution:

\begin{enumerate}
  \item \textbf{pre-execution} -- the program waits for its predecessor to terminate and does not contribute any
    observable behaviour.
  \item \textbf{intermediate execution} -- the program begins the main body of its execution, which includes
    communication with other concurrent processes, and updates to its state. During this time state updates are, however,
    hidden from its successor.
  \item \textbf{termination} -- the program ceases interaction with the environment, reveals its final state to the
    successor, and signals permission for it to begin. Since reactive programs often do not terminate, this phase may
    never be reached.
\end{enumerate}

\noindent In this view, we largely assume that parallel programs do not directly share state but, as is typical in
process algebras, they must explicitly communicate using a suitable mechanism such as channels. All other activity, such
as state updates, is internalised to the sequential behaviour of the process, though it is possible to merge the state
of several terminated parallel processes~\cite{Oliveira&09}. Shared variables can, nevertheless, be modelled by encoding
them within traces.

Reactive programs can also diverge~\cite{Roscoe2005}, meaning they exhibit erroneous behaviour, such as engaging in an
infinite sequence of internal activity without any communication. Divergence corresponds to violation of a contract's
assumptions. A reactive design contract is a triple~\cite{Canham15} of the form
$$\rc{P(\state, \trace, r)}{Q(\state, \trace, r, r')}{R(\state, \state', \trace, r, r')}$$ the three parts of which are:

\begin{enumerate}
\item the \textbf{precondition} $P$, with assumptions the contract makes before it executes, violation of which
  corresponds to a programmer error such as divergence. It is a reactive condition, and can therefore refer to the
  initial state $\state$, the trace contribution $\trace$, and potentially other (unprimed) observational variables in the
  alphabet ($r$), but not observational variables $ok$ or $wait$, or primed variables other then $tr'$. Access
  to $tr'$ is usually indirect through $\trace$.
\item the \textbf{pericondition} $Q$, with commitments the contract guarantees to fulfil during its intermediate
  execution steps. Often, it used to represent ``quiescent'' observations, where the program is awaiting interaction with
  it environment. It is a reactive relation only on the initial states, $\trace$, and any other variables ($r, r'$).
  \item the \textbf{postcondition} $R$, with commitments that are fulfilled should the program terminate. It is a
    reactive relation that can additionally refer to the final state $\state'$, unlike the pre and pericondition.
\end{enumerate}

\noindent Such contracts can be used both as specifications, for encoding assumptions and guarantees for a subsystem, or
alternatively as a means to encode the semantics of a reactive programming language.  A reactive design contract has the
following definition.
\begin{definition}[Reactive Design Contract] \label{def:rcontract} \isalink{https://github.com/isabelle-utp/utp-main/blob/90ec1d65d63e91a69fbfeeafe69bd7d67f753a47/theories/rea_designs/utp_rdes_triples.thy\#L492}
 $$\rc{P_1}{P_2}{P_3} ~~\defs~~ \rtdes{P_1}{P_2}{P_3} \quad \text{where $P_1 \is \healthy{RC}$, and $P_2$ and $P_3$ are both $\healthy{RR}$}$$
\end{definition}
\noindent This definition assumes that $P_1$, $P_2$, and $P_3$ are as specified above. This is formalised by requiring
that $P_1$ is a reactive condition, and $P_2$ and $P_3$ are both reactive relations, using the theory developed in
Section~\ref{sec:rrel}. The reactive contract is a form of UTP design which is made reactive using $\Rs$. In previous
work~\cite{Oliveira&09}, reactive designs are often written in just two parts ($\Rs(P \shows Q)$), the assumption and
guarantee, with the intermediate and final behaviours intertwined. Here, we adopt the triple notation first developed in
\cite{Canham15} as it allows us to consider these separately and simplifies many laws. The diamond
$P_2 \wcond P_3$~\cite{Canham15} is simply an abbreviation for $\conditional{P_2}{wait'}{P_3}$, which distinguishes
intermediate and final non-divergent observations.

Our theory supports contract refinement, which is characterised by the following theorem:
\begin{theorem}[Reactive Design Refinement] \label{thm:rdesrefine} \isalink{https://github.com/isabelle-utp/utp-main/blob/90ec1d65d63e91a69fbfeeafe69bd7d67f753a47/theories/rea_designs/utp_rdes_triples.thy\#L846}
  $$\rc{P_1}{P_2}{P_3} \refinedby \rc{Q_1}{Q_2}{Q_3} \iff [P_1 \implies Q_1] \land (P_2 \refinedby (Q_2 \land P_1)) \land (P_3 \refinedby (Q_3
  \land P_1))$$
\end{theorem}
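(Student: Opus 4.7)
The plan is to unfold the contract notation, reduce the refinement on reactive designs to refinement on the underlying designs, then apply the standard design refinement law and split the postcondition by cases on $wait'$.

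First, I would unfold the definition of contracts using Definition~\ref{def:rcontract}, rewriting the goal as
$$\Rs(P_1 \vdash P_2 \wcond P_3) \refinedby \Rs(Q_1 \vdash Q_2 \wcond Q_3).$$
The next step is the key bridging lemma: because $\Rs$ is an idempotent, monotonic healthiness condition and because the arguments $P_1, P_2, P_3, Q_1, Q_2, Q_3$ are already $\healthy{RR}$/$\healthy{RC}$-healthy (hence do not mention $ok, ok', wait, wait'$ except as permitted), this refinement is equivalent to the refinement of the underlying designs, $(P_1 \vdash P_2 \wcond P_3) \refinedby (Q_1 \vdash Q_2 \wcond Q_3)$. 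Intuitively, $\Rs$ is a retract on the healthy subset, so it does not distort the order when applied to canonical arguments.

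Once at the design level, I would apply the design refinement theorem stated in Section~\ref{sec:designs}, yielding
$$[P_1 \implies Q_1] \;\land\; [(Q_2 \wcond Q_3) \land P_1 \implies (P_2 \wcond P_3)].$$
The first conjunct is already the desired $[P_1 \implies Q_1]$. For the second, I would expand the diamond $A \wcond B = (wait' \land A) \lor (\lnot wait' \land B)$ and case-split on $wait'$. Because none of $P_2, P_3, Q_2, Q_3$ mention $wait'$ (they are $\healthy{RR}$-healthy), the two cases decouple cleanly into $[Q_2 \land P_1 \implies P_2]$ and $[Q_3 \land P_1 \implies P_3]$, which are exactly $P_2 \refinedby (Q_2 \land P_1)$ and $P_3 \refinedby (Q_3 \land P_1)$ by Definition~\ref{def:refinement}.

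The main obstacle is the bridging step from $\Rs$-wrapped refinements down to refinements between designs. Monotonicity of $\Rs$ gives one direction immediately, but the converse requires that $\Rs$ acts as the identity on expressions of the form $P_1 \vdash P_2 \wcond P_3$ whose components are already $\healthy{RR}$/$\healthy{RC}$-healthy; equivalently, one needs that $\Rs$ and the design turnstile compose into a canonical form whose refinement is determined pointwise by the three components. This is essentially the content of a reactive-design normal-form lemma, which relies on the healthiness-condition analysis deferred to Section~\ref{sec:grd} and on Theorem~\ref{thm:trcontr} to ensure $P_2, P_3$ are unaffected by subsequent applications of $\healthy{R1}$ and $\healthy{R2}_c$. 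Once this normal form is in hand, the rest of the argument is a routine calculation in the design calculus followed by a $wait'$ case split.
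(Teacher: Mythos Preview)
Your approach is sound and is the natural route to this theorem; the paper itself does not spell out a textual proof but defers to the Isabelle/UTP mechanisation and the theory in Section~\ref{sec:grd}, and what you outline is essentially what that mechanised proof unfolds to: reduce to a design-level refinement, invoke the design refinement law of Section~\ref{sec:designs}, and then split on $wait'$ using that the $\healthy{RR}$-healthy components do not mention it.

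One small point of care in the bridging step: your phrasing that ``$\Rs$ acts as the identity on expressions of the form $P_1 \vdash P_2 \wcond P_3$'' is not literally true, since such a design is not $\Rs$-healthy (in particular the $\neg ok$ branch is $\true$, not $\truer$, and there is no $wait$ conditional). The cleaner way to justify the equivalence is a direct case analysis on $ok$ and $wait$: under $\neg ok$ or under $wait$, both sides of the refinement collapse to the same predicate ($\truer$ or $\IIsrd$ respectively, by $\healthy{RD1}$ and $\healthy{R3}_h$), so those cases are vacuous; under $ok \land \neg wait$ the $\healthy{R1}$--$\healthy{R2}_c$ wrapping is absorbed because $P_1, P_2, P_3, Q_1, Q_2, Q_3$ are already $\healthy{RR}$-healthy (Theorem~\ref{thm:trcontr} is the relevant tool here), and one is left with exactly the design-level implication. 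You already gesture at this in your final paragraph, so the gap is presentational rather than mathematical.
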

\noindent This is not a definition, but a theorem of the UTP refinement operator from Definition~\ref{def:refinement}
that is supported by the UTP theory (elaborated in Section~\ref{sec:grd}). Theorem~\ref{thm:rdesrefine} shows that
contract refinement reduces to three proof obligations:
\begin{enumerate}
  \item the precondition is weakened ($P_1 \implies Q_1$);
  \item the pericondition of the first contract ($P_2$) is strengthened by the pericondition of the second ($Q_2$),
    conjoined with the precondition of the first ($P_1$);
  \item the postcondition of the first contract ($P_3$) is strengthened by the postcondition of the second ($Q_3$),
    conjoined with the precondition of the first ($P_1$).
\end{enumerate}
Such a weakening of the assumption and strengthening of the guarantees, of course, is a defining feature of most
contract theories~\cite{Meyer92,Back1998,Benveniste2007,Benvenuti2008}. The particular value of this theorem in our
case is to provide a foundation for a standard verification procedure for contract-based reactive languages. If a
language can be given a contractual denotational semantics, meaning that every operator can be assigned a reactive
contract, then we can solve a verification problem, $\rc{\!P_1\!}{\!P_2\!}{\!P_3\!} \refinedby Q$, for a specification
$P$ and reactive program $Q$. 

We first calculate the program's contact $Q = \rc{\!Q_1\!}{\!Q_2\!}{\!Q_3\!}$, and then use Theorem~\ref{thm:rdesrefine}
to produce the three proof obligation predicates. Then, we can utilise theorem proving technology for relational
calculus in Isabelle/UTP~\cite{Foster14,Foster16a} to attempt discharge of the proof obligations. In Isabelle/HOL, this
can be supported by the \textsf{sledgehammer} proof method~\cite{Blanchette2011} that harnesses external automated
theorem provers. Consequently, our theorem of contracts can be used to support an automated verification technique for
reactive programs. This allows us, in particular, to support verification of programs and models with a very large or
infinite state space, since the calculated contracts are symbolic rather than explicit entities, which allows us to
overcome the state explosion problem. This, then, is the utility of the complex theory that follows.

In addition to the refinement law, we also have a similar theorem for proving equivalences:
\begin{theorem}[Reactive Design Equivalence] \label{thm:rdesequiv} \isalink{https://github.com/isabelle-utp/utp-main/blob/90ec1d65d63e91a69fbfeeafe69bd7d67f753a47/theories/rea_designs/utp_rdes_triples.thy\#L909}
  $$\rc{P_1}{P_2}{P_3} = \rc{Q_1}{Q_2}{Q_3} \iff (P_1 = Q_1) \land ((P_2 \land Q_1) = (Q_2 \land P_1)) \land ((P_3 \land Q_1) = (Q_3 \land P_1))$$
\end{theorem}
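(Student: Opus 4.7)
The plan is to reduce the contract equality to a pair of refinements via antisymmetry ($A = B$ iff $A \refinedby B$ and $B \refinedby A$) and then appeal directly to Theorem~\ref{thm:rdesrefine}. The side conditions ensuring that $P_1, Q_1$ are reactive conditions and $P_2, P_3, Q_2, Q_3$ are reactive relations are inherited from Definition~\ref{def:rcontract}, so the healthiness apparatus of the UTP theory is already accounted for; the remainder of the argument is purely propositional manipulation.

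Expanding each refinement by Theorem~\ref{thm:rdesrefine} produces six conjuncts:
\begin{align*}
 & [P_1 \implies Q_1], \quad P_2 \refinedby (Q_2 \land P_1), \quad P_3 \refinedby (Q_3 \land P_1), \\
 & [Q_1 \implies P_1], \quad Q_2 \refinedby (P_2 \land Q_1), \quad Q_3 \refinedby (P_3 \land Q_1).
\end{align*}
The two bracketed implications collapse to $P_1 = Q_1$ by mutual implication. For the peri-part, I claim that, once $P_1 = Q_1$ holds, the pair $P_2 \refinedby (Q_2 \land P_1)$ and $Q_2 \refinedby (P_2 \land Q_1)$ is equivalent to the single equality $(P_2 \land Q_1) = (Q_2 \land P_1)$. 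The ($\Leftarrow$) direction is immediate: from the equality, $(Q_2 \land P_1) \implies (P_2 \land Q_1) \implies P_2$, and symmetrically. For ($\Rightarrow$), observe that $(Q_2 \land P_1)$ syntactically entails $P_1$, hence $Q_1$, so the refinement $(Q_2 \land P_1) \implies P_2$ strengthens to $(Q_2 \land P_1) \implies (P_2 \land Q_1)$; the mirror argument delivers the converse inclusion. The same calculation applies verbatim for the postconditions with $P_3, Q_3$ in place of $P_2, Q_2$.

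The chief subtlety lies in that last step: until $P_1 = Q_1$ has been derived, the four formulas $P_2 \land Q_1$, $P_2 \land P_1$, $Q_2 \land Q_1$, $Q_2 \land P_1$ cannot be freely interchanged, and so the six implications do not decompose cleanly into three pairs. Proving the precondition equality first, and then using it to absorb the precondition on either side of the peri- and postcondition implications, is what makes the equivalence fall out. Beyond this observation, no reasoning is required over and above propositional manipulation on top of Theorem~\ref{thm:rdesrefine}.
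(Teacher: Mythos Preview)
Your proposal is correct and matches the paper's own justification: the paper states that this theorem ``is a consequence of Theorem~\ref{thm:rdesrefine} and the fact that refinement is antisymmetric,'' which is exactly the route you take. Your propositional unpacking of the six conjuncts, and in particular the observation that $P_1 = Q_1$ must be established first before the peri- and postcondition implications can be paired into equalities, is precisely the argument required.
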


\noindent This theorem is a consequence of Theorem~\ref{thm:rdesequiv} and the fact that refinement is
antisymmetric. Two reactive contracts are equivalent if, and only if, (1) their preconditions are equivalent, and (2)
their peri- and postconditions are equivalent modulo the precondition. With this theorem we can similarly automate
proving equivalences.

\subsection{Denoting Reactive Programs}

A crucial requirement of the verification strategy outlined above is that the target language is equipped with a
denotational semantics in terms of reactive contracts. We illustrate the use of contracts in giving a denotational
semantics to a reactive language by denoting several of the operators from the \Circus
language~\cite{Woodcock2001-Circus,Oliveira&09}. For this, we first need to specialise the semantic model to
failure-divergences~\cite{Roscoe2005}. We thus specialise the trace algebra to finite sequences,
$tr : \seq\,\textit{Event}$, for some suitable set of events, and add the observational variable
$\refu' : \power\,\textit{Event}$, as usual~\cite{Cavalcanti&06}. This allows us to record the set of events which are
refused when an action is in a quiescent state, or equivalently the set of events that the action is willing to engage
in. It is equivalent to encoding CSP failure traces~\cite{Roscoe2005}, which consist of a sequence of events and a
refusal set. Adding observational variables is possible because the alphabet of the reactive design theory is
extensible; consequently our verification strategy is effectively parametric in a specialised semantic model.

We begin with the example of \Circus event prefix~\cite{Oliveira&09}, which is denoted by a reactive design triple:
\begin{example}[Event Prefix Reactive Design] \label{ex:prefix} \isalink{https://github.com/isabelle-utp/utp-main/blob/90ec1d65d63e91a69fbfeeafe69bd7d67f753a47/theories/sf_rdes/utp_sfrd_prog.thy\#L591}
  $$a \then \Skip ~~\defs~~ \rc{\truer}{a \notin \refu' \land \trace = \langle\rangle}{\state' = \state \land \trace = \langle a \rangle}$$

  \noindent This simple event prefix represents a program that, when enabled, waits for the environment to permit an $a$
  event, and following this, terminates. It is denoted by a contract with a true precondition since it can never
  diverge; every environment is a valid context. Its pericondition encodes a single quiescent observation: the event $a$
  is not refused and no events has been contributed to the trace as yet. The postcondition states that, when the program
  terminates, the state is unchanged by the event, and the trace is extended with $a$. In \Circus one can use this
  definition to represent the more general prefix construct using sequential composition:
  $a \then P ~~\defs~~ (a \then \Skip) \relsemi P$. \qed
\end{example}
Other examples are the $\Skip$ action, which represents a terminating process, and the $\Stop$ action, which represents a
deadlock.
\begin{example}[Terminated and Deadlocked Actions] \label{ex:termdead} \isalink{https://github.com/isabelle-utp/utp-main/blob/90ec1d65d63e91a69fbfeeafe69bd7d67f753a47/theories/sf_rdes/utp_sfrd_healths.thy\#L69}
  \begin{align*}
    \Skip ~~\defs~~& \rc{\truer}{\false}{\trace = \langle\rangle \land \state' = \state} \\
    \Stop ~~\defs~~& \rc{\truer}{\trace = \langle\rangle \land \refu' \subseteq \textit{Event}}{\false}
  \end{align*}

  \noindent The terminated action $\Skip$ has a true precondition. It has no intermediate observations, so the
  pericondition is $\false$, as it is essentially instantaneous and never pauses for interaction. In the postcondition,
  it is specified that the action makes no contribution to the trace, and leaves the state variables unchanged. The
  deadlocked action ($\Stop$) likewise has a true precondition. No state is a final state, indicated by the $\false$
  postcondition, since the process does not terminate. In the quiescent states it is simply required that the trace is
  unchanged, and any refusal set is observable, since no event is enabled.
  \qed
\end{example}

Our final example is external choice over a contract indexed by set $A$. 

\begin{example}[External Choice] \label{ex:extchoice} \isalink{https://github.com/isabelle-utp/utp-main/blob/90ec1d65d63e91a69fbfeeafe69bd7d67f753a47/theories/sf_rdes/utp_sfrd_extchoice.thy\#L124}
  \begin{align*}
    &\Extchoice i \in A @ \rc{P_1(i)}{P_2(i)}{P_3(i)} = \\[.5ex]
    &\qquad \rc{\bigwedge_{i \in A} P_1(i)}{\conditional{\left(\bigwedge_{i \in A} P_2(i)\right)}{\trace = \langle\rangle}{\left(\bigvee_{i \in A} P_2(i)\right)}}{\bigvee_{i \in A} P_3(i)}
  \end{align*}

  \noindent This is the first example of a contract composition law: it shows how a collection of contracts, in this
  case an indexed set, can be composed in a single contract. Such composition laws can then be combine with definitions,
  like those in Examples~\ref{ex:prefix} and \ref{ex:termdead} for contract calculation. The overall contract permits
  internal activity in the choice branches, but the choice itself is not resolved until an external event occurs. The
  precondition requires that the preconditions of all branches of the external choice hold in the initial state. In the
  pericondition, while the trace has not changed and thus no event has occurred ($\trace = \langle\rangle$), all
  periconditions of the choice hold simultaneously. Once an event has occurred only one of the periconditions need
  hold. This is the reason why the pericondition does not refer to final states, as these are concealed until
  termination or observation. Finally, in the postcondition, one of the choice branch postcondition holds. \qed
\end{example}
Though the three example definitions look different from the standard presentation of \Circus~\cite{Oliveira&09}, they
are largely equivalent. Indeed, the definitions given above are largely theorems of the original \Circus definitions,
and therefore our encoding is conservative. The exception is event prefix, in which we conceal the state whilst waiting
for the event, following previous work~\cite{BGW09}.

We now have a contractual denotational semantics for simple \Circus actions. For verification, we also need to specify
properties for reactive programs using specification contracts. A common desirable property of \Circus actions and CSP
processes is deadlock-freedom~\cite{Roscoe2005}, which states that a process never reaches a quiescent state where no
event is enabled. It can be specified using the following reactive contract:

\begin{definition}[Deadlock-freedom Contract] $\ckey{CDF} \defs \textstyle\rc{\truer}{\exists e @ e \notin \refu'}{\truer}$ \isalink{https://github.com/isabelle-utp/utp-main/blob/90ec1d65d63e91a69fbfeeafe69bd7d67f753a47/theories/sf_rdes/utp_sfrd_fdsem.thy\#L425}
\end{definition}

This reactive contract has a $\truer$ precondition, which by Theorem~\ref{thm:rdesrefine}, means that the precondition
of the implementation contract must also be $\truer$. This is because we must weaken the precondition, and $\truer$ is
the weakest possible reactive condition. Intuitively, this means that to refine $\ckey{CDF}$, a reactive program must
also be free of divergence. The postcondition is also $\truer$, but since we must strengthen the postcondition, any
postcondition for the implementation is admitted. The pericondition contains the main specification formula; it states
that in every quiescent observation there must be an event which is not being refused. In other words, only programs
that do not admit the observation $\refu' = \textit{Event}$ are deadlock-free.

We can show, for example, that $a \then \Skip$ is deadlock-free:

\begin{example}[$\ckey{CDF} \refinedby a \then \Skip$]
  Since the precondition of $a \then Skip$ is $\truer$, it suffices to consider the pericondition, and show that the
  following refinement holds: $$(\exists e @ e \notin \refu') \refinedby a \notin \refu' \land \trace = \langle\rangle$$
  Recall that refinement is reverse implication. Therefore, we need to show that
  $a \notin \refu' \implies (\exists e @ e \notin \refu')$, which straightforwardly holds when we set $e =
  a$. Consequently, we have proved deadlock-freedom.

  Conversely, we cannot show that $\Stop$ is deadlock-free, because its pericondition includes
  $\refu' \subseteq \textit{Event}$, which allows the possibility of refusing everything. \qed
\end{example}

\subsection{Calculational Laws}

Though language-specific operators like those above for \Circus can be expressed, many core contract operators can be
introduced generically. We can therefore develop a large body of laws for calculating contracts that do not depend on a
particular semantic model, but can be instantiated with any trace algebra $\tset$, and additional observational
variables. We begin by denoting some basic reactive operators for this generic theory.
\begin{theorem}[Reactive Design Core Operators] \label{thm:rdescore} \isalink{https://github.com/isabelle-utp/utp-main/blob/90ec1d65d63e91a69fbfeeafe69bd7d67f753a47/theories/rea_designs/utp_rdes_triples.thy\#L492}
\begin{align*}
  \IIsrd &~=~ \rc{\truer}{\false}{\IIr} \\[.3ex]
  \assignsR{\sigma} &~=~ \rc{\truer}{\false}{\assignsr{\sigma}} \\[.3ex]
  \Miracle &~=~ \rc{\truer}{\false}{\false} \\[.3ex]
  \Chaos &~=~ \rc{\false}{\false}{\false}
\end{align*}
\end{theorem}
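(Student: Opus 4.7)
The plan is to prove each of the four equalities by unfolding the left-hand side to the canonical reactive contract form of Definition~\ref{def:rcontract}, and then appealing either to Theorem~\ref{thm:rdesequiv} to reduce the claim to three predicate equivalences (one for each of the pre-, peri-, and postcondition), or, where the left-hand side is itself the top or bottom of the theory lattice, to Theorem~\ref{thm:contthm}.

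First I would treat $\Chaos$ and $\Miracle$ together, since they are the bottom and top of the reactive design theory. By Theorem~\ref{thm:contthm}, these correspond to the theory's healthiness condition applied to $\true$ and $\false$, respectively. Computing the right-hand side $\rc{\false}{\false}{\false}$ by Definition~\ref{def:rcontract}, together with the fact that any design with precondition $\false$ simplifies to $\true$ (by Definition~\ref{def:design}), gives precisely $\Rs(\true)$, which is $\Chaos$. The dual calculation with $\truer$ in the precondition slot and $\false \wcond \false$ as the body yields $\Miracle$, since $\design{\truer}{\false}$ reduces to $\neg(ok \land \truer)$, which under $\Rs$ produces the theory top.

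For $\IIsrd = \rc{\truer}{\false}{\IIr}$, I would unfold the triple on the right to expose the underlying design structure and the $wait'$ conditional hidden in the diamond $P_2 \wcond P_3$. Since the pericondition is $\false$, the diamond collapses to $\neg wait' \land \IIr$, and the design with $\truer$ precondition reduces the contract to a predicate of the form $ok \implies (ok' \land \neg wait' \land \IIr)$ after application of the reactive healthiness conditions. This must then be shown equal to the explicit definition $\conditional{(\conditional{(\exists st @ \II)}{wait}{\II})}{ok}{\healthy{R1}(\true)}$ from Definition~\ref{def:reahealths}. The argument proceeds by case analysis on $ok$ and $wait$: when $\neg ok$, $\healthy{R1}(\true)$ on the right matches the divergent branch of $\Rs$ applied to the triple; when $ok \land wait$, $\healthy{R3}_h$ forces the triple to behave as $\IIsrd$ itself; when $ok \land \neg wait$, termination with state identity is imposed on both sides.

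The proof for $\assignsR{\sigma} = \rc{\truer}{\false}{\assignsr{\sigma}}$ is structurally identical to that of $\IIsrd$, replacing the identity relation $\II$ by the state substitution $\assignsr{\sigma}$ in the terminating branch. The main obstacle throughout is reconciling the three-way case split over $ok$ and $wait$ induced by $\healthy{R3}_h$ with the triple's pre/peri/post partition, and in particular ensuring that the state-hiding $\exists st @ \II$ in the intermediate branch of $\IIsrd$ is consistent with the $\false$ pericondition demanded by the triple. I expect the $\IIsrd$ identity to be the most delicate, as the other three follow with relatively minor modifications once this machinery is in place.
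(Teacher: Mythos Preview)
The paper does not present an explicit proof for this theorem; the four identities are simply stated and the reader is directed to the Isabelle/UTP mechanisation. Of the four operators, only $\IIsrd$ has a prior explicit definition in the paper (Definition~\ref{def:reahealths}); the remaining three appear to be introduced here, with $\Miracle$ and $\Chaos$ only subsequently identified as the lattice top and bottom (Theorem~\ref{thm:rdes-lattice}). Your plan therefore supplies more than the paper itself does, and the case-analysis strategy you outline for $\IIsrd$ (and, mutatis mutandis, $\assignsR{\sigma}$) is sound and is essentially what an unfolding proof in the mechanisation must do.

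One small confusion is worth correcting: you flag reconciling ``the state-hiding $\exists st @ \II$ in the intermediate branch of $\IIsrd$'' with the $\false$ pericondition as the delicate point, but these live in different cases. The $\exists st @ \II$ branch of Definition~\ref{def:reahealths} is selected when $wait$ holds, i.e.\ the \emph{predecessor} is intermediate; in that case $\healthy{R3}_h$, which is part of $\Rs$, already forces the right-hand contract to behave as $\IIsrd$, so the two sides agree trivially. The pericondition instead governs the case $ok \land \neg wait \land ok' \land wait'$, where \emph{this} process would be intermediate. There the left-hand side, via the $\neg wait$ branch of Definition~\ref{def:reahealths}, is $\II$, which entails $wait' = wait = \false$ and so is inconsistent with $wait'$; hence no such observation exists and the pericondition is correctly $\false$. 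With this clarified, your three-way split on $ok$ and $wait$ goes through cleanly.
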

\noindent Operator $\IIsrd$ is reactive design identity. It has a true precondition, and a false pericondition,
indicating that it has no intermediate states and so is essentially instantaneous. The postcondition defines that it
contributes nothing to the trace, and simply identifies the before and after states. Since the alphabet at this point is
open, by using $\IIr$ as a postcondition, we also add the conjunct $r' = r$ which is shorthand for saying all additional
variables are unchanged. This distinguishes $\IIsrd$ from the \Circus-specific $\Skip$ operator from
Example~\ref{ex:termdead}, which leaves $\refu'$ unconstrained.

Operator $\assignsR{\sigma}$ is a generalised assignment, again similar to Back's update action~\cite{Back1998}, where
$\sigma : \Sigma \to \Sigma$ is a function on the state space. Its postcondition defines an update of the state by
applying $\sigma$ to it using the reactive relational assignment. The more specific assignment $\assignR{x}{v}$ can be
expressed as $\assignsR{\{x \mapsto v\}}$. The generalised assignment also enables us to easily define multiple-variable
assignment constructs.

$\Miracle$ is the miraculous reactive design. It has a true precondition, but has no intermediate or final states, and
thus is effectively impossible to execute. It is the top element of the refinement lattice:
\begin{theorem} $\rc{P_1}{P_2}{P_3} \refinedby \Miracle$ \isalink{https://github.com/isabelle-utp/utp-main/blob/90ec1d65d63e91a69fbfeeafe69bd7d67f753a47/theories/rea_designs/utp_rdes_healths.thy\#L545}
\end{theorem}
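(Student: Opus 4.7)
The plan is to apply Theorem~\ref{thm:rdesrefine} directly, using the definition $\Miracle = \rc{\truer}{\false}{\false}$ from Theorem~\ref{thm:rdescore}. Instantiating the refinement theorem with $Q_1 = \truer$, $Q_2 = \false$, and $Q_3 = \false$, the conjecture reduces to three obligations: $[P_1 \implies \truer]$, $P_2 \refinedby (\false \land P_1)$, and $P_3 \refinedby (\false \land P_1)$.

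First, I would dispatch the precondition obligation: since $\truer$ is the bottom of the reactive condition lattice (established earlier as $\healthy{R1}(\true)$) and $P_1$ is required to be $\healthy{RC}$-healthy (in particular, $\healthy{R1}$), we have $P_1 \implies \truer$ by $\healthy{R1}$-healthiness of $P_1$. Second, for the peri- and postcondition obligations, I would simplify $\false \land P_1$ to $\false$ using propositional calculus, and then invoke the fact that $\false$ is the top of the alphabetised predicate lattice under $\refinedby$, so $P_2 \refinedby \false$ and $P_3 \refinedby \false$ hold trivially.

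I expect no real obstacle here: the result is essentially a corollary of Theorem~\ref{thm:rdesrefine} combined with the fact that $\false$ is the relational top and $\truer$ is the weakest reactive condition. The only subtlety worth checking is that $\Miracle$ as written satisfies the well-formedness side-conditions of Definition~\ref{def:rcontract}, namely that $\truer$ is $\healthy{RC}$ and $\false$ is $\healthy{RR}$; both follow from earlier results in Section~\ref{sec:rrel} (in particular, $\truer$ and $\false$ are the lattice extrema of $\theoryset{\healthy{RR}}$, and $\truer$ is trivially $\healthy{RC1}$ since $(\negr \truer) = \false$ and $\false \relsemi \truer = \false$, whence $\negr \false = \truer$).
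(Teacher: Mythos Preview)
Your proposal is correct and follows essentially the same approach as the paper: the paper's justification is simply that the result follows from Theorem~\ref{thm:rdesrefine}, since $\truer$ is the least reactive condition and $\false$ is the greatest reactive relation. Your additional check of the well-formedness side-conditions on $\Miracle$ is a nice touch but not strictly necessary here, as these are already established by the paper when $\Miracle$ is introduced.
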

\noindent This follows, by Theorem~\ref{thm:rdesrefine}, since the precondition $\truer$ is the least reactive
condition, and $\false$ is the greatest reactive relation. $\Chaos$, in contrast to $\Miracle$, is the contract with an
unsatisfiable precondition and thus always yields a program error. It is the bottom of the refinement lattice, and is
the least deterministic contract:
\begin{theorem} $\Chaos \refinedby \rc{P_1}{P_2}{P_3}$ \isalink{https://github.com/isabelle-utp/utp-main/blob/90ec1d65d63e91a69fbfeeafe69bd7d67f753a47/theories/rea_designs/utp_rdes_healths.thy\#L545}
\end{theorem}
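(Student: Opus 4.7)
The plan is to unfold $\Chaos$ using its definition $\Chaos = \rc{\false}{\false}{\false}$ and then apply Theorem~\ref{thm:rdesrefine} directly. Taking $\Chaos$ as the specification side of the refinement and $\rc{P_1}{P_2}{P_3}$ as the implementation side, Theorem~\ref{thm:rdesrefine} reduces the goal to the three obligations $[\false \implies P_1]$, $\false \refinedby (P_2 \land \false)$, and $\false \refinedby (P_3 \land \false)$. The first holds vacuously by ex falso quodlibet. For the other two, conjunction with $\false$ yields $\false$, so each collapses to $\false \refinedby \false$, which is true by reflexivity of refinement.

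Before invoking Theorem~\ref{thm:rdesrefine}, I would first check that $\Chaos$ is actually a well-formed reactive design contract in the sense of Definition~\ref{def:rcontract}, that is, that $\false$ is $\healthy{RC}$-healthy (for the precondition slot) and $\healthy{RR}$-healthy (for the peri- and postcondition slots). Both are immediate from earlier results: $\false$ is the top element of the $\healthy{RR}$ complete lattice (Theorem stating $\theoryset{\healthy{RR}}$ is a complete lattice with top $\false$) and, by the discussion following Definition~\ref{def:RC}, also the top of the $\healthy{RC}$ lattice.

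There is no real obstacle in this proof; it is essentially the dual of the immediately preceding theorem that $\Miracle$ is the lattice top, and both are direct corollaries of Theorem~\ref{thm:rdesrefine}. The only mildly subtle point one could worry about is the asymmetry in Theorem~\ref{thm:rdesrefine} between the precondition obligation (a universally closed implication) and the peri-/postcondition obligations (refinement conjoined with the precondition of the specification), but because every component of $\Chaos$ is $\false$, this asymmetry has no bite here.
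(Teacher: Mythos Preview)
Your proposal is correct and matches the paper's approach: the paper does not spell out a proof for this particular theorem, but for the dual $\Miracle$ result it explicitly invokes Theorem~\ref{thm:rdesrefine} in exactly the way you describe, and the $\Chaos$ case is handled analogously (both theorems share the same Isabelle link). Your remarks on well-formedness of $\Chaos$ as a reactive contract are also in line with the paper's treatment.
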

$\Chaos$ can be used to identify interactions that are erroneous, and thus the context should avoid them, as illustrated by
the following example.
\begin{example}[Divergent Process] \label{ex:divproc}
  \begin{align*}
    &a \then \Chaos \extchoice b \then \Skip = \\[.5ex]
    &\qquad \rc{\negr (\langle a \rangle \le \trace)}{\trace = \langle\rangle \land a \notin ref' \land b \notin ref'}{\trace = \langle b \rangle \land st' = st}
  \end{align*}

  \noindent Here, we have utilised Examples~\ref{ex:prefix} and \ref{ex:extchoice}, together with
  Definition~\ref{thm:rdescore} to calculate the composite contract. This \Circus action allows either an $a$ or $b$
  event, but if the environment chooses $a$ then it diverges. The precondition therefore defines the assumption that the
  environment does not extend the trace by $a$, using a reactive condition of the form illustrated in
  Example~\ref{ex:constrprf}. If the program performs $a$, then the behaviour is unpredictable. The pericondition states
  that the trace has not yet been extended, and the action does not refuse $a$ or $b$. However, though it is not
  refused, $a$ can never lead to a terminating state as defined in the postcondition, which specifies that the trace is
  extended by $b$ and leaves the state unchanged. \qed
\end{example}
Contracts can also be constructed using the programming and specification operators of UTP's relational calculus. This
effectively means that relational laws of programming can be directly imported for use in proofs about contracts. We
have proved a number of theorems that show the results of composing contracts.  \allowdisplaybreaks
\begin{theorem}[Reactive Design Compositions] \label{thm:rdes-comp} \isalink{https://github.com/isabelle-utp/utp-main/blob/90ec1d65d63e91a69fbfeeafe69bd7d67f753a47/theories/rea_designs/utp_rdes_triples.thy\#L983}
\begin{align}
  \rc{P_1}{P_2}{P_3} \intchoice \rc{Q_1}{Q_2}{Q_3} &~=~ \rc{P_1 \land Q_1}{P_2 \lor Q_2}{P_3 \lor Q_3} \\[2.5ex]
  \bigsqcap_{i\in I} \rc{P_1(i)}{P_2(i)}{P_3(i)} &~=~ \rc{\bigwedge_{i \in I} P_1(i)}{\bigvee_{i \in I} P_2(i)}{\bigvee_{i \in I} P_3(i)} \label{thm:rdes-comp-inf} \\[2.5ex]
  \rc{P_1}{P_2}{P_3} \sqcup \rc{Q_1}{Q_2}{Q_3} &~=~ 
     \rc{
         \begin{array}{c}
           P_1 \ \\ 
           \lor Q_1
         \end{array}
         }{
         \begin{array}{c}
           P_1 \rimplies P_2 \land \!\! \\ 
           Q_1 \rimplies Q_2
         \end{array}
         }{
         \begin{array}{c}
           P_1 \rimplies P_3 \land \\ 
           Q_1 \rimplies Q_3
         \end{array}} \\[2.5ex]  
  \conditional{\rc{P_1}{P_2}{P_3}}{b}{\rc{Q_1}{Q_2}{Q_3}} &~=~ 
     \rc{\begin{array}{c}
          P_1 \\
          \infixIf b \infixElse \\
          Q_1
         \end{array}
     }{\begin{array}{c}
          P_2 \\
          \infixIf b \infixElse \\
          Q_2
       \end{array}
     }{\begin{array}{c}
          P_3 \\
          \infixIf b \infixElse \\
          Q_3
       \end{array}
     } \\[2.5ex]
  \rc{P_1}{P_2}{P_3} \relsemi \rc{Q_1}{Q_2}{Q_3} &~=~ 
     \rc{\begin{array}{c}
           P_1 \land \\ 
           P_3 \wpR Q_1
         \end{array}
        }{\begin{array}{c}
           P_2 \lor \\
           P_3 \relsemi Q_2
          \end{array}
        }{P_3 \relsemi Q_3} \label{thm:rdes-seq} \\[2.5ex]
  \rc{P}{Q}{R}^{n+1} &~=~ \rc{\bigwedge_{i \le n} \left(R^i \wpR P\right)}{\bigvee_{i \le n} R^i \relsemi Q}{R^{n+1}} \label{thm:rdes-comp-pow}
\end{align}
\end{theorem}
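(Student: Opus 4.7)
The plan is to unfold each occurrence of $\rc{P_1}{P_2}{P_3}$ using Definition~\ref{def:rcontract} into its reactive-design form $\Rs(P_1 \shows (P_2 \wcond P_3))$, push the top-level operator inward through $\Rs$ (using continuity, Theorem~\ref{thm:reaprop}) and through the design turnstile (using the standard design calculus recalled in Section~\ref{sec:designs}), and then redistribute $\wcond = (\conditional{\cdot}{wait'}{\cdot})$ over the resulting connectives to reassemble a triple. Closure of the reconstituted pre-, peri-, and postconditions under $\healthy{RC}$ and $\healthy{RR}$, needed to re-apply Definition~\ref{def:rcontract}, follows from the closure lemmas established in Section~\ref{sec:rrel}.

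For the first four laws this scheme is essentially routine. The internal-choice law follows from the design identity $(P_1 \shows S_1) \lor (Q_1 \shows S_2) = (P_1 \land Q_1) \shows (S_1 \lor S_2)$ together with the pointwise distribution $(P_2 \wcond P_3) \lor (Q_2 \wcond Q_3) = (P_2 \lor Q_2) \wcond (P_3 \lor Q_3)$, which comes directly from expanding $\wcond$ on $wait'$. Law~\eqref{thm:rdes-comp-inf} is the indexed lift, justified by continuity of $\Rs$ and the distribution of $\bigsqcap$ through $\land$ and $\lor$. The supremum law uses the dual design identity that introduces $P_1 \rimplies P_2$ and $Q_1 \rimplies Q_2$ in the body. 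The conditional law uses the fact that $\conditional{\cdot}{b}{\cdot}$, since $b$ mentions only state variables, commutes with $\Rs$, with $\shows$, and with $\wcond$.

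The technical heart is the sequential-composition law~\eqref{thm:rdes-seq}. The plan is to invoke the design-sequential calculus for reactive designs, which yields a precondition of shape $P_1 \land \negr (P_3 \relsemi \negr Q_1)$ and a body $(P_2 \wcond P_3) \relsemi (Q_2 \wcond Q_3)$; the precondition is then recognised as $P_1 \land (P_3 \wpR Q_1)$ directly by Definition~\ref{def:wpR}, legitimised by the contract side-conditions $P_3 \is \healthy{RR}$ and $Q_1 \is \healthy{RC}$. For the body I split on $wait'$: when the first contract is in an intermediate observation, the second contract behaves as the reactive identity (a consequence of its $\healthy{R3}_h$-healthiness), so only $P_2$ survives; when the first contract has terminated, control enters the second and the body becomes $P_3 \relsemi Q_2$ in intermediate observations and $P_3 \relsemi Q_3$ at termination, with the trace joining correctly because $Q_2$ and $Q_3$ are $\healthy{R1}$-$\healthy{R2}_c$-healthy. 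The main obstacle I anticipate is the $wait$/$wait'$ bookkeeping required to collapse $(P_2 \wcond P_3) \relsemi (Q_2 \wcond Q_3)$ into the target shape $(P_2 \lor (P_3 \relsemi Q_2)) \wcond (P_3 \relsemi Q_3)$; I would factor this through a standalone lemma on sequential composition of $\Rs$-healthy predicates rather than unfold it inline.

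Finally, the iteration law~\eqref{thm:rdes-comp-pow} falls out by induction on $n$. The base case $n=0$ reads $\rc{P}{Q}{R} = \rc{\IIr \wpR P}{\IIr \relsemi Q}{R}$, which is immediate from $\IIr$ being a unit of $\relsemi$ and the $\wpR$-identity law for $\IIr$. In the inductive step I rewrite $\rc{P}{Q}{R}^{n+2}$ as $\rc{P}{Q}{R}^{n+1} \relsemi \rc{P}{Q}{R}$, apply law~\eqref{thm:rdes-seq}, and then absorb the newly generated $R^{n+1} \wpR P$ conjunct and $R^{n+1} \relsemi Q$ disjunct into the existing finite conjunction and disjunction, using the $\wpR$ laws for $\land$ and for $\relsemi$ already established in the previous section.
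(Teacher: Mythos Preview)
Your proposal is correct and matches the paper's approach. The paper gives no detailed textual proof of this theorem---it offers only the intuitive explanations following the statement and defers the formal argument to the Isabelle/UTP mechanisation; Section~\ref{sec:grd} later restates the sequential case (Theorem~\ref{sec:rdes-seq}) via the extraction functions $\preR{\cdot}$, $\periR{\cdot}$, $\postR{\cdot}$ on $\healthy{NSRD}$-healthy predicates, which is precisely the healthiness-condition counterpart of your syntactic unfolding and confirms that your plan (push the operator through $\Rs$, apply the design calculus, redistribute over $\wcond$, then induct for the power law) is the intended route.
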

\noindent These are theorems, rather than definitions, since they formulate the semantics of contracts that are composed
by the UTP operators defined previously in Definition~\ref{def:utp-prog}. We do not need to redefine them for our theory
of reactive designs, but rather prove laws that show how to calculate composite contracts. This is a key contribution of
our work, since it means the existing theorems of operators like $\relsemi$ and $\sqcap$ can be directly imported into
our theory of reactive designs, and applied to algebraic reasoning.

The internal choice of two contracts, ($P \intchoice Q$), yields a contract that assumes both preconditions
hold, and yields the combined intermediate and final states by disjunction. The preconditions are conjoined since the
choice is non-deterministic, and thus there must be no possibility of divergence in any of the possible
branches. Internal choice can, alternatively, be viewed as a disjunction operator for contracts similar to that
in~\cite{Benveniste2007}. Similarly, an internal choice over a set of basic actions indexed by a set $I$ conjoins all
the preconditions, and disjoins the peri- and postconditions. Dual to disjunction, the conjunction of two contracts
($P \sqcup Q$) requires that one of the preconditions holds, and takes the conjunction of the corresponding intermediate
and final states. The conditional $\conditional{P}{b}{Q}$, where $b$ is a predicate on $st$ alone, can be distributed
through the pre-, peri-, and postconditions of the respective reactive designs.

Sequential composition $P \relsemi Q$, where $P = \rc{P_1}{P_2}{P_3}$ and $Q = \intchoice \rc{Q_1}{Q_2}{Q_3}$, is a
little more involved. The combined precondition conjoins the precondition of $P$ with a predicate requiring that the
postcondition of $P$ does not violate the precondition of $Q$. The latter is specified using the reactive weakest
precondition operator, $P \wpR Q$. The pericondition states that either $P$ is in an intermediate state, and thus $P_2$
holds, or else $Q$ is in intermediate state, $P$ having terminated, and thus $P_3 \relsemi Q_2$ holds. Finally, the
postcondition states that both $P$ and $Q$ have terminated, that is, $P_3 \relsemi Q_3$.

As a corollary, we prove the law for finite iteration of a reactive design, $P^{n+1}$, assuming at least one execution,
that is, for $\rc{P}{Q}{R} \relsemi \rc{P}{Q}{R} \relsemi \cdots \relsemi \rc{P}{Q}{R}$. This law can be applied to
calculate the contract for a recursive reactive program. The precondition requires that after $i \le n$ iterations of
the postcondition $R$, the precondition $P$ is not violated. The pericondition states that postcondition $R$ has been
established a number of times $i \le n$, followed by the pericondition $Q$ holding. In other words, one of the
iterations is still in an intermediate state. Finally, the overall postcondition states that $R$ has been established
$n + 1$ times.

Finally we present a law for calculating the contract of a tail-recursive program of the form
$\thlfp{R}\, X @ P \relsemi X$, where $\thlfp{R}$ is the weakest fixed-point operator, which allows us to formulate
iterative contracts. This is subject to $P$ being a productive~\cite{Coquand1993} contract, that is, one that extends
the trace when it terminates. 

\begin{definition} \label{def:productive} $\rc{P_1}{P_2}{P_3}$ is said to be productive if $P_3$ is a
  fixed-point of $\ckey{R4}(P) \defs P \land tr < tr'$, that is if we establish termination then it is necessary that
  the trace strictly increases. \isalink{https://github.com/isabelle-utp/utp-main/blob/90ec1d65d63e91a69fbfeeafe69bd7d67f753a47/theories/rea_designs/utp_rdes_productive.thy\#L12}
\end{definition}

For example, $a \then \Skip$ is productive because it always produces an $a$ event upon termination. Its postcondition
is $\healthy{R4}$ healthy because it contains the conjunct $\trace = \langle a \rangle$, which strictly increases the
trace ($tr < tr'$). On the other hand, $\assignsR{\sigma}$ is not productive because it contributes no events to the
trace. Productivity is related to, but not the same as the common notion of ``guardedness''~\cite{Hoare&98}, which, as
explained in Section~\ref{sec:recurse}, applies to a function on contracts rather than a contract itself. If a
contract's postcondition is productive, then we have the following theorem.

\begin{theorem}[Recursive Reactive Design] \label{thm:rdes-rec} 

  \noindent If $R$ is $\healthy{R4}$ healthy, then
  \begin{align*}
  \thlfp{R}\, X @ \rc{P}{Q}{R} \relsemi X &= \rc{\bigwedge_{i \in \nat} \left(R^i \wpR P\right)}{\bigvee_{i \in \nat} R^i \relsemi Q}{\false}
\end{align*}
\end{theorem}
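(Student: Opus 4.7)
The plan is to first verify that the right-hand side is a fixed point of $F(X) \defs \rc{P}{Q}{R} \relsemi X$, and then to show it is the least such fixed point in the refinement lattice.

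For the fixed-point verification, I would apply Theorem~\ref{thm:rdes-comp}, equation~\eqref{thm:rdes-seq}, directly to compute $F(\text{RHS})$. The resulting precondition $P \land R \wpR \left(\bigwedge_i R^i \wpR P\right)$ folds back to $\bigwedge_i R^i \wpR P$: distributing $R \wpR (-)$ over the conjunction (using the binary wlp law repeatedly, or equivalently reducing through the definition $P \wpR Q = \negr (P \relsemi \negr Q)$ and De Morgan) and invoking $R \wpR (R^i \wpR P) = R^{i+1} \wpR P$ turns each term into the next, while the stray $P$ is absorbed as the $i=0$ conjunct $R^0 \wpR P = \IIr \wpR P = P$. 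The pericondition $Q \lor R \relsemi \left(\bigvee_i R^i \relsemi Q\right)$ folds analogously by distributing $R \relsemi (-)$ over $\bigsqcap$ (Theorem~\ref{thm:rellaws}), and the postcondition is immediate from $R \relsemi \false = \false$.

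For the minimality argument I would pursue Kleene-style iteration from the top element $\Miracle$. A straightforward induction on $n$, using Theorem~\ref{thm:rdes-comp}, \eqref{thm:rdes-seq}, and the base case $F(\Miracle) = \rc{P}{Q}{\false}$, establishes
\[
F^n(\Miracle) ~=~ \rc{\bigwedge_{i < n} \left(R^i \wpR P\right)}{\bigvee_{i < n} R^i \relsemi Q}{\false} \qquad (n \ge 1).
\]
Applying Theorem~\ref{thm:rdes-comp}~\eqref{thm:rdes-comp-inf} componentwise to this descending chain then gives $\bigsqcap_n F^n(\Miracle) = \text{RHS}$; since sequential composition preserves non-empty infima in its right argument (Theorem~\ref{thm:rellaws}), this expression is precisely the greatest fixed point $\thgfp{R}\,F$.

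The hard part will be closing the gap between $\thgfp{R}\,F$ and $\thlfp{R}\,F$, which are a priori distinct, and this is exactly where the productivity assumption on $R$ is essential. Any fixed point $Y = \rc{Y_1}{Y_2}{Y_3}$ unfolds to $Y = F^n(Y)$ for every $n$, so in particular $Y_3 = R^n \relsemi Y_3$; combining $\healthy{R4}$-healthiness of $R$ (strict trace extension) with $\healthy{R1}$-healthiness of $Y_3$ forces $Y_3 = \false$, and an analogous unfolding pins down $Y_1$ and $Y_2$ to the RHS components. Hence the fixed point is unique and $\thlfp{R}\,F = \thgfp{R}\,F = \text{RHS}$. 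The delicate step is the $Y_3 = \false$ argument on non-diagonal observations: the diagonal case $(tr,tr)$ is an immediate clash between $tr < tr'$ from $R$ and $tr' \le tr$ from $\healthy{R1}$, but the general case must exploit the cancellative monoid axioms of the trace algebra to rule out the infinite chains of strict intermediate traces that the fixed-point equation would otherwise demand.
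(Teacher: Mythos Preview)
Your overall shape is close to the paper's: both proofs compute the Kleene iterates from the top of the lattice and then argue that the weakest and strongest fixed points coincide. The paper routes the calculation through Theorem~\ref{thm:contthm-lfp} (to pass from $\thlfp{R}$ to the relational $\mu$), Theorems~\ref{thm:uniqfp} and~\ref{thm:productive} (to pass from $\mu$ to $\nu$), and then Kleene (Theorem~\ref{thm:kleene}) together with the power law~\eqref{thm:rdes-comp-pow} and law~\ref{law:RD9}. Your direct induction on $F^n(\Miracle)$ and componentwise infimum achieve the same Kleene calculation; the fixed-point verification in your first paragraph is correct but redundant once you have Kleene.

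The genuine gap is in your uniqueness argument. The claim that $Y_3 = R^n \relsemi Y_3$ for all $n$ forces $Y_3 = \false$ \emph{cannot} be derived from the cancellative-monoid trace-algebra axioms (Definition~\ref{def:tralg}) alone. Those axioms do not exclude arbitrarily long strict chains $tr = t_0 < t_1 < \cdots < t_n \le tr'$ between two fixed traces; a dense model such as $(\mathbb{R}_{\ge 0}, +, 0)$ satisfies all the axioms yet admits such chains for every $n$. What is actually needed is the \emph{discrete trace measure} $\#$ introduced in Definition~\ref{def:guarded}, which bounds the number of strict steps by $\#(tr'-tr) \in \nat$. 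This is exactly the extra structure that the paper's guardedness framework (Definition~\ref{def:guarded}, Theorems~\ref{thm:uniqfp} and~\ref{thm:productive}) packages, and it is why the paper invokes that machinery rather than arguing directly.

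A secondary issue: ``an analogous unfolding pins down $Y_1$ and $Y_2$'' is underspecified. Even granting $Y_3 = \false$, the resulting equation $Y_1 = P \land R \wpR Y_1$ is itself a fixed-point equation on reactive conditions whose uniqueness needs its own productivity-style argument (and contract equality is only modulo the precondition, Theorem~\ref{thm:rdesequiv}, so extracting component equations is not entirely free). The paper avoids this by establishing uniqueness once at the level of the whole contract via guardedness, rather than component by component.
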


\noindent Such a recursive contract has a false postcondition, since it does not terminate. The precondition requires
that, no matter how many times postcondition $R$ is established, it does not violate the contract's precondition
$P$. The pericondition is where the main behaviour of the contract is specified. It states $R$ is executed some number
of times, and then the pericondition holds. In other words, the contract has executed its body and terminated into a
final state of the body several times, but then finally the contract always lands in an intermediate state, since it
does not terminate itself.

We now give some of the algebraic laws of reactive design contracts.
\begin{theorem}[Reactive Design Laws]
  \begin{align*}
    \Miracle \intchoice \rc{P_1}{P_2}{P_3} &~~=~~ \rc{P_1}{P_2}{P_3} \tag{RD1} \label{law:RD1} \\
    \Chaos \intchoice \rc{P_1}{P_2}{P_3} &~~=~~ \Chaos \tag{RD2} \label{law:RD2} \\
    \IIsrd \relsemi \rc{P_1}{P_2}{P_3} &~~=~~ \rc{P_1}{P_2}{P_3} \tag{RD3} \label{law:RD3} \\
    \rc{P_1}{P_2}{P_3} \relsemi \IIsrd &~~=~~ \rc{P_1}{P_2}{P_3} \tag{RD4} \label{law:RD4} \\
    \rc{P_1}{P_2}{\false} \relsemi \rc{Q_1}{Q_2}{Q_3} &~~=~~ \rc{P_1}{P_2}{\false} \tag{RD5} \label{law:RD5} \\
    \Miracle \relsemi \rc{P_1}{P_2}{P_3} &~~=~~ \Miracle \tag{RD6} \label{law:RD6} \\
    \Chaos \relsemi \rc{P_1}{P_2}{P_3} &~~=~~ \Chaos \tag{RD7} \label{law:RD7} \\
    \rc{\false}{P_2}{P_3} &~~=~~ \Chaos \tag{RD8} \label{law:RD8} \\
    \rc{P_1}{P_2}{P_3} \relsemi \Miracle &~~=~~ \rc{P_1}{P_2}{\false} \tag{RD9} \label{law:RD9} \\
    \rc{P_1}{P_2}{P_3} \relsemi \Chaos   &~~=~~ \rc{P_1 \land (P_3 \wpR \false)}{P_2}{\false} \tag{RD10} \label{law:RD10}
  \end{align*}
\end{theorem}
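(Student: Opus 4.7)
My plan is to derive each of the ten laws by direct calculation: expand the operands on the left using the core operator definitions from Theorem~\ref{thm:rdescore}, apply the appropriate composition rule from Theorem~\ref{thm:rdes-comp}, and simplify using the weakest-precondition theorems and elementary relational identities. I would treat the laws in three groups.

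For the internal-choice laws \eqref{law:RD1} and \eqref{law:RD2}, I would apply the binary internal-choice rule of Theorem~\ref{thm:rdes-comp}. Substituting $\Miracle = \rc{\truer}{\false}{\false}$ yields $\rc{\truer \land P_1}{\false \lor P_2}{\false \lor P_3}$, which collapses to $\rc{P_1}{P_2}{P_3}$ using $\truer \land P_1 = P_1$, valid because $P_1$ is $\healthy{R1}$-healthy and hence already implies $tr \le tr'$. Substituting $\Chaos = \rc{\false}{\false}{\false}$ forces the precondition to $\false$, producing a triple equal to $\Chaos$ by \eqref{law:RD8}. I would establish \eqref{law:RD8} as a preliminary lemma via Theorem~\ref{thm:rdesequiv}: its three conjuncts reduce to propositional triviality, since the shared precondition $\false$ annihilates both sides of the peri- and postcondition equalities.

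The sequential-composition laws \eqref{law:RD3}--\eqref{law:RD7}, \eqref{law:RD9}, and \eqref{law:RD10} are all instances of \eqref{thm:rdes-seq}. After the formula is unfolded, the simplifications rely on a short menu of rewrites: $\IIr \wpR P = P$ together with $\IIr$ being an identity of $\relsemi$ for \eqref{law:RD3} and \eqref{law:RD4}; $\false \wpR Q = \truer$ with $\false \relsemi Q = \false$ for \eqref{law:RD5}, \eqref{law:RD6}, and \eqref{law:RD7}; $P \wpR \truer = \truer$ with $P \relsemi \false = \false$ for \eqref{law:RD9}; and the observation that $P_3 \wpR \false$ does not in general simplify further, which is precisely why it is the surviving conjunct in \eqref{law:RD10}. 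Law \eqref{law:RD7} additionally folds the resulting $\rc{\false}{\cdot}{\cdot}$ back to $\Chaos$ using \eqref{law:RD8}.

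The main obstacle, I expect, is the bookkeeping of healthiness rather than any conceptual difficulty. Simplifications such as $\truer \land P_1 = P_1$, $\IIr \relsemi P = P$, and $P_3 \relsemi \false = \false$ all presuppose that the operands are $\healthy{RC}$- or $\healthy{RR}$-healthy; these are delivered by the well-formedness conditions on the triple in Definition~\ref{def:rcontract} but must be threaded through every rewrite. No law requires a genuinely new insight beyond the compositional calculus already developed: the whole family is a corollary bundle confirming that $\IIsrd$, $\Miracle$, and $\Chaos$ play their expected algebraic roles as identity, top, and bottom of the contract lattice.
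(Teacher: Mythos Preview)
Your proposal is correct and matches the paper's own approach: the paper states that ``all of these laws can be proved by calculation using the definitions in Theorem~\ref{thm:rdescore} and laws in Theorem~\ref{thm:rdes-comp}'', which is precisely the strategy you outline, and your detailed rewrites (the $\wpR$ simplifications, the $\IIr$ unit laws, and the use of \eqref{law:RD8} to collapse false-precondition triples) are exactly the ingredients needed. Your remark about threading the $\healthy{RC}$/$\healthy{RR}$ healthiness side-conditions through each step is apt and is the only place where care is required.
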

\noindent All of these laws can be proved by calculation using the definitions in Theorem~\ref{thm:rdescore} and laws in
Theorem~\ref{thm:rdes-comp}. \ref{law:RD1} establishes that a choice between a $\Miracle$ and $P$ yields $P$, since
$\Miracle$ is the top of the lattice. Similarly, \ref{law:RD2} establishes that a choice between $\Chaos$ and $P$ yields
$\Chaos$. The reactive skip is a left and right identity for any contract $P$, as this is stated by \ref{law:RD3} and
\ref{law:RD4}, respectively.

Law \ref{law:RD5} states that any non-terminating contract -- that is where the postcondition is $\false$ -- is a left
zero for sequential composition, as clearly then the successor is unreachable. Thus, in particular $\Miracle$ and
$\Chaos$ are both left zeros for sequential composition, as shown by \ref{law:RD6} and \ref{law:RD7}. Moreover,
\ref{law:RD7} shows that any reactive contract with a $\false$ precondition is $\Chaos$.

Law \ref{law:RD9} is a property first observed in \cite{Woodcock08}: placing a $\Miracle$ after a reactive design
eliminates final states, and yields a non-terminating process. Since it is impossible to reach a miraculous state,
inserting one prunes transitions that lead to it.

Finally, \ref{law:RD10} is a similar law for $\Chaos$, which likewise removes final states. Crucially, however, the
behaviour of $\Chaos$ is not impossible, but simply undesirable or unpredictable. Thus the composition additionally
inserts an assumption $P_3 \wpR \false$, which effectively states that postcondition $P_3$ should not be established,
because otherwise chaos will ensue. This explains Example~\ref{ex:divproc}: the left branch of the choice,
$a \then \Chaos$ is equivalent to $(a \then \Skip) \relsemi \Chaos$. The postcondition of $a \then \Skip$ is
$st' = st \land \trace = \langle a \rangle$.  The occurrence of $\Chaos$ mandates that this postcondition should not be
established, which means that trace extension is negated and added to the assumption, yielding the reactive condition
$\negr (\langle a \rangle \le \trace)$. This important distinction illustrates the difference between $\Miracle$ and
$\Chaos$ -- usually the latter is used to encode behaviour that should be prevented by the environment.

The next theorem gives the laws of reactive assignment.
\begin{theorem}[Reactive Assignment Laws] \label{thm:rea-asn}
  \begin{align*}
    \assignsR{id} &~=~ \IIsrd \tag{RA1} \label{law:RA1} \\
    \assignsR{\sigma} \relsemi \rc{P_1}{P_2}{P_3} &~=~ \rc{\substapp{\sigma}{P_1}}{\substapp{\sigma}{P_2}}{\substapp{\sigma}{P_3}} \tag{RA2} \label{law:RA2} \\
    \assignsR{\sigma} \relsemi \assignsR{\rho} &~=~ \assignsR{\rho \circ \sigma} \tag{RA3} \label{law:RA3} \\
    \assignsR{\sigma} \relsemi \Miracle &~=~ \Miracle \tag{RA4} \label{law:RA4} \\
    \assignsR{\sigma} \relsemi \Chaos &~=~ \Chaos \tag{RA5} \label{law:RA5}
  \end{align*}
\end{theorem}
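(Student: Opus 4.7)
The plan is to establish law \ref{law:RA2} as the master law and then obtain \ref{law:RA1}, \ref{law:RA3}, \ref{law:RA4}, and \ref{law:RA5} by specialisation. The core ingredients are Theorem~\ref{thm:rdescore}, which gives $\assignsR{\sigma} = \rc{\truer}{\false}{\assignsr{\sigma}}$, together with the sequential composition law \eqref{thm:rdes-seq} from Theorem~\ref{thm:rdes-comp}, and the reactive weakest-precondition law \eqref{thm:rwp-assigns} for assignment.

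For \ref{law:RA2}, I first expand the left-hand side using Theorem~\ref{thm:rdescore} to the form $\rc{\truer}{\false}{\assignsr{\sigma}} \relsemi \rc{P_1}{P_2}{P_3}$, and then apply \eqref{thm:rdes-seq}. This yields a contract whose precondition is $\truer \land (\assignsr{\sigma} \wpR P_1)$, whose pericondition is $\false \lor (\assignsr{\sigma} \relsemi P_2)$, and whose postcondition is $\assignsr{\sigma} \relsemi P_3$. Theorem~\ref{thm:rwp-assigns} rewrites the precondition to $\substapp{\sigma}{P_1}$. For the peri- and postcondition, I invoke the auxiliary fact that $\assignsr{\sigma} \relsemi Q = \substapp{\sigma}{Q}$ whenever $Q$ is a reactive relation, which follows because $\assignsr{\sigma}$ leaves $tr$ unchanged and only functionally updates $st$, so sequencing collapses to substitution. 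Standard simplifications ($\truer$ as a conjunctive unit, $\false$ as a disjunctive unit) then deliver the desired form.

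Law \ref{law:RA1} follows immediately, since Theorem~\ref{thm:rdescore} and the definition $\IIr \defs \assignsr{id}$ give $\assignsR{id} = \rc{\truer}{\false}{\assignsr{id}} = \rc{\truer}{\false}{\IIr} = \IIsrd$. For \ref{law:RA3}, \ref{law:RA4}, and \ref{law:RA5}, I rewrite the right-hand operand as a reactive contract using Theorem~\ref{thm:rdescore} ($\assignsR{\rho} = \rc{\truer}{\false}{\assignsr{\rho}}$, $\Miracle = \rc{\truer}{\false}{\false}$, $\Chaos = \rc{\false}{\false}{\false}$) and apply \ref{law:RA2}. Each reduces to a trivial substitution calculation, exploiting $\substapp{\sigma}{\truer}=\truer$ and $\substapp{\sigma}{\false}=\false$; for \ref{law:RA3} the final step uses $\substapp{\sigma}{\assignsr{\rho}} = \assignsr{\rho \circ \sigma}$, reflecting the usual compositionality of state updates.

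The main obstacle is the auxiliary lemma $\assignsr{\sigma} \relsemi Q = \substapp{\sigma}{Q}$ used in \ref{law:RA2}: although intuitively obvious, it must be verified with care because $Q$ ranges over all reactive relations, and so the functional update on $st$ must be shown to commute correctly with trace accumulation through the existential quantifier hidden inside $\relsemi$ (Definition~\ref{def:utp-prog}). Once this closure property is in place, all five laws follow by routine rewriting, and the only remaining side-conditions are the healthiness assumptions on $P_1, P_2, P_3$ that are implicit in the contract notation of Definition~\ref{def:rcontract}.
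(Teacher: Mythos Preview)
Your proposal is correct and follows the same calculational route the paper indicates: the paper does not spell out a proof for these laws but remarks (for the preceding block of laws, and implicitly here) that they all follow by calculation from the contract forms in Theorem~\ref{thm:rdescore} together with the composition laws of Theorem~\ref{thm:rdes-comp}, which is exactly what you do. Your identification of \ref{law:RA2} as the master law, derived via \eqref{thm:rdes-seq} and \eqref{thm:rwp-assigns}, with the remaining laws obtained by specialisation, matches the paper's intent; the auxiliary fact $\assignsr{\sigma} \relsemi Q = \substapp{\sigma}{Q}$ that you flag is indeed the one non-trivial ingredient and is precisely what underlies both \eqref{thm:rwp-assigns} and the later Theorem~\ref{thm:rdes-asn-comp}.
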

\noindent Law \ref{law:RA1} establishes that an assignment using the identity function $id$ yields the reactive
skip. \ref{law:RA2} captures the effect of precomposing a reactive contract with an assignment; the assignment function
is applied as a substitution in the pre-, peri-, and postconditions. \ref{law:RA3} states that composition of two
assignments yields a single assignment built by composition of the individual assignment functions. Laws \ref{law:RA4}
and \ref{law:RA5} establish that $\Miracle$ and $\Chaos$ are both right zeros for assignment. This is because they both
remove final states, but, since assignments have no intermediate states, this eliminates all observable behaviours.

\subsection{Parallel Contracts}
\label{sec:parcontract}

\begin{figure}
  \centering\includegraphics[width=.5\linewidth]{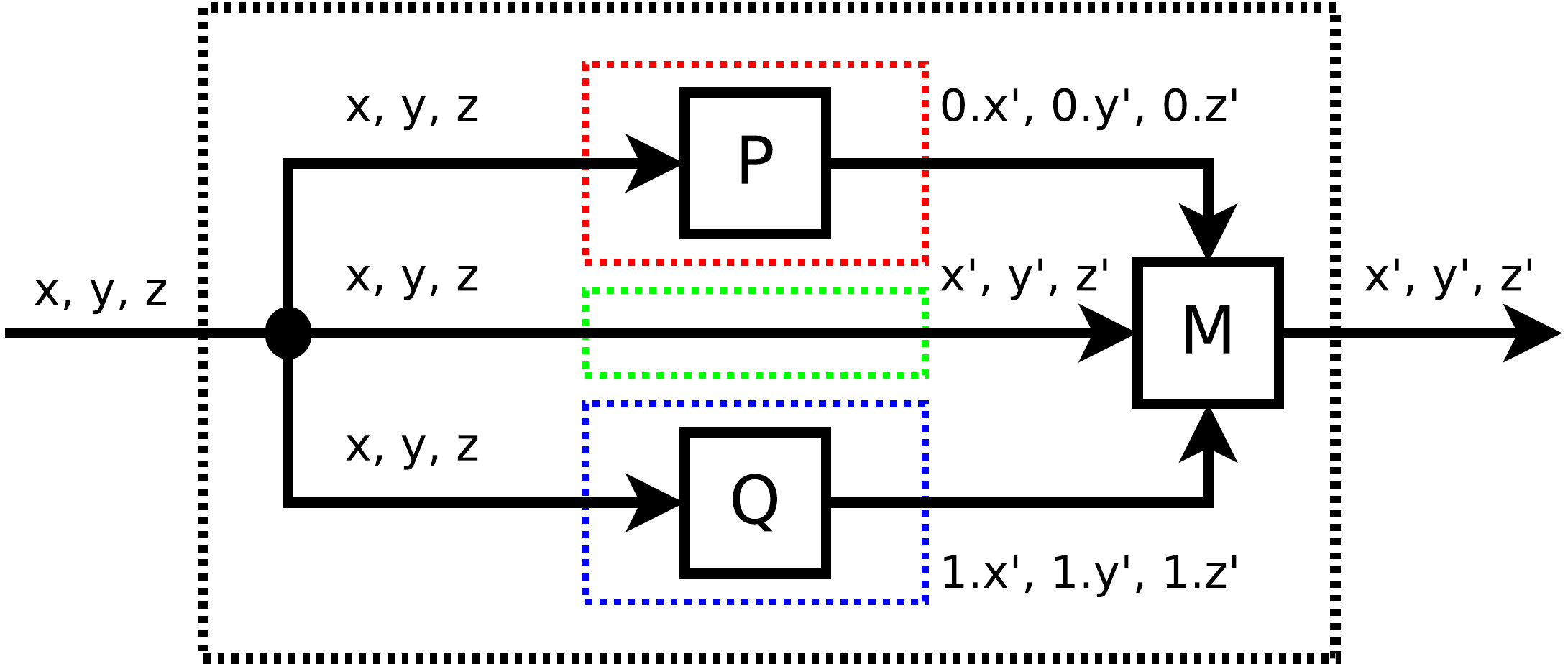}

  \caption{Parallel-by-merge Dataflow}
  \label{fig:pbm}

\end{figure}

The final operator we tackle in this section is parallel composition, written $P \rcpar{M} Q$. Our definition of
parallel composition, elaborated in Section~\ref{sec:parallel}, uses the parallel-by-merge scheme developed as part of
UTP~\cite{Hoare&98}. Since different concurrency schemes are possible for reactive contracts, depending on the
underlying notion of trace, we cannot define a single parallel composition operator and so $P \rcpar{M} Q$ is parametric
over $M$. This is a merge predicate that defines how the state, traces, and any other observational variables should be
merged following execution of $P$ and $Q$.

We illustrate parallel-by-merge in Figure~\ref{fig:pbm}, where we assume the programs act on three variables, $x$, $y$,
and $z$. Parallel-by-merge splits the observation space into three identical segments: one for $P$, one for $Q$, and a
third that is identical to the original input. Relation $M$ then takes the outputs from $P$, $Q$, and the original
inputs, and merges them into a single output.

For reactive designs, the variables we need to merge are the state variables ($\state$), trace ($\trace$), and any
additional observational variables $r$. As an example, we present below the merge predicate for interleaving the events
of two \Circus processes~\cite{Hoare&98,Oliveira2005-PHD}.
\begin{example}[Interleaving Merge] \label{ex:interleave}
  \begin{align*}
    M_c ~~~\defs~~~& \trace \in (0.\trace \interleave_t 1.\trace) \land ref' \subseteq 0.\refu \cap 1.\refu \quad\mbox{where} \\[1ex]
    \langle\rangle \interleave_t xs ~~~=~~~ & xs \interleave_t \langle\rangle ~~~=~~~ \{xs\} \\
    \langle x \rangle \cat xs \interleave_t \langle y \rangle \cat ys ~~~=~~~
                 & \left\{\langle x \rangle \cat zs ~~ \middle| ~~ zs \in (xs \interleave_t \langle y \rangle \cat ys) \right\} ~\cup~ \\
                 & \left\{\langle y \rangle \cat zs ~~ \middle| ~~ zs \in (\langle x \rangle \cat xs \interleave_t ys) \right\} \\[1ex]
    P \interleave Q ~~~\defs~~~ & P \rcpar{M_c} Q
  \end{align*}

  \noindent In the definition of $P \rcpar{M} Q$, the trace and refusal variables are decorated with an index $0$ or $1$
  that determine whether the quantity originates from $P$ or $Q$, respectively. The binary operator $\interleave_t$
  interleaves two traces; it is a recursive function on sequences, returning a set of possible traces. The merge
  predicate, $M_c$, firstly constructs the overall trace $\trace$ as one of all possible interleavings, and secondly
  states that an event is only refused if it is refused by both processes. Our merge predicate leaves the state variable
  $\state$ unspecified as the \Circus interleaving operator hides any internal state, since they can not in general be
  merge without further machinery for shared variables. We then define the interleaving operator $P \interleave Q$ using
  the parametric reactive design parallel composition operator $\rcpar{\cdot}$, which is formally defined in
  Section~\ref{sec:parallel}. \qed
\end{example}
For the purpose of generic laws, we assume that any merge predicate yields well-formed traces, and furthermore is
symmetric. Symmetry in this context means that the merge predicate has no bias towards either of its operands and yields
the same result if they are swapped. This is clearly the case in Example~\ref{ex:interleave}, since both the traces and
refusals are composed by symmetric operators, namely $\interleave_t$ and $\cap$. The following theorem describes the
result of composing two contracts.
\begin{theorem}[Reactive Design Parallel Composition] \label{thm:rdespar} \isalink{https://github.com/isabelle-utp/utp-main/blob/07cb0c256a90bc347289b5f5d202781b536fc640/theories/rea_designs/utp_rdes_parallel.thy\#L769}
  \begin{align*}
    &\rc{P_1}{P_2}{P_3} \rcpar{M} \rc{Q_1}{Q_2}{Q_3} = \\[.5ex]
    & \qquad \rc{\begin{array}{l} 
          (P_1 \rimplies P_2) \wppR{M} Q_1 \land \\
          (P_1 \rimplies P_3) \wppR{M} Q_1 \land \\
          (Q_1 \rimplies Q_2) \wppR{M} P_1 \land \\
          (Q_1 \rimplies Q_3) \wppR{M} P_1
       \end{array}}{\begin{array}{l}
          P_2 \rcmergee{M} Q_2 ~\lor~ \\[1ex]
          P_3 \rcmergee{M} Q_2 ~\lor~ \\[1ex]
          P_2 \rcmergee{M} Q_3
       \end{array}
       }{P_3 \parallel_{\text{\tiny $M$}} Q_3}
  \end{align*}
\end{theorem}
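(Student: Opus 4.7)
The plan is to unfold $P \rcpar{M} Q$ using the parallel-by-merge scheme (developed later in Section~\ref{sec:parallel}) applied to the two reactive designs, and then push the merge predicate through the reactive-design structure until a single triple emerges. First, I would rewrite each operand via Definition~\ref{def:rcontract} so that the goal is expressed in terms of the underlying $\Rs$-design forms of $\rc{P_1}{P_2}{P_3}$ and $\rc{Q_1}{Q_2}{Q_3}$. Parallel-by-merge indexes the observation space into two copies, evaluates each branch separately, and composes with $M$; using the standing assumption that $M$ is symmetric and yields well-formed traces, I would show that the merge distributes appropriately over $\Rs$, so that the composite is again a healthy reactive design and can therefore be presented as a triple.

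Second, I would split the behaviour according to the value of $wait'$ in each branch, yielding four combinations of intermediate versus terminal behaviour. The case where both branches have terminated gives the postcondition $P_3 \rcpar{M} Q_3$ via direct use of the merge. The three remaining cases, in which at least one branch is still in an intermediate state, collapse by symmetry of $M$ into the pericondition disjunction $P_2 \rcmergee{M} Q_2 \lor P_3 \rcmergee{M} Q_2 \lor P_2 \rcmergee{M} Q_3$, where $\rcmergee{M}$ is the intermediate-state variant of the merge. The two mixed cases (one branch intermediate, the other terminated) sit symmetrically in this disjunction precisely because swapping the operands of $M$ leaves it unchanged.

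Third, for the precondition I would follow the same pattern as Theorem~\ref{thm:rdes-seq}: the composite diverges exactly when one branch's guarantees force the other branch into observations outside its assumption once merged. This can happen in four ways -- the intermediate or terminal behaviour of $P$ conflicting with $Q_1$, and symmetrically the intermediate or terminal behaviour of $Q$ conflicting with $P_1$. Each such possibility is captured by a reactive weakest parallel precondition $\wppR{M}$, which plays the role that $\wpR$ plays in the sequential case, giving the four conjuncts in the statement. Conjoining these four conditions is what ensures no combination of the branches' permitted behaviours can merge into an observation that violates the other branch's assumption.

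The main obstacle will be the precondition derivation, because it requires careful bookkeeping of how divergence propagates across a merge: each branch's intermediate \emph{and} terminal behaviour can independently invalidate the other branch's assumption, yielding four clauses rather than a single one as in the sequential composition case. A secondary technical burden is verifying that the operators $\wppR{M}$ and $\rcmergee{M}$ preserve the $\healthy{RC}$ and $\healthy{RR}$ closure required by Definition~\ref{def:rcontract}, so that the resulting expression is genuinely a reactive design triple. Both points rely on the merge-closure and distributivity lemmas for parallel-by-merge established in Section~\ref{sec:parallel}, and the full argument is mechanised in Isabelle/UTP.
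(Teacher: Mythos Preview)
Your approach is essentially the paper's: the paper factors the argument through Theorem~\ref{thm:parclos} (closure of $\rcpar{M}$ under $\healthy{NSRD}$), then applies the extraction functions via Theorems~\ref{thm:parpre} and~\ref{thm:parperipost} to compute $\preR{-}$, $\periR{-}$, $\postR{-}$ of the composite, and finally uses Theorem~\ref{thm:exttrip} to rewrite those in terms of $P_i$ and $Q_i$; your direct unfolding and case split on $(0.wait,1.wait)$ is the same calculation done inline rather than through named lemmas.

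One small correction: the three pericondition disjuncts do not ``collapse by symmetry of $M$''; they are precisely the three cases $(0.wait,1.wait)\in\{(\textit{true},\textit{true}),(\textit{false},\textit{true}),(\textit{true},\textit{false})\}$, with no reduction needed. Symmetry of $M$ is assumed in the paper and may be used to simplify the precondition bookkeeping (since $\wppR{M}$ places its operands in a fixed order inside $\parallel_M$), but it plays no role in obtaining the pericondition form.
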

\noindent The precondition is expressed in terms of a reactive condition combinator $\wppR{M}$. This is a form of
weakest rely condition: $A \wppR{M} B$ describes the weakest context in which reactive relation $A$ does not violate
reactive condition $B$. This is necessary because in a composition like $P \rcpar{M} Q$, the reactive processes $P$ and
$Q$ interfere and this can lead to the violation of their preconditions. Thus, the overall precondition of a parallel
composition itself assumes that such interferences do not occur. Interferences cannot occur in
Example~\ref{ex:interleave}, since there is no synchronisation, but, in general, of course it can happen when more
specialised merge predicates are employed. The definition of $\wppR{M}$ is elided for now, as it requires further
elaboration of the UTP theory, but is given in Section~\ref{sec:grd}, Definition~\ref{def:wppR}. It obeys several
theorems that are shown below:

\begin{theorem}[Weakest Rely Laws] \label{thm:wrlaws} \isalink{https://github.com/isabelle-utp/utp-main/blob/07cb0c256a90bc347289b5f5d202781b536fc640/theories/rea_designs/utp_rdes_parallel.thy\#L720}
  $$ \false \wppR{M} P = \truer \qquad
    P \wppR{M} \truer = \truer \qquad 
    \left(\bigsqcap_{i \in I} ~ P(i)\right) \wppR{M} Q = \left(\bigwedge_{i \in I} ~ (P(i) \wppR{M} Q)\right)
  $$
\end{theorem}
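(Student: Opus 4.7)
The plan is to unfold the definition of $\wppR{M}$ (deferred to Definition~\ref{def:wppR} in Section~\ref{sec:grd}), which, by analogy with the reactive weakest precondition $P \wpR Q \defs \negr (P \relsemi \negr Q)$ in Definition~\ref{def:wpR}, is expected to wrap a merge-style combinator in reactive negations, roughly of the form $A \wppR{M} B \defs \negr (A \rcmergee{M} \negr B)$ or a small variation thereof. Once this outer shape is in view, each of the three laws reduces to routine calculation in reactive-relation calculus combined with basic algebraic properties of the merge combinator $\rcmergee{M}$ inherited from parallel-by-merge in UTP.

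For $\false \wppR{M} P = \truer$, I would argue that the merge combinator is strict in each argument: merging an impossible computation with anything produces $\false$. Thus the inner combinator evaluates to $\false$, and the outer reactive negation gives $\negr \false = \truer$. Symmetrically, for $P \wppR{M} \truer = \truer$, I first use $\negr \truer = \false$ (a consequence of the definitions of $\negr$ and $\truer$ established in Section~\ref{sec:rrel}), so that the second operand of the merge is $\false$; strictness of merge again collapses the inner term to $\false$, and one more outer negation yields $\truer$.

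For the distributive law over non-empty infima, the argument mirrors the analogous result for $\wpR$ in Theorem~\ref{thm:wrlaws}'s relational cousin. The merge combinator, being a UTP relation defined via parallel-by-merge, distributes over $\bigsqcap$ in its first argument, so $\bigl(\bigsqcap_{i\in I} P(i)\bigr) \rcmergee{M} \negr Q = \bigsqcap_{i\in I} (P(i) \rcmergee{M} \negr Q)$. Applying $\negr$ on the outside then converts the supremum (disjunction) into an infimum of negations by De Morgan, which is valid for reactive negation since $\healthy{R1}$ distributes through conjunction and arbitrary conjunctions of $\healthy{R1}$ predicates remain $\healthy{R1}$. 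This yields exactly $\bigwedge_{i\in I} (P(i) \wppR{M} Q)$.

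The main obstacle is verifying the two key properties of $\rcmergee{M}$ that the above arguments rely on: strictness (i.e.\ $\false$ is an annihilator for the merge combinator on either side) and universal disjunctivity in the first argument. Both are plausible consequences of the parallel-by-merge construction described in Section~\ref{sec:parcontract} and Figure~\ref{fig:pbm}, since that scheme ultimately composes $P$ and $Q$ with $M$ via sequential composition and existential quantification, which are themselves strict and disjunctive; but they must be discharged carefully against the precise definition in Section~\ref{sec:parallel}. Once those structural lemmas are in place, each law above is a one- or two-line calculation.
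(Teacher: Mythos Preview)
Your overall strategy is sound and matches what is needed once the actual definition is in hand, but your guessed form of the operator is not quite right. The paper defines (Definition~\ref{def:wppR})
\[
  P \wppR{M} Q ~\defs~ \negr\bigl((\negr Q) \parallel_{M \relsemi \truer} P\bigr),
\]
so the reactive negation is applied to the \emph{second} argument, the first argument $P$ sits in the \emph{right} slot of the parallel-by-merge, and the merge predicate is $M \relsemi \truer$ rather than $\rcmergee{M}$ (which is $\parallel_{\exists \state' @ M}$). This affects the bookkeeping in your three calculations: for the first law you need $\false$ as a right annihilator of $\parallel_{M \relsemi \truer}$; for the second you use $\negr\,\truer = \false$ and then $\false$ as a \emph{left} annihilator; for the third you need disjunctivity of $\parallel_{M \relsemi \truer}$ in its \emph{second} argument. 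Fortunately, all three structural facts follow directly from Definition~\ref{def:par-by-merge}, since parallel-by-merge is built from conjunction and a trailing sequential composition, both of which are strict and universally disjunctive in each argument. So your argument survives intact after correcting the operand positions and the merge predicate.

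The paper itself does not spell out a textual proof of these laws; it states them, gives an informal reading, and defers to the Isabelle mechanisation. Your unfolding-plus-structural-lemma approach is exactly what that mechanised proof amounts to. One small caveat worth recording: your De~Morgan step $\negr\bigl(\bigvee_i R_i\bigr) = \bigwedge_i \negr R_i$ relies on $\healthy{R1}$ distributing through conjunction, which holds only for non-empty index sets (the empty case gives $\truer$ on the left and $\true$ on the right). The distributive law as stated is therefore either tacitly assuming $I \neq \emptyset$, or interpreting the right-hand conjunction within the reactive lattice; you should flag which reading you intend.
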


The laws show, respectively, that (1) a miraculous reactive relation satisfies any precondition, (2) any reactive
relation satisfies a true precondition, and (3) the weakest rely condition of a disjunction of relations is the
conjunction of their weakest rely conditions.

The parallel composition precondition in Theorem~\ref{thm:rdespar} states that the respective preconditions of the two
contracts, $P_1$ and $Q_1$, are not violated, neither by the opposing periconditions -- respectively $Q_2$ and $P_2$ --
under their respective preconditions, or the opposing postconditions -- respectively $Q_3$ and $P_3$. The pericondition
of the parallel reactive contract is written in terms of operator $P \rcmergee{M} Q$ that merges the traces but not the
states, since this is concealed in intermediate observations. A parallel process is in an intermediate state if either
of the composed processes is. Finally, the postcondition merges the final trace and state of each process, directly using
the parallel by merge operator ($\parallel_M$). We formally define all these operators in Section~\ref{sec:parallel}.

We can use Theorems~\ref{thm:rdespar} and \ref{thm:wrlaws} to prove the following theorem; for illustration we elaborate
the complete calculation.
\begin{theorem} \label{thm:miracle-anhil} $\Miracle \rcpar{M} P = \Miracle$ \isalink{https://github.com/isabelle-utp/utp-main/blob/07cb0c256a90bc347289b5f5d202781b536fc640/theories/rea_designs/utp_rdes_parallel.thy\#L813} \end{theorem}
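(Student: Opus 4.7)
The plan is to unfold $\Miracle$ using Theorem~\ref{thm:rdescore} as $\rc{\truer}{\false}{\false}$ and apply the parallel composition law in Theorem~\ref{thm:rdespar}, taking $P_1 = \truer$, $P_2 = P_3 = \false$, and letting $P = \rc{Q_1}{Q_2}{Q_3}$ for arbitrary $Q_i$. We then aim to show that each of the three components simplifies: the precondition collapses to $\truer$, and the peri- and postcondition both collapse to $\false$, yielding $\rc{\truer}{\false}{\false} = \Miracle$.

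For the precondition, I would first compute $\truer \rimplies \false$. Using the definition of reactive implication and reactive negation, $\negr \truer = \healthy{R1}(\neg\,\healthy{R1}(\true)) = \false$, so $\truer \rimplies \false = \false$. The first two conjuncts of the precondition in Theorem~\ref{thm:rdespar} therefore reduce to $\false \wppR{M} Q_1$, which is $\truer$ by the first law of Theorem~\ref{thm:wrlaws}. The remaining two conjuncts, $(Q_1 \rimplies Q_2) \wppR{M} \truer$ and $(Q_1 \rimplies Q_3) \wppR{M} \truer$, are both $\truer$ by the second law of Theorem~\ref{thm:wrlaws}. Conjoining yields $\truer$.

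For the peri- and postcondition, each disjunct is of the form $\false \rcmergee{M} Q_i$ or $\false \parallel_M Q_3$. Since both $\rcmergee{M}$ and $\parallel_M$ are instances of parallel-by-merge built on sequential composition with a merge relation, $\false$ acts as a left annihilator in exactly the same way it annihilates sequential composition (Theorem~\ref{thm:rellaws}). Thus every disjunct collapses to $\false$, and both the pericondition ($\false \lor \false \lor \false$) and the postcondition reduce to $\false$.

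The main obstacle is the step involving $\rcmergee{M}$ and $\parallel_M$, whose formal definitions are deferred to Section~\ref{sec:parallel}; one must appeal to the fact that parallel-by-merge is constructed by sequentially composing with a merge predicate $M$, so that the $\false$-annihilation of $\relsemi$ carries through. A subtle issue is ensuring this holds for \emph{any} symmetric, well-formed merge $M$ — in particular, that $M$ is not vacuously satisfied when one of its inputs is $\false$. This follows from the parallel-by-merge construction, where the operand predicate is composed with $M$ via sequential composition on the split observation space, and hence inherits $\false$-annihilation from Theorem~\ref{thm:rellaws}.
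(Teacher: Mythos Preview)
Your proposal is correct and follows essentially the same calculational route as the paper: unfold $\Miracle$ via Theorem~\ref{thm:rdescore}, apply the parallel composition law Theorem~\ref{thm:rdespar}, simplify the precondition using the weakest-rely laws of Theorem~\ref{thm:wrlaws}, and collapse the peri- and postcondition to $\false$. If anything, you are slightly more explicit than the paper about justifying why $\false \rcmergee{M} Q_i = \false$ and $\false \parallel_M Q_3 = \false$: the paper's displayed calculation attributes that step to the same block of simplifications as the $\wppR{M}$ laws, whereas you correctly trace it back to $\false$-annihilation for sequential composition via the parallel-by-merge construction (Definition~\ref{def:par-by-merge}).
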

\begin{proof}
  \begin{align*}
    \Miracle \rcpar{M} P 
    &= \rc{\truer}{\false}{\false} \rcpar{M} \rc{P_1}{P_2}{P_3} \\
    &= \rc{\begin{array}{l} 
        (\truer \rimplies \false) \wppR{M} P_1 \land \\
        (\truer \rimplies \false) \wppR{M} P_1 \land \\
        (P_1 \rimplies P_2) \wppR{M} \truer \land \\
        (P_1 \rimplies P_3) \wppR{M} \truer
      \end{array}}{
      \begin{array}{l}
        \false \rcmergee{M} P_2 ~\lor~ \\
        \false \rcmergee{M} P_2 ~\lor~ \\ 
        \false \rcmergee{M} P_3
      \end{array}
    }{\false \parallel_{\text{\tiny M}} P_3} & [\ref{thm:rdespar}] \\
    &= \rc{\begin{array}{l} 
        \false \wppR{M} P_1 \land \false \wppR{M} P_1 \land \\
        \truer \land \truer
      \end{array}}{
      \begin{array}{l}
        \false ~\lor~  \false ~\lor~ \false
      \end{array}
    }{\false} & [\ref{thm:wrlaws}] \\
    &= \rc{\begin{array}{l} 
        \truer \land \truer
      \end{array}}{
      \begin{array}{l}
        \false
      \end{array} 
    }{\false} & [\ref{thm:wrlaws}] \\
    &= \rc{\truer}{\false}{\false} \\
    &= \Miracle & \qedhere 
  \end{align*} 
\end{proof}
\noindent This law shows that composition of any predicate with a miracle always yields a miracle, regardless of the
merge predicate. The intuitive property that $\Chaos$ is similarly an annihilator depends on the form of merge
predicate, and so this cannot be proved in general. We perform a further example calculation using the interleaving
operator from Example~\ref{ex:interleave}.

\begin{example}[Interleaving Calculation] \isalink{https://github.com/isabelle-utp/utp-main/blob/90ec1d65d63e91a69fbfeeafe69bd7d67f753a47/tutorial/utp_csp_ex.thy}
\begin{align*}
  & \quad~~ a \then \Skip \interleave b \then \Stop \\[.5ex]
  & = \rcs{a \notin \refu' \land \trace = \snil}{\state' = \state \land \trace = \langle a \rangle} \interleave
    \rcs{a \notin \refu' \land \trace = \snil \lor \trace = \langle b \rangle}{\false} \\[.5ex]
  & = \rcs{
    \begin{array}{l}
      (a \notin \refu' \land \trace = \snil) \rcmergee{M_C} (b \notin \refu' \land \trace = \snil) \lor \\
      (a \notin \refu' \land \trace = \snil) \rcmergee{M_C} (\trace = \langle b \rangle) \lor \\
      (\state' = \state \land \trace = \langle a \rangle) \rcmergee{M_C} (b \notin \refu' \land \trace = \snil) \lor \\
      (\state' = \state \land \trace = \langle a \rangle) \rcmergee{M_C} (\trace = \langle b \rangle)
    \end{array}}{(\state' = \state \land \trace = \langle a \rangle) \parallel_{\text{\tiny $M$}} \false} \\
  & = \rcs{
    \begin{array}{l}
      (a \notin \refu' \land b \notin \refu' \land \trace = \snil) \lor \\
      (a \notin \refu' \land \trace \in \{\snil, \langle b \rangle\}) \lor \\
      (b \notin \refu' \land \trace \in \{\snil, \langle a \rangle\}) \lor \\
      (\trace = \langle a, b \rangle)
    \end{array}}{\false} \\
  & = \rcs{\conditional{(a \notin \refu' \land b \notin \refu')}{\trace = \snil}{\left(\begin{array}{l} a \notin \refu' \land \trace = \langle b \rangle \\ \lor b \notin \refu' \land \trace = \langle a \rangle \\ \lor \trace = \langle a, b \rangle} \end{array}\right)}{\false} \\
  & = a \then b \then \Stop \extchoice b \then a \then \Stop
\end{align*}
\noindent We use the abbreviation $\rcs{P_2}{P_3}$ for a contract with a $\truer$ precondition. In the first step, we
calculate the meaning of the two sequential processes using Examples~\ref{ex:prefix} and \ref{ex:termdead}, with Theorem
\ref{thm:rdes-comp}. We then employ Theorem~\ref{thm:rdespar} to expand out the overall parallel reactive
contract. There is no possibility of divergence, so the preconditions remain trivial. For the pericondition we need to
merge each quiescent observation with every opposing quiescent or final observation. The result is the four conjuncts
displayed. In the postcondition, we have to merge the two overall postconditions. The overall postcondition becomes
$\false$, because $\Stop$ prevents the overall operator from successfully terminating. In the pericondition, we
calculate the four merged quiescent observations. These characterise the states when (1) no event has yet occurred, and
we are accepting $a$ or $b$; (2) either $\snil$, or else $\langle b \rangle$ has occurred and $a$ remains enabled, which
includes the situation when the right hand action has performed a transition; (3) the symmetric case to (2) with $a$
potentially having occurred; and (4) both events have occurred and no event is enabled: the refusal set is
unconstrained. In common with the semantics of CSP\cite{Roscoe2005}, we observe that the set of possible refusals is
downward closed under $\subseteq$ for every trace combination. We can then in fact show that this contract is
equivalent to an external choice using the definition given in Example~\ref{ex:extchoice}. \qed
\end{example}

This completes our exposition of the calculational laws for our reactive contract theory. All the theorems presented
have been mechanically checked, as we discuss in Section~\ref{sec:mech}. In the next section we illustrate their use in
verification.

\section{Automated Verification using Contracts}
\label{sec:mondex}
In this section, we exemplify the use of reactive contracts to verify properties of a small cash-card system described
in \Circus, using the verification procedure outlined in Section~\ref{sec:contref}. We will explicitly calculate the
implementation contracts, and show how these are then verified. Though the corresponding calculations are complex, the
crucial detail is that they symbolically characterise reactive programs with potentially infinite state, and can be
produced automatically in Isabelle/UTP.

In this example, cards can independently perform transfers to one another, provided sufficient balance exists. A key
requirement is that there is no loss or increase of the value shared across the cards. For simplicity, we will construct
a purely sequential specification of this system, and show how these properties can be discharged. We use a modified
version of the model described in \cite{Woodcock2014b}, which describes a network of a number of cards, each of which is
identified by a natural number $\nat$. Monetary amounts are represented as integers, $\num$, so that we can additionally
represent negative balances.

The central process has a single state variable $accts : \nat \pfun \num$, a partial function that represents the set of
accounts: the balance on each card. We also introduce three channels:
\begin{itemize}
  \item $pay : \nat \times \nat \times \num$, to initiate a transfer request of a given value between two cards;
  \item $reject : \nat$, to indicate rejection of a transfer from the given card identifier; 
  \item $accept : \nat$, to indicate acceptance.
\end{itemize}
In order to update a particular account stored in $accts$, we need a form of assignment that applies to a single
entry. We therefore introduce the following indexed assignment operator:
\begin{definition}[Indexed Assignment] \label{def:iassign} \isalink{https://github.com/isabelle-utp/utp-main/blob/90ec1d65d63e91a69fbfeeafe69bd7d67f753a47/theories/sf_rdes/utp_sfrd_prog.thy\#L331}
  $$\assigniC{x(i)}{v} ~~\defs~~ \rc{i \in \dom(\state.x)}{\false}{\state' = \state(x \mapsto x(i \mapsto v)) \land \trace = \snil}$$
\end{definition}
Indexed assignment $\assigniC{x(i)}{v}$, as employed by the $Pay$ action, is unlike regular assignment in that it is a
partial operator and can only be executed when the collection $x$ has the index $i$ defined. Consequently, it must be
guarded by an assumption $i \in \dom(\state.x)$, which states that the index $i : A$ is in the domain of variable
$x : A \pfun B$.

We give indexed assignment a denotational semantics using a \Circus contract, thus extending the available operators,
which also illustrates the extensibility of our approach. The construct has similar semantics to regular assignment, but
has the precondition that the given collection index must exist. The postcondition states that the collection state
variable $x$ is updated so that the index $i$ maps to $v$. If the precondition is violated, then the result is
divergence as the following calculation demonstrates:
\begin{example}[Divergent Indexed Assignment]
\begin{align*}
  & \quad \assignR{x}{\emptyset} \relsemi \assigniC{x(i)}{v} \\
  &= \!\!\!\begin{array}{l}
       \rc{\truer}{\false}{\state' = \state(x \mapsto \emptyset) \land \trace = \snil} \relsemi \\
       \rc{i \in \dom(\state.x)}{\false}{\state' = \state(x \mapsto x(i \mapsto v)) \land \trace = \snil}
     \end{array} & [\ref{thm:rdescore}, \ref{def:iassign}] \\
  &= \rc{\truer \land (\state' = \state(x \mapsto \emptyset) \land \trace = \snil) \wpR (i \in \dom(\state.x))}{\false}{\cdots} & [\ref{thm:rdes-seq}] \\
  &= \rc{i \in \dom(\emptyset)}{\false}{\cdots} & [\text{relational calculus}] \\
  &= \Chaos & [\ref{law:RD8}]
\end{align*}
If we assign $\emptyset$, the empty partial function, to $x$ and then attempt to manipulate the $i$th element of $x$,
the result is always $\Chaos$. \qed
\end{example}

If we use an indexed assignment within a reactive program, then it is necessary to show that the applied indices are
always in scope. We can now define the key actions for the cash card system.
\begin{definition}[Card System] \isalink{https://github.com/isabelle-utp/utp-main/blob/90ec1d65d63e91a69fbfeeafe69bd7d67f753a47/tutorial/utp_csp_mini_mondex.thy\#L38}
  \begin{align*}
    Pay(i, j : \nat, n : \num) ~~\defs~~~~ & pay.i.j.n \then \\
                       & ~~
                         \left(\begin{array}{l}
                           reject.i \then \Skip \\[1ex]
                           \quad \infixIf i = j \lor i \notin \dom(accts) \lor n \le 0 \lor n > accts(i) \infixElse \\[1ex]
                           \left(\begin{array}{l}
                           \assigniC{accts(i)}{accts(i) - n \relsemi} \\
                           \assigniC{accts(j)}{accts(j) + n \relsemi} \\
                           accept.i \then \Skip
                           \end{array}\right)
                         \end{array}\right) \\
    PaySet(cs : \power \nat) ~~\defs~~~~ & \{(i : \nat,j : \nat,n : \num) | i \in cs \land j \in cs \land i \neq j\} \\
    SomePay(cs : \power \nat) ~~\defs~~~~ & \bigsqcap (i,j,n) \in PaySet(cs) @ Pay(i,j,n) \\
    Cycle(cs : \power \nat) ~~\defs~~~~ & \mu X @ SomePay(cs) \relsemi X \\
    System ~~\defs~~~~ & \assigniC{accts}{\langle 100, 100, 100, 100, 100 \rangle} \relsemi Cycle(\{0..4\})
  \end{align*}
\end{definition}

\noindent Action $Pay(i,j,n)$ defines the protocol to execute a payment request between cards $i$ and $j$ of amount
$n$. If $i = j$, and thus the cards are the same, or card $i$ does not exist, then the transfer is rejected, by offering
$reject.i$ and then terminating. Likewise, if there is insufficient balance or a negative transfer is requested, then
the transfer is also rejected. If the transfer can be performed, then two indexed assignments lower the balance of card
$i$ and raise the balance of card $j$, respectively. Finally, the $accept.i$ event is offered and then the action
terminates.

The main behaviour of the system is described by the $Cycle$ action, which takes the set of card identifiers
$cs : \power \nat$ as a parameter. It iterates the action $SomePay(cs)$, which consists of an internal choice over all
possible payments between all possible cards, $PaySet(cs)$. The behaviour of an example card system is described by the
action $System$ that creates 5 cards, each with a balance of 100, and then begins the cycle.

In order to verify properties of the processes, we first need to calculate the reactive design contract of the
system. For the purposes of illustration, we focus on the $Pay$ action. The other contracts can be calculated in terms
of the $Pay$ contract using the laws presented in Section~\ref{sec:contracts}. The following theorem provides the result
of the calculation. Like all other theorems in this paper, it is proved, not by hand, but by use of our tactics in
Isabelle/UTP. %

\begin{theorem}[Pay Action Contract Calculation] \label{thm:mondexcontract} \isalink{https://github.com/isabelle-utp/utp-main/blob/90ec1d65d63e91a69fbfeeafe69bd7d67f753a47/tutorial/utp_csp_mini_mondex.thy\#L95}
  \vspace{1ex}

  $Pay(i,j,n) = \rc{P_1}{P_2}{P_3}$ where

  \vspace{-3ex}
  \begin{align*}
    P_1 =~~~~& \langle pay.i.j.n \rangle \le \trace \land i \neq j \land i \notin \dom(accts) \land 0 < n \land n < accts(i) \\ 
    \implies~~~~& j \in \dom(accts) \\[1.5ex]
    P_2 =~~~~& \trace = \langle\rangle \land pay.i.j.n \notin \refu' \\
        \lor~~~~ & \trace = \langle pay.i.j.n \rangle \land (i = j \lor n < 0 \lor n > accts(i)) \land reject.i \notin \refu' \\
        \lor~~~~ & \trace = \langle pay.i.j.n \rangle \land i \neq j \land n < 0 \land n > accts(i) \land accept.i \notin \refu' \\[1.5ex]
    P_3 =~~~~& \trace = \langle pay.i.j.n, reject.i \rangle \land (i = j \lor n < 0 \lor n > accts(i)) \land \state' = \state \\
      \lor~~~~ & \trace = \langle pay.i.j.n, accept.i \rangle \land i \neq j \land n < 0 \land n > accts(i)  \\
        &\qquad \land accts := accts(i \mapsto accts(i) - n, j \mapsto accts(j) + n)
  \end{align*}
\end{theorem}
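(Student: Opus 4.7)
The plan is to calculate $Pay(i,j,n)$ by successively unfolding its constituent operators and applying the calculation laws of Section~\ref{sec:contracts}. The outermost structure is an event prefix $pay.i.j.n \then B$, where $B$ is the conditional body, so by the derived law $a \then P = (a \then \Skip) \relsemi P$ (from Example~\ref{ex:prefix}) we need to compute the contract of $B$ first and then compose it sequentially with the prefix using Theorem~\ref{thm:rdes-seq}.

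First I would calculate the contract of the \emph{reject} branch, $reject.i \then \Skip$, which is immediate from Example~\ref{ex:prefix}. Then I would calculate the contract of the \emph{accept} branch by applying the sequential composition rule \eqref{thm:rdes-seq} twice: once for each indexed assignment (Definition~\ref{def:iassign}) composed with the remainder, and finally composed with $accept.i \then \Skip$. Here the key propagation is the precondition. The second indexed assignment $\assigniC{accts(j)}{accts(j) + n}$ carries the obligation $j \in \dom(accts)$, and when pulled back through the first assignment $\assigniC{accts(i)}{accts(i) - n}$ via the reactive weakest precondition, Theorem~\ref{thm:rwp-assigns} rewrites it as $j \in \dom(accts(i \mapsto accts(i) - n))$, which coincides with $j \in \dom(accts)$ once $i \in \dom(accts)$ is known. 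The conditional law in Theorem~\ref{thm:rdes-comp} then combines the two branches, distributing the guard through the pre-, peri- and postconditions.

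Finally I would precompose the resulting contract with $pay.i.j.n \then \Skip$ by another application of \eqref{thm:rdes-seq}. The prefix has postcondition $\state' = \state \land \trace = \langle pay.i.j.n \rangle$, so the composition performs the substitution $\trace \mapsto \trace \tcat \langle pay.i.j.n \rangle$ on the inner contract via $\wpR$ and relational sequential composition, which accounts for the $\langle pay.i.j.n \rangle \le \trace$ premise that appears in $P_1$ and for the three disjuncts of $P_2$ corresponding to: waiting to perform $pay.i.j.n$, having performed $pay.i.j.n$ and now offering $reject.i$, and having performed $pay.i.j.n$ and now offering $accept.i$.

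The main obstacle is bookkeeping rather than conceptual: each application of \eqref{thm:rdes-seq} yields a weakest-precondition conjunct that must be simplified using the $\wpR$ laws of Section~\ref{sec:rrel} (in particular \eqref{thm:rwp-assigns}), and the conditional and disjunctive structure must be carried through without losing the dependence on $accts$, which is itself being updated. In practice this is exactly the kind of symbolic manipulation that Isabelle/UTP's reactive-design tactics (discussed in Section~\ref{sec:mech}) are designed to automate, and the theorem is proved by driving that tactic on the definition of $Pay(i,j,n)$ and then closing the side-conditions by the relational calculus simplifier.
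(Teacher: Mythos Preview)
Your proposal is correct and aligns with the paper's approach: the paper states that this theorem, like all others, is proved mechanically by the Isabelle/UTP reactive-design tactics rather than by hand. Your outline of the underlying calculational steps (unfolding the prefix via Example~\ref{ex:prefix}, applying \eqref{thm:rdes-seq} through the indexed assignments and conditional, and simplifying with the $\wpR$ laws) accurately reflects what those tactics automate, and your closing remark that the proof is discharged by driving the tactic on the definition of $Pay(i,j,n)$ is exactly what the paper does.
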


\noindent Theorem~\ref{thm:mondexcontract} gives the result of the calculation of the precondition $P_1$, pericondition
$P_2$, and postcondition $P_3$ of the implementation contract for the $Pay$ action. Intuitively, this complex contract
symbolically characterises the potentially infinite state space and possible transitions that the $Pay$ action can make,
depending on the initial state. The precondition specifies the circumstances under which behaviour is predictable, the
pericondition specifies the possible quiescent observation, and the postcondition specifies the terminating observations.

Precondition $P_1$ requires that, if the $pay.i.j.n$ event occurs, that is, it is at the head of the trace, and the
conditions for a valid transfer are all satisfied by the state, then it must be the case that card $j$ exists in the
state space. This precondition arises directly from the indexed assignment: its violation leads to unpredictable
behaviour. Thus, if the transfer is not offered or the payment is not valid then this assignment is not reached, and so
the precondition precisely identifies the state in which divergence is possible. We alternatively could remove this
precondition altogether by altering the definition of $Pay$ so that if $j \notin \dom(accts)$ then also a $reject$ event
is issued. However, for illustration purposes we leave the precondition in place. This precondition is a reactive
condition because it only restricts a prefix of the trace to the left of the implication and only refers to initial
state variables.

The pericondition, $P_2$, specifies the three possible intermediate observations for the action. Firstly, we have the
scenario in which nothing has happened yet, so the trace is empty and the $pay.i.j.n$ is being offered -- it is not
being refused. Secondly, it is possible that the transfer request occurred, and so the trace has a singleton event, but
one of the conditions for a valid transfer is violated, and thus the $reject.i$ event is being offered. Thirdly, we can
have that a valid transfer request occurred and so the $accept.i$ event is being offered. At this point the state update
has happened internally, but it cannot yet be observed as the action is still in an intermediate state.

The postcondition, $P_3$, specifies two possible final states, one for an invalid transfer request, in which case the
state remains the same, and one for a valid transfer, in which case the two balances are updated. In both cases the
trace is updated with two events, and no refusals are recorded since we have terminated.

The next step in verification is to specify some holistic properties of our system that we would like to show. We chose
three properties: (1) there is no increase or decrease in the overall balance across cards, (2) no overdrafts on the
card balances are permitted, and (3) if a valid transfer request is made it must be executed. It is not difficult to see
that these properties hold, but the purpose is to show how they can specified and verified using reactive contracts.

Each of these properties is assigned a contract which $Pay(i,j,n)$ must refine. We therefore demonstrate the three
properties as three corresponding theorems, which have been discharged in Isabelle/UTP using the \textsf{rdes-refine}
tactic (see Section~\ref{sec:mech}), which employs Theorem~\ref{thm:rdesrefine} and the contract calculation laws. For
the purpose of illustration, we give high-level informal proofs that correspond to the mechanised proofs.

\begin{theorem}[No Increase or Decrease in Value] A payment between card $i$ and $j$, where $\{i, j\} \subseteq cs$ and
  $i \neq j$, does not lead to an overall change in balance for the system of cards. Formally, as a reactive
  contract: $$\rc{\dom(accts) = cs}{\truer}{sum(accts) = sum(accts')} \refinedby Pay(i,j,n)$$ where $sum$ is a function
  that sums up the range of the given partial function. We set the pericondition $\truer$ as we do not need to constrain
  quiescent states in this specification. \isalink{https://github.com/isabelle-utp/utp-main/blob/90ec1d65d63e91a69fbfeeafe69bd7d67f753a47/tutorial/utp_csp_mini_mondex.thy\#L149}
\end{theorem}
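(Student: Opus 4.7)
The plan is to invoke Theorem~\ref{thm:rdesrefine}, which reduces the contract refinement to three implications on the pre-, peri-, and postconditions, and then instantiate them using the symbolic contract for $Pay(i,j,n)$ provided by Theorem~\ref{thm:mondexcontract}. This is exactly the route automated by the \textsf{rdes-refine} tactic. Writing $P_1 = (\dom(accts) = cs)$, $P_2 = \truer$, $P_3 = (sum(accts) = sum(accts'))$ for the specification components, and $Q_1, Q_2, Q_3$ for the pre-, peri-, postconditions calculated in Theorem~\ref{thm:mondexcontract}, the obligation splits into $[P_1 \implies Q_1]$, $P_2 \refinedby (Q_2 \land P_1)$, and $P_3 \refinedby (Q_3 \land P_1)$.

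The key fact to extract up front is that, under the hypotheses $\{i,j\} \subseteq cs$, $i \neq j$, and $\dom(accts) = cs$, we have $\{i,j\} \subseteq \dom(accts)$. This single observation powers all three obligations.

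For the precondition obligation, $Q_1$ is the implication whose only non-trivial consequent is $j \in \dom(accts)$, which follows immediately from $j \in cs = \dom(accts)$. The pericondition obligation is discharged vacuously, since $P_2 = \truer$ is the weakest reactive condition. The postcondition obligation is handled by case analysis on the two disjuncts of $Q_3$: the reject branch establishes $\state' = \state$, hence $accts' = accts$ and $sum(accts') = sum(accts)$ trivially; the accept branch establishes $accts' = accts(i \mapsto accts(i) - n,\, j \mapsto accts(j) + n)$, and we must show the sum is preserved.

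The main obstacle is the accept-branch arithmetic. Showing $sum(accts') = sum(accts)$ after two pointwise updates that cancel out relies crucially on $i \neq j$ (so the updates do not overwrite one another) and on $\{i,j\} \subseteq \dom(accts)$ (so that each update modifies an existing entry rather than extending the domain, in which case $n$ would appear asymmetrically in the sum). With these in place, the claim reduces to a standard lemma about summing a finite partial function under independent pointwise updates, which Isabelle dispatches automatically via \textsf{sledgehammer}. No difficulty arises from trace or refusal components, since the specification postcondition constrains only $accts$ and $accts'$.
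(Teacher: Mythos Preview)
Your proposal is correct and follows essentially the same route as the paper: apply Theorem~\ref{thm:rdesrefine} to obtain the three obligations, discharge the precondition from $\{i,j\}\subseteq cs=\dom(accts)$, dismiss the pericondition trivially, and handle the postcondition by case analysis on the reject/accept disjuncts, with the accept branch reducing to $sum(accts)-n+n=sum(accts)$. Your explicit flagging of why $i\neq j$ and $\{i,j\}\subseteq\dom(accts)$ are needed for the pointwise-update sum lemma is a helpful elaboration the paper leaves implicit.
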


\begin{proof}
  We first apply Theorem~\ref{thm:rdesrefine} to split the refinement into three proof obligations. We need to show that
  the precondition is weakened, and the peri- and postconditions are strengthened, which we tackle one at a time:
  \begin{enumerate}
    \item $\dom(accts) = cs \implies P_1$
    \item $P_2 \land \dom(accts) = cs \implies \truer$
    \item $P_3 \land \dom(accts) = cs \implies sum(accts) = sum(accts')$
  \end{enumerate}
  Case (1) follows because $i, j \in cs$ and since the domain of $accts$ is $cs$, then clearly both $i, j \in
  \dom(accts)$. Case (2) follows trivially. Case (3) requires that we consider both cases of postcondition $P_3$. If
  the payment request is invalid, then $accts' = accts$ since $\state' = \state$, and thus clearly $sum(accts) =
  sum(accts')$. If the payment request is valid, then we need to show that $sum(accts) = sum(accts(i \mapsto accts(i) - n, j \mapsto
  accts(j) + n))$. This is equal to $sum(accts) - n + n = sum(accts)$, and so we are done.
\end{proof}

\begin{theorem}[No overdrafts] No card $i$ is permitted to have a negative balance: \isalink{https://github.com/isabelle-utp/utp-main/blob/90ec1d65d63e91a69fbfeeafe69bd7d67f753a47/tutorial/utp_csp_mini_mondex.thy\#L192} $$\rc{\dom(accts) = cs}{\truer}{\forall k \in cs @ accts(k) \ge 0 \implies accts'(k) \ge 0} \refinedby Pay(i,j,n)$$

\end{theorem}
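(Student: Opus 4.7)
The plan is to mirror the proof of the preceding no-change-of-value theorem almost verbatim, differing only in how the postcondition obligation is discharged. I would begin by applying Theorem~\ref{thm:rdesrefine} to reduce the refinement conjecture to three implications, using the explicit contract calculated in Theorem~\ref{thm:mondexcontract}: (1) $\dom(accts) = cs \implies P_1$, (2) $P_2 \land \dom(accts) = cs \implies \truer$, and (3) $P_3 \land \dom(accts) = cs \implies (\forall k \in cs @ accts(k) \ge 0 \implies accts'(k) \ge 0)$.

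The first two obligations require essentially no work. Obligation (1) holds because $\{i,j\} \subseteq cs$ together with $\dom(accts) = cs$ gives $j \in \dom(accts)$, which is exactly the consequent of $P_1$. Obligation (2) is trivial because $\truer$ is the weakest reactive condition.

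For obligation (3), I would case split on the two disjuncts of $P_3$. In the rejection disjunct, $\state' = \state$ immediately yields $accts' = accts$, and the conclusion follows. In the successful-transfer disjunct, the valid-transfer side conditions $i \neq j$, $n \ge 0$, and $n \le accts(i)$ are available (these are the negations of the rejection guards in the definition of $Pay$), and $accts' = accts(i \mapsto accts(i) - n,\, j \mapsto accts(j) + n)$. Fixing $k \in cs$ with $accts(k) \ge 0$, a three-way split on whether $k = i$, $k = j$, or $k \notin \{i, j\}$ suffices: the first case follows from $n \le accts(i)$, so $accts'(i) = accts(i) - n \ge 0$; the second from $n \ge 0$ together with $accts(j) \ge 0$, so $accts'(j) = accts(j) + n \ge 0$; and the third is immediate since $accts'(k) = accts(k)$.

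I do not anticipate a genuine obstacle; the argument is elementary arithmetic once Theorem~\ref{thm:rdesrefine} strips away the reactive wrapping. The only delicate point is the case for $k = j$, which relies on $accts(j) \ge 0$ being supplied by the universally quantified hypothesis in the specification. This is also what justifies stating the guarantee as a conditional non-negativity preservation, $accts(k) \ge 0 \implies accts'(k) \ge 0$, rather than as an unconditional non-negativity claim: our contract does not rule out starting from a state in which some card is already overdrawn, but guarantees no initially solvent card will become overdrawn.
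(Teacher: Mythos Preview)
Your proposal is correct and follows essentially the same approach as the paper: reduce via Theorem~\ref{thm:rdesrefine}, reuse the pre- and pericondition arguments from the previous theorem, and discharge the postcondition by a three-way case split on $k$ in the successful-transfer branch. Your treatment is in fact slightly more explicit than the paper's, since you spell out the rejection-branch case and the reliance on the hypothesis $accts(j) \ge 0$ for $k = j$, both of which the paper leaves implicit.
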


\begin{proof}
  The argument for the pre and periconditions is the same as in the previous theorem.  For the postcondition we need to
  show that after a valid payment the balance of any valid card $k$ is not less than 0, that is, $accts(k) \ge 0$. The
  key property to prove here is $(accts(i \mapsto accts(i) - n, j \mapsto accts(j) + n))(k) \ge 0$. We can do this by
  case analysis: $k = i$, $k = j$, and $k \neq i \land k \neq j$. In the former two cases, the fact that the transfer
  happens indicates that the balances following must be no less than zero. In the final case, the balance remains the same,
  and so we are done.
\end{proof}

\begin{theorem}[Transfer Acceptance] If a payment is initiated and we have enough money in the account, then the
  transfer is not rejected: \isalink{https://github.com/isabelle-utp/utp-main/blob/90ec1d65d63e91a69fbfeeafe69bd7d67f753a47/tutorial/utp_csp_mini_mondex.thy\#L208} $$\rc{\dom(accts) = cs}{\begin{array}{c}\trace \neq \langle\rangle \land last(\trace) = pay.i.j.k \land \\ n
    \le accts(i) \implies accept.i \notin ref' \end{array}}{\truer}$$
\end{theorem}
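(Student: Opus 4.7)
The plan is to mirror the structure of the two preceding refinement proofs: apply Theorem~\ref{thm:rdesrefine} to reduce the contractual refinement to three separate implications over the pre-, peri-, and postconditions of $Pay(i,j,n)$ as given in Theorem~\ref{thm:mondexcontract}. Concretely, writing $Pay(i,j,n) = \rc{P_1}{P_2}{P_3}$, the three obligations are
\begin{enumerate}
  \item $\dom(accts) = cs \implies P_1$,
  \item $P_2 \land \dom(accts) = cs \implies (\trace \neq \langle\rangle \land last(\trace) = pay.i.j.n \land n \le accts(i) \implies accept.i \notin \refu')$,
  \item $P_3 \land \dom(accts) = cs \implies \truer$.
\end{enumerate}

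Obligations (1) and (3) are discharged as in the previous two theorems: (1) follows from $i,j \in cs$ together with $\dom(accts) = cs$, which gives $j \in \dom(accts)$ and hence trivially implies $P_1$; and (3) is immediate since $\truer$ is the weakest reactive condition. Both reuse arguments essentially identical to those already given for the balance-preservation and no-overdraft theorems.

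The substantive work is obligation (2), which I would handle by case analysis on the three disjuncts of $P_2$. In the first disjunct, $\trace = \langle\rangle$, the premise $\trace \neq \langle\rangle$ of the specification implication is false and the implication holds vacuously. In the second (rejection) disjunct we have $\trace = \langle pay.i.j.n \rangle$ together with $(i = j \lor n < 0 \lor n > accts(i))$; combining this with the hypothesis $n \le accts(i)$ and with the fact, recoverable from the precondition-style reasoning and from $\dom(accts) = cs$, that $i \ne j$ and $n \ge 0$ hold in the valid transfer branch, we obtain a contradiction and the implication is vacuous again. In the third (acceptance) disjunct we have $last(\trace) = pay.i.j.n$ directly together with $accept.i \notin \refu'$ as a conjunct, so the conclusion of the specification's pericondition holds outright.

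The main obstacle I foresee is the reject-branch case: it requires carefully aligning the disjunctive condition recorded in $P_2$ with the numerical hypothesis $n \le accts(i)$ to derive a contradiction, and doing so in a form that Isabelle's automation can discharge via the \textsf{rdes-refine} tactic. Once the cases on $P_2$ are cleanly separated, each goal is routine linear-arithmetic and list reasoning, so I would expect \textsf{sledgehammer}-assisted tactics to close them once the refinement has been split by Theorem~\ref{thm:rdesrefine} and the contract of $Pay$ substituted from Theorem~\ref{thm:mondexcontract}.
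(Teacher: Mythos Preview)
Your approach is essentially the same as the paper's: apply Theorem~\ref{thm:rdesrefine}, dispatch the precondition and postcondition obligations as in the two preceding theorems, and handle the pericondition obligation by case analysis on the three disjuncts of $P_2$. The paper's proof is in fact briefer than yours, saying only that the result ``can be achieved by case analysis on pericondition $P_2$''; one small point to tighten is that in your reject-branch argument the facts $i \neq j$ (and any positivity of $n$) are not ``recoverable from $\dom(accts) = cs$'' but are side hypotheses inherited from the surrounding context, exactly as in the earlier two theorems.
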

\begin{proof}
  This property is different as it involves the pericondition rather than the postcondition. This is because we are
  reasoning about offered events in intermediate observations. We need to show in the pericondition that if the last
  event to occur is $pay.i.j.n$, and a sufficient amount is in account $i$ then we must not refuse to the accept the
  payment. This can be achieved by case analysis on pericondition $P_2$.
\end{proof}

\noindent Thus we have shown how our design contracts can be used to verify properties of simple \Circus actions. In the
next section we explore the UTP theory's healthiness conditions behind our contracts --- which provides support for this
verification.

\section{Theory of Generalised Reactive Designs}
\label{sec:grd}
In this section, we present our UTP theory of reactive design contracts in detail. We describe the healthiness
conditions, core signature definitions, and algebraic laws, which substantiate those given in
Section~\ref{sec:contracts}. In particular, we define healthiness conditions for two UTP theories: $\healthy{SRD}$ in
Section~\ref{sec:srd}, which is a recasting of the previous reactive design theory from~\cite{Oliveira&09}, and
$\healthy{NSRD}$ in Section~\ref{sec:nsrd}, a novel healthiness condition that refines $\healthy{SRD}$ with additional
constraints to support reactive preconditions and invisible intermediate states. This latter healthiness condition is
the foundation of our reactive contract theory. We consider the algebraic law this theory supports, and the restrictions
it places on expressible operators. In Section~\ref{sec:recurse}, we consider the formalisation of recursion, and show
how Kleene's fixed-point theorem~\cite{Lassez82} can be applied to calculate tail recursive reactive designs. Finally in
Section~\ref{sec:parallel}, we detail our results in the formalisation of parallel composition.

\subsection{UTP Theory Preliminaries}

The theorems presented in Section~\ref{sec:contracts}, present reactive contracts using the syntactic form of
$\rc{P_1}{P_2}{P_3}$. It is often inconvenient to rely on such a specific syntactic form to reason about contracts, for
example when stating algebraic theorems. It is much more convenient to a reactive contract $P$ as a relation that admits
certain properties. Consequently, as usual in the UTP approach, we define healthiness conditions that characterise
well-formed contract predicates. This allows us to obtain a large number of algebraic laws from lattice theory and
related domains, and also allows us to reason about contracts without the need for the syntactic form.

In order to reason about contracts as opaque relations, we need a way of extracting the three parts from a contract. We
therefore define the following three functions:

\begin{definition}[Pre-, Peri-, and Postcondition Extraction Functions] \label{def:preperipost} \isalink{https://github.com/isabelle-utp/utp-main/blob/90ec1d65d63e91a69fbfeeafe69bd7d67f753a47/theories/rea_designs/utp_rdes_triples.thy\#L131}
  \begin{align*}
    \preR{P} &~~\defs~~ \negr P[true,false,false/ok,ok',wait] \\
    \periR{P} &~~\defs~~ P[true,true,false,true/ok,ok',wait,wait'] \\
    \postR{P} &~~\defs~~ P[true,true,false,false/ok,ok',wait,wait']
  \end{align*}
\end{definition}

\noindent These three functions variously substitute the observational variables to obtain the respective characteristic
predicates. In order to illustrate this, we first expand out the definition of reactive contract from
Definition~\ref{def:rcontract}, and also using Definitions~\ref{def:design} and \ref{def:reahealths}:
$$\rc{P_1}{P_2}{P_3} = \healthy{R1}(\healthy{R2}_c(\conditional{\IIsrd}{wait}{((ok \land P_1) \implies (ok' \land (\conditional{P_2}{wait'}{P_3})))}))$$
\noindent From this, we can see that the crucial variables that dictate the various behaviours are $wait$, $ok$, $ok'$,
and $wait'$. Variable $wait$ flags whether a sequential predecessor is in an intermediate state or has terminated, $ok$
flags whether a predecessor diverged, and $ok'$ and $wait'$ record this information for the present relation.

The precondition ($P_1$) is $\false$ when a reactive design was started ($ok \land \neg wait$) but it
diverged ($\neg ok'$). Thus, to extract the precondition we set $ok$, $ok'$, and $wait$ to $true$, $false$, and $false$,
respectively, and negate the result using reactive negation. The pericondition is reached when the reactive design was
started and did not diverge ($ok'$), but has not yet reached its final state ($\neg wait'$), and is thus
intermediate. The postcondition is similarly obtained, but $wait'$ becomes $false$. With these definitions we can prove
theorems that allow us to extract the constituent reactive relations from a syntactic reactive contract.

\begin{theorem}[Extracting Pre, Peri, and Postconditions] \label{thm:exttrip} \isalink{https://github.com/isabelle-utp/utp-main/blob/90ec1d65d63e91a69fbfeeafe69bd7d67f753a47/theories/rea_designs/utp_rdes_triples.thy\#L387}
  \begin{align*}
    \preR{\rc{P_1}{P_2}{P_3}} &~~=~~ P_1  \\
    \periR{\rc{P_1}{P_2}{P_3}} &~~=~~ P_1 \rimplies P_2 \\
    \postR{\rc{P_1}{P_2}{P_3}} &~~=~~ P_1 \rimplies P_3 
  \end{align*}
  Provided $P_1$, $Q_1$, and $Q_2$ are all $\healthy{RR}$ healthy.
\end{theorem}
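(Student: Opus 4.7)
The plan is to proceed by direct calculation. First, I would unfold the right-hand side using Definition~\ref{def:rcontract} together with Definitions~\ref{def:design} and \ref{def:reahealths}, obtaining the explicit form
\[
\rc{P_1}{P_2}{P_3} = \healthy{R1}(\healthy{R2}_c(\conditional{\IIsrd}{wait}{((ok \land P_1) \implies (ok' \land (\conditional{P_2}{wait'}{P_3})))}))
\]
as already displayed in the text preceding the theorem. The substitutions in Definition~\ref{def:preperipost} only touch $ok$, $ok'$, $wait$, and (for peri/post) $wait'$, and since $\healthy{R1}$ and $\healthy{R2}_c$ only act on $tr$ and $tr'$, the substitutions commute with both healthiness wrappers. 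In each of the three cases, setting $wait = false$ collapses the outer conditional to its right branch, eliminating any dependency on $\IIsrd$.

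For the pericondition and postcondition, the substitutions $ok, ok' := true, true$ reduce the design to $P_1 \implies (\conditional{P_2}{wait'}{P_3})$, and then $wait' = true$ selects $P_2$ while $wait' = false$ selects $P_3$. It then remains to show that $\healthy{R1}(\healthy{R2}_c(P_1 \implies P_i)) = P_1 \rimplies P_i$ for $i \in \{2,3\}$. Rewriting the classical implication as $\neg P_1 \lor P_i$, distributing $\healthy{R1}$ through disjunction yields $\healthy{R1}(\neg P_1) \lor \healthy{R1}(P_i) = \negr P_1 \lor P_i$; and since $P_1, P_i \is \healthy{RR}$ they are already $\healthy{R1}$-$\healthy{R2}_c$-invariant, so $\healthy{R2}_c$ can be absorbed after observing that it distributes through disjunction and commutes with $\neg$ (i.e., $\healthy{R2}_c(\neg P) = \neg \healthy{R2}_c(P)$, which follows directly from the conditional form of $\healthy{R2}_c$). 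This yields $\negr P_1 \lor P_i$, which is by definition $P_1 \rimplies P_i$.

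For the precondition, the substitutions $ok, ok' := true, false$ reduce the design to simply $\neg P_1$. The extraction applies one further reactive negation, giving $\negr(\healthy{R1}(\healthy{R2}_c(\neg P_1)))$. Expanding $\negr$ and pushing the outer negation inward using the identity $\healthy{R2}_c(\neg P) = \neg \healthy{R2}_c(P)$ (and the fact that $\healthy{R1}(\neg \healthy{R1}(Q)) = \healthy{R1}(\neg Q)$), this collapses to $\healthy{R1}(\healthy{R2}_c(P_1))$, which equals $P_1$ by the assumed $\healthy{RR}$-healthiness.

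The main obstacle I anticipate is the precondition calculation: the double reactive negation and the interplay between $\healthy{R1}$, $\healthy{R2}_c$, and classical negation have to be tracked carefully, and in particular the identity $\healthy{R2}_c(\neg P) = \neg \healthy{R2}_c(P)$ (together with idempotence of $\healthy{R1}$ under its negation pattern) is the crucial algebraic fact that makes the wrappers cancel. Once those lemmas are in hand, the remaining work is bookkeeping: commuting substitutions past $\healthy{R1}$ and $\healthy{R2}_c$, collapsing conditionals whose guards are substituted to constants, and applying $\healthy{RR}$-healthiness to strip the wrappers at the end.
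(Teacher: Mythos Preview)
Your approach is correct and matches the paper's treatment. The paper does not spell out a proof in the text; it presents the expanded form of $\rc{P_1}{P_2}{P_3}$, explains informally which substitutions extract which component, and defers the formal argument to the Isabelle mechanisation. Your calculation is precisely the one that mechanisation carries out: commute the $ok/ok'/wait/wait'$ substitutions past $\healthy{R1}$ and $\healthy{R2}_c$, collapse the conditionals, and then use $\healthy{RR}$-healthiness together with the identities $\healthy{R2}_c(\neg P) = \neg\,\healthy{R2}_c(P)$ and $\healthy{R1}(\neg\,\healthy{R1}(Q)) = \healthy{R1}(\neg Q)$ to strip the wrappers. There is nothing to add.
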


\noindent The pericondition and postcondition are viewed through the prism of the precondition being satisfied, which is
why the implications are present in the result above. This is an important assumption of reactive designs. Although it
is possible to specify behaviours in the peri and postcondition outside the precondition, once the contract is
constructed these behaviours are pruned, so that the behaviour of a contract when its precondition is violated is always
$\Chaos$.

\subsection{Stateful Reactive Designs}
\label{sec:srd}

In this section, we introduce the first of our two reactive design theories: stateful reactive designs. Motivated by the
previous work on CSP and \Circus in UTP~\cite{Hoare&98,Cavalcanti&06,Oliveira&09}, we define the following healthiness
conditions for our generalised theory of reactive designs. We generalise the previous
works~\cite{Hoare&98,Cavalcanti&06,Oliveira&09} due to the underlying abstract trace algebra, and the extensible
alphabet of our theory~\cite{Foster17a}.

\begin{definition}[Stateful Reactive Designs Healthiness Conditions] \label{def:rdhcond} \isalink{https://github.com/isabelle-utp/utp-main/blob/90ec1d65d63e91a69fbfeeafe69bd7d67f753a47/theories/rea_designs/utp_rdes_healths.thy\#L401}
\begin{align*}
  \healthy{RD1}(P) &~~\defs~~ ok \rimplies P \\
  \healthy{RD2}(P) &~~\defs~~ P \relsemi \ckey{J} \\
  \healthy{SRD}(P) &~~\defs~~ \healthy{RD1} \circ \healthy{RD2} \circ \healthy{R}_s
\end{align*}
\end{definition}

\noindent $\healthy{RD1}$ and $\healthy{RD2}$ are analogous to $\healthy{H1}$ and $\healthy{H2}$ from the theory of
designs. Moreover, $\healthy{RD1}$ and $\healthy{RD2}$ correspond to $\healthy{CSP1}$ and $\healthy{CSP2}$ from the
theories of CSP~\cite{Hoare&98,Cavalcanti&06} and \Circus~\cite{Oliveira&09}. We rename them, firstly because our theory
has a different alphabet founded upon our trace algebra, and secondly because we do not specify the corresponding
$\healthy{CSP3}$ and $\healthy{CSP4}$, which constrain $ref$ and are thus specific to CSP and \Circus.

$\healthy{RD1}$ states, like $\healthy{H1}$, that observations are possible only after initiation, indicated by
$ok$. However, unlike $\healthy{H1}$, if $ok$ is false the resulting predicate is not $\true$, but $\truer$; that is, the
trace must monotonically increase, but the behaviour is otherwise unpredictable. $\healthy{RD2}$ is identical to
$\healthy{H2}$ and thus also $\healthy{CSP2}$.

Our overall healthiness condition for stateful reactive designs is then called $\healthy{SRD}$, which includes
$\healthy{RD1}$, $\healthy{RD2}$, and $\Rs$. Like $\healthy{RD1}$ and $\healthy{RD2}$, it is a generalisation of the
overall \Circus healthiness condition $\healthy{CSP}$~\cite{Oliveira&09}. We can next prove that $\healthy{SRD}$
relations admit a certain syntactic formulation using reactive contracts, as shown in the following theorem.

\begin{theorem}[Stateful Reactive Design Formulation] \label{thm:rdesform} \isalink{https://github.com/isabelle-utp/utp-main/blob/90ec1d65d63e91a69fbfeeafe69bd7d67f753a47/theories/rea_designs/utp_rdes_healths.thy\#L401}
  \begin{align*}
    \healthy{SRD}(P) &= \rc{\preR{P}}{\periR{P}}{\postR{P}}
  \end{align*}
\end{theorem}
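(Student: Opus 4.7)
The plan is to unfold the definition $\healthy{SRD}(P) = \healthy{RD1}(\healthy{RD2}(\healthy{R}_s(P)))$ and reorganise the result into the syntactic form $\rc{P_1}{P_2}{P_3} = \healthy{R}_s(\design{P_1}{P_2 \wcond P_3})$. The key insight is that $\healthy{RD1}$ and $\healthy{RD2}$ play the same structural roles as the design healthiness conditions $\healthy{H1}$ and $\healthy{H2}$ from Section~\ref{sec:designs}, so the classical factorisation theorem $\healthy{H1}(\healthy{H2}(Q)) = \design{\neg Q^f}{Q^t}$ should have a reactive analogue that I would establish first and then exploit.

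Concretely, I would first prove the auxiliary lemma that any $\healthy{RD1} \circ \healthy{RD2}$-healthy predicate $Q$ can be rewritten as a design whose precondition is $\negr Q[true,false/ok,ok']$ and whose postcondition is $Q[true,true/ok,ok']$; this is the reactive counterpart of the classical design result, differing only in the use of reactive negation to remain within $\healthy{R1}$. Next I would instantiate this with $Q = \healthy{R}_s(P)$, so that $\healthy{SRD}(P)$ acquires the form $\healthy{R}_s(\design{A}{B})$ for specific $A$ and $B$ read off from $\healthy{R}_s(P)$. Then I would split $B$ by the tautology $B = \conditional{B[true/wait']}{wait'}{B[false/wait']}$ to recover the $\wcond$ structure expected by Definition~\ref{def:rcontract}. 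Finally I would verify by direct substitution that $A$, $B[true/wait']$, and $B[false/wait']$ match $\preR{P}$, $\periR{P}$, and $\postR{P}$ respectively, exploiting the fact that the additional $wait = false$ substitution in the extraction functions selects exactly the non-waiting branch of $\healthy{R3}_h$, while the waiting branch is absorbed by the surrounding $\healthy{R}_s$ wrapper when the contract is reassembled.

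The main obstacle will be the careful bookkeeping around how $\healthy{R}_s$ --- particularly its $\healthy{R3}_h$ component, which case-splits on $wait$ --- interacts with the substitutions for $ok$, $ok'$, and $wait$ used in defining $\preR$, $\periR$, and $\postR$. A secondary technical point is confirming that the extracted components satisfy the healthiness conditions required by Definition~\ref{def:rcontract}, namely that $\preR{P}$ is $\healthy{RC}$ and that $\periR{P}$ and $\postR{P}$ are $\healthy{RR}$, so that the right-hand side of the stated equality is a legitimate reactive contract. These closure properties should follow by routine calculation from Definitions~\ref{def:RR} and \ref{def:RC} together with the shape of $\healthy{R}_s$ after the distinguished substitutions have been performed.
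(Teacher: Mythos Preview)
Your overall strategy is sound and is essentially the standard route one would take: exploit the reactive analogue of the classical factorisation $\healthy{H1}(\healthy{H2}(Q)) = (\design{\neg Q^f}{Q^t})$, then case-split on $wait'$ to expose the $\wcond$ structure, and finally identify the three pieces with the extraction functions of Definition~\ref{def:preperipost}. The paper does not spell out a proof (it is discharged mechanically in Isabelle/UTP), so there is no detailed comparison to make, but your decomposition is the natural one.

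However, your ``secondary technical point'' contains a genuine misconception that would cause trouble if you pursued it. You write that you must confirm $\preR{P}$ is $\healthy{RC}$ so that the right-hand side is a legitimate contract in the sense of Definition~\ref{def:rcontract}. For a general $\healthy{SRD}$ predicate this is \emph{false}: the precondition of an $\healthy{SRD}$ relation is only $\healthy{RR}$, not $\healthy{RC}$ (see the corollary immediately following Theorem~\ref{thm:rdesform}). The strengthening to $\healthy{RC}$ is precisely what the refined theory $\healthy{NSRD}$ adds via $\healthy{RD3}$; compare Theorem~\ref{thm:nrdesform}, where $\healthy{RC1}$ is explicitly applied to $\preR{P}$. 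The side conditions in Definition~\ref{def:rcontract} describe the intended domain of the contract notation for the $\healthy{NSRD}$ theory, but the underlying abbreviation $\rc{P_1}{P_2}{P_3} = \Rs(\design{P_1}{P_2 \wcond P_3})$ is meaningful for arbitrary $\healthy{RR}$ components, which is how it is being used here (and in Corollary~\ref{thm:rcclosure}). So drop the attempt to establish $\healthy{RC}$ for $\preR{P}$; it is neither needed for the equality nor provable at this level of generality.
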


\noindent If a relation is $\healthy{SRD}$ healthy, then it takes the form of a reactive contract with a pre-, peri-,
and postcondition. Theorem~\ref{thm:rdesform} thus allows us to take an $\healthy{SRD}$ predicate and deconstruct it
into its three parts, using the functions defined in Definition~\ref{def:preperipost}, which can then be manipulated
separately. This theorem shows an equivalence between the syntactic formulation, and elements of
$\theoryset{\healthy{SRD}}$: any element of the latter can be constructed using the contract triple
notation. Conversely, a well-formed reactive design triple always yields a healthy reactive design. This is confirmed by
the following closure theorem, which is a corollary of Theorems~\ref{thm:rdesform} and \ref{thm:exttrip}.

\begin{corollary} If $P_1$, $P_2$, and $P_3$ are all $\healthy{RR}$ healthy then $\rc{P_1}{P_2}{P_3}$ is $\healthy{SRD}$
  healthy. \label{thm:rcclosure} \isalink{https://github.com/isabelle-utp/utp-main/blob/90ec1d65d63e91a69fbfeeafe69bd7d67f753a47/theories/rea_designs/utp_rdes_triples.thy\#L568}
\end{corollary}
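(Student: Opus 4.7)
The plan is to show that $\rc{P_1}{P_2}{P_3}$ is a fixed-point of $\healthy{SRD}$, using the two cited theorems almost mechanically. Specifically, I would start from $\healthy{SRD}(\rc{P_1}{P_2}{P_3})$ and apply Theorem~\ref{thm:rdesform} to rewrite it as $\rc{\preR{\rc{P_1}{P_2}{P_3}}}{\periR{\rc{P_1}{P_2}{P_3}}}{\postR{\rc{P_1}{P_2}{P_3}}}$. Then, under the hypothesis that $P_1$, $P_2$, $P_3$ are $\healthy{RR}$-healthy, Theorem~\ref{thm:exttrip} reduces this to $\rc{P_1}{P_1 \rimplies P_2}{P_1 \rimplies P_3}$.

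The remaining step is to show that this equals $\rc{P_1}{P_2}{P_3}$. For this, I would invoke Theorem~\ref{thm:rdesequiv} (Reactive Design Equivalence), instantiating $Q_1 = P_1$, $Q_2 = P_1 \rimplies P_2$, and $Q_3 = P_1 \rimplies P_3$. The first conjunct $P_1 = Q_1$ is immediate; the second reduces to showing $(P_2 \land P_1) = ((P_1 \rimplies P_2) \land P_1)$, which is a straightforward reactive-relational identity since $P_1 \land (P_1 \rimplies P_2) = P_1 \land P_2$ once reactive implication is unfolded as $\negr P_1 \lor P_2$ and $\healthy{R1}$ on $P_1$ is used to absorb the $\negr P_1$ disjunct. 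The third conjunct, involving $P_3$, is entirely analogous.

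The main potential obstacle is the last reactive-relational equality: reactive implication $\rimplies$ is not standard logical implication but is defined via reactive negation (Definition following Theorem~\ref{thm:rdesform}'s development), so one needs that $P_1 \land \negr P_1 = \false$ holds for $\healthy{RR}$-healthy $P_1$. This follows because $\negr P_1 = \healthy{R1}(\neg P_1)$, and $P_1$ being $\healthy{R1}$-healthy ensures $P_1 \land \healthy{R1}(\neg P_1) = \healthy{R1}(P_1 \land \neg P_1) = \false$. Once this is established the chain of equalities closes and the corollary drops out as a direct consequence of the two cited theorems, which is why it is stated as a corollary rather than a theorem with its own substantial proof.
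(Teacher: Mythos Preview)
Your proposal is correct and follows essentially the same route the paper indicates: it explicitly states that the corollary is obtained from Theorems~\ref{thm:rdesform} and~\ref{thm:exttrip}, which is precisely your first two steps. Your closing argument via Theorem~\ref{thm:rdesequiv} and the reactive Boolean identity $P_1 \land (\negr P_1 \lor P_2) = P_1 \land P_2$ is a clean way to finish; the same collapse can also be obtained more primitively by unfolding Definition~\ref{def:rcontract} and observing that in $(ok \land P_1) \implies (ok' \land (\cdots))$ the antecedent $P_1$ already absorbs the inner $P_1 \rimplies$, but either route works.
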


\noindent In order to show that a reactive contract is $\healthy{SRD}$, it suffices to show that its pre-, peri-, and
postcondition are all $\healthy{RR}$. A further corollary is the three triple extraction functions all produce healthy
reactive relations when applied to an $\healthy{SRD}$ relation.

\begin{corollary} If $P \is \healthy{SRD}$ then $\preR{P}$, $\periR{P}$, and $\postR{P}$ are all $\healthy{RR}$-healthy. \isalink{https://github.com/isabelle-utp/utp-main/blob/90ec1d65d63e91a69fbfeeafe69bd7d67f753a47/theories/rea_designs/utp_rdes_triples.thy\#L346}
\end{corollary}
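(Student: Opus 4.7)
My strategy is to verify the $\healthy{RR}$ healthiness condition directly from Definition~\ref{def:RR}, by unfolding the extraction operators from Definition~\ref{def:preperipost} and exploiting the structural properties imposed by $\healthy{SRD}$. Recall that $\healthy{RR}$-healthiness amounts to two requirements: being $\healthy{R1}$-$\healthy{R2}_c$-healthy, and not mentioning $ok$, $ok'$, $wait$, $wait'$.

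The first key observation is that $\healthy{R1}$ and $\healthy{R2}_c$ syntactically refer only to the trace variables $tr$ and $tr'$, so substitution of constants for any of $ok$, $ok'$, $wait$, $wait'$ commutes with each. Since $P \is \healthy{SRD}$ implies $P$ is $\healthy{R1}$- and $\healthy{R2}_c$-healthy (via $\healthy{R}_s$), substituting constants for these four observational variables preserves both healthiness conditions, and of course removes any syntactic dependence on the substituted variables. Applied to $\periR{P}$ and $\postR{P}$, whose definitions fix all four observational variables, this immediately gives $\healthy{R1}$-$\healthy{R2}_c$-healthy predicates with no remaining reference to $ok$, $ok'$, $wait$, or $wait'$; hence both are $\healthy{RR}$-healthy.

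For $\preR{P} = \negr P[true, false, false/ok, ok', wait]$, the reactive negation $\negr \cdot = \healthy{R1}(\neg \cdot)$ immediately yields $\healthy{R1}$-healthiness, and a short calculation (using that $\healthy{R1}$ and $\healthy{R2}_c$ commute and that R2c commutes with negation up to R1) shows the result is still $\healthy{R2}_c$-healthy. Neither $ok$, $ok'$, nor $wait$ appears, since those are substituted out. The only subtle point is that $wait'$ is \emph{not} substituted in Definition~\ref{def:preperipost}, so I must argue that $P[true, false, false/ok, ok', wait]$ is already independent of $wait'$ when $P$ is SRD-healthy. For this I would unfold $P = \healthy{SRD}(P)$ and trace the occurrences of $wait'$: in the design-style encoding, $wait'$ enters only through the $\conditional{P_2}{wait'}{P_3}$ subformula in the consequent of the implication, while $\healthy{R3}_h$ introduces $wait'$ only in $\IIsrd$, which is conditioned on $wait$. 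Setting $ok' = false$ collapses the consequent of the design to $\false$, and setting $wait = false$ discards the $\IIsrd$ branch of $\healthy{R3}_h$; what remains is the precondition antecedent, which by construction depends only on unprimed state, trace variables, and $tr'$ via $\trace$.

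The main obstacle is this last step: a careful syntactic argument that, after substituting $ok=true$, $ok'=false$, $wait=false$ into $\healthy{RD1} \circ \healthy{RD2} \circ \healthy{R}_s(P)$, every surviving occurrence of $wait'$ has been eliminated. This is essentially bookkeeping with the definitions, but it requires some attention because $\healthy{RD2}(Q) = Q \relsemi \ckey{J}$ could in principle introduce dependencies on primed variables via $\ckey{J}$; one has to check that under the chosen constant assignments for $ok$ and $ok'$ the $\ckey{J}$ composition collapses appropriately. Once this is established, the three extracted predicates all meet both clauses of Definition~\ref{def:RR} and the corollary follows.
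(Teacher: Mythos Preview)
The paper offers no textual proof for this corollary (it is deferred to the mechanisation), so I assess your proposal directly. Your treatment of $\periR{P}$ and $\postR{P}$ is sound: all four of $ok$, $ok'$, $wait$, $wait'$ are substituted by constants, and since $\healthy{R1}$ and $\healthy{R2}_c$ touch only $tr$, $tr'$, they commute with these substitutions; both $\healthy{R1}$--$\healthy{R2}_c$ healthiness and the absence of the four observational variables follow immediately.

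For $\preR{P}$, however, your argument is circular. You propose to unfold $P$ into its ``design-style encoding'' and observe that $wait'$ appears only in the consequent via $P_2 \wcond P_3$. But that encoding is exactly Theorem~\ref{thm:rdesform}, namely $P = \rc{\preR{P}}{\periR{P}}{\postR{P}}$, so the antecedent whose $wait'$-freeness you assert ``by construction'' is $\preR{P}$ itself---precisely the object under scrutiny. None of $\healthy{RD1}$, $\healthy{RD2}$, or $\healthy{R3}_h$ directly constrains $wait'$ (they govern $ok$, $ok'$, and $wait$ respectively), and your remark about $\ckey{J}$ does not help: $\ckey{J}$ passes $wait'$ through unchanged. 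Concretely, consider $P = \Rs(\design{\neg wait'}{\true})$. This is an $\healthy{H1}$--$\healthy{H2}$ design made reactive, and one can check directly that it is fixed by each of $\healthy{RD1}$, $\healthy{RD2}$, and $\healthy{R}_s$; yet the substitution in Definition~\ref{def:preperipost} gives $\preR{P} = \neg wait' \land \truer$, which mentions $wait'$ and is therefore not $\healthy{RR}$. Either something in the mechanised definitions excludes this case (for instance, a variant of $\preR{\cdot}$ that also eliminates $wait'$), or an argument genuinely different from syntactic bookkeeping on the triple encoding is required; the route you outline cannot close the gap as stated.
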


Having defined a candidate theory for our reactive contracts, we will now explore its algebraic properties. The class of
$\healthy{SRD}$ relations admits a number of useful identities that we outline below.

\begin{theorem}[$\healthy{SRD}$ Laws] If $P$ is $\healthy{SRD}$ healthy then \isalink{https://github.com/isabelle-utp/utp-main/blob/90ec1d65d63e91a69fbfeeafe69bd7d67f753a47/theories/rea_designs/utp_rdes_triples.thy\#L1067}
  \begin{align}
    \IIsrd \relsemi P ~~=~~& P \\
    \Chaos \relsemi P ~~=~~& \Chaos \\
    \Miracle \relsemi P ~~=~~& \Miracle
  \end{align}
\end{theorem}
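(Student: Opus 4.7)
The plan is to reduce each identity to the corresponding triple-form law (\ref{law:RD3}, \ref{law:RD6}, \ref{law:RD7}) from the Reactive Design Laws theorem by first rewriting $P$ in canonical contract form. Specifically, since $P$ is assumed to be $\healthy{SRD}$-healthy, Theorem~\ref{thm:rdesform} gives the equality $P = \rc{\preR{P}}{\periR{P}}{\postR{P}}$, and by its accompanying corollary each of $\preR{P}$, $\periR{P}$, $\postR{P}$ is $\healthy{RR}$-healthy. These are exactly the side conditions required to instantiate the previously proved laws.

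For the first identity, after rewriting $P$ via Theorem~\ref{thm:rdesform}, I would apply \ref{law:RD3} directly:
$$\IIsrd \relsemi P = \IIsrd \relsemi \rc{\preR{P}}{\periR{P}}{\postR{P}} = \rc{\preR{P}}{\periR{P}}{\postR{P}} = P.$$
Should one wish to avoid citing \ref{law:RD3}, the same result is obtainable by unfolding $\IIsrd = \rc{\truer}{\false}{\IIr}$ from Theorem~\ref{thm:rdescore} and applying the sequential composition law \eqref{thm:rdes-seq}, using $\IIr \wpR Q = Q$ and $\IIr \relsemi Q = Q$ from the reactive weakest precondition laws to simplify the resulting triple back to $\rc{\preR{P}}{\periR{P}}{\postR{P}}$.

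The second and third identities follow the same template: rewrite $P$ using Theorem~\ref{thm:rdesform} and apply \ref{law:RD7} and \ref{law:RD6} respectively. Concretely,
$$\Chaos \relsemi P = \Chaos \relsemi \rc{\preR{P}}{\periR{P}}{\postR{P}} = \Chaos,$$
and symmetrically for $\Miracle$. As a sanity check, both of these can be derived independently from \eqref{thm:rdes-seq} by unfolding $\Chaos = \rc{\false}{\false}{\false}$ and $\Miracle = \rc{\truer}{\false}{\false}$: in each case the postcondition $\false$ acts as a left zero under $\relsemi$, and for $\Chaos$ the false precondition propagates through via \ref{law:RD8}.

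There is no serious obstacle here; the work has already been done in Theorems~\ref{thm:rdesform}, \ref{thm:rdescore} and the laws RD3, RD6, RD7. The only point that must be treated carefully is the appeal to Theorem~\ref{thm:rdesform}, which is what converts the opaque $\healthy{SRD}$ assumption on $P$ into the syntactic triple form that the Reactive Design Laws require. Everything else is routine equational reasoning.
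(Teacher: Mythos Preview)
Your approach is correct and aligns with the paper's intended proof strategy. The paper does not give an explicit proof of this theorem in the text (it is mechanised in Isabelle/UTP), but the surrounding presentation makes clear that Theorem~\ref{thm:rdesform} is precisely the bridge between the opaque $\healthy{SRD}$-healthiness assumption and the syntactic triple form on which the calculational laws of Section~\ref{sec:contracts} operate; your reduction to \ref{law:RD3}, \ref{law:RD6}, \ref{law:RD7} is exactly the route the paper's structure invites.

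One minor point worth flagging: Definition~\ref{def:rcontract} formally imposes the side condition that the precondition be $\healthy{RC}$, whereas for an arbitrary $\healthy{SRD}$-healthy $P$ the corollary only guarantees $\preR{P}$ is $\healthy{RR}$. The paper itself uses the triple notation with merely $\healthy{RR}$ preconditions (as in Theorem~\ref{thm:rdesform} and Corollary~\ref{thm:rcclosure}), and your fallback derivations via \eqref{thm:rdes-seq} confirm that the specific calculations for RD3, RD6, RD7 go through regardless, since the left operand's postcondition is either $\IIr$ or $\false$ and so $\wpR$ simplifies without needing $\healthy{RC}$ on the right. So this is a presentational wrinkle rather than a genuine gap.
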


\noindent $\IIsrd$ is a left unit of any $\healthy{SRD}$ relation, and $\Chaos$ and $\Miracle$ are both left
annihilators. However, in general $\IIsrd$ is not a right unit, which the following theorem illustrates.

\begin{theorem} \label{thm:srdskip} If $P_1$, $P_2$, and $P_3$ are $\healthy{RR}$ then
  $\rc{\!P_1\!}{\!P_2\!}{\!P_3\!} \relsemi \IIsrd = \rc{\negr (\negr P_1) \relsemi \truer}{\exists \state' @ P_2}{P_3}$ \isalink{https://github.com/isabelle-utp/utp-main/blob/90ec1d65d63e91a69fbfeeafe69bd7d67f753a47/theories/rea_designs/utp_rdes_triples.thy\#L1079}
\end{theorem}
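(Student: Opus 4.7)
The plan is to proceed by direct calculation, first rewriting $\IIsrd$ in reactive design form using Theorem~\ref{thm:rdescore} as $\rc{\truer}{\false}{\IIr}$, and then applying the sequential composition law of Theorem~\ref{thm:rdes-comp}, equation~\eqref{thm:rdes-seq}, with $Q_1 = \truer$, $Q_2 = \false$, and $Q_3 = \IIr$. This reduces the problem to simplifying three predicates: the candidate precondition $P_1 \land (P_3 \wpR \truer)$, the candidate pericondition $P_2 \lor (P_3 \relsemi \false)$, and the candidate postcondition $P_3 \relsemi \IIr$.

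The postcondition is immediate: $P_3 \relsemi \IIr = P_3$, since $\IIr$ is the reactive relational identity and $P_3$ is $\healthy{RR}$-healthy. For the precondition, I would invoke the weakest precondition laws to collapse $P_3 \wpR \truer$ to $\truer$, leaving $P_1$; the stated form $\negr((\negr P_1) \relsemi \truer)$ is exactly $\healthy{RC1}(P_1)$, and the two coincide precisely when we take into account that the $\rc{\cdot}{\cdot}{\cdot}$ constructor on the right-hand side implicitly applies $\healthy{RC1}$ to the precondition via the underlying $\Rs(\design{\cdot}{\cdot})$ wrapping, so composing with $\IIsrd$ makes this normalisation explicit when $P_1$ is only assumed $\healthy{RR}$ rather than $\healthy{RC}$.

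For the pericondition, the naive application of \eqref{thm:rdes-seq} together with $P \relsemi \false = \false$ yields only $P_2$, but the claim is $\exists \state' @ P_2$. To bridge this gap, I would abandon the shortcut via Theorem~\ref{thm:rdes-comp} for this component and go back to the explicit form of $\IIsrd$ from Definition~\ref{def:reahealths}, namely $\conditional{(\conditional{(\exists st @ \II)}{wait}{\II})}{ok}{\Ronei(\true)}$. The key observation is that intermediate observations of $P \relsemi \IIsrd$ arise from $P$ ending with $wait' = true$ and $ok' = true$, at which point $\IIsrd$ fires with $wait = true$ and behaves as $\exists st @ \II$, which identifies all variables except $st$ whose successor value is thereby abstracted. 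Sequentially composing $P_2$ (read as the $wait' = true, ok' = true$ slice of the contract) with $\exists st @ \II$ in the underlying alphabetised relational calculus produces $\exists \state' @ P_2$.

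The main obstacle will be this pericondition step: it requires breaking the convenient abstraction of Theorem~\ref{thm:rdes-comp} and reasoning at the lower level of Definition~\ref{def:reahealths}, carefully tracking how the three branches of $\IIsrd$ on $ok$ and $wait$ interact with the pre-, peri-, and postcondition slices of $\rc{P_1}{P_2}{P_3}$ under sequential composition. In particular, one must check that the divergent branch $\Ronei(\true)$ of $\IIsrd$ combines with $P_1$ to yield exactly the $\healthy{RC1}$-closure appearing in the precondition (justifying the precondition transformation in a principled way, rather than by an ad hoc appeal to implicit normalisation), and that the intermediate branch correctly produces the existential over $\state'$. Once these three computations are aligned with the triple extraction functions of Definition~\ref{def:preperipost} via Theorem~\ref{thm:exttrip}, the equality follows by Theorem~\ref{thm:rdesequiv}.
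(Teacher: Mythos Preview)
Your initial plan to invoke Theorem~\ref{thm:rdes-comp}, equation~\eqref{thm:rdes-seq}, is not legitimate here and leads you into the confusion you noticed. That law is stated for the contract notation of Definition~\ref{def:rcontract}, which carries the side condition that $P_1$ is $\healthy{RC}$ and (implicitly, via the later Corollary~\ref{thm:nrcclosure}) that $P_2$ does not mention $\state'$. Theorem~\ref{thm:srdskip} deliberately assumes only $\healthy{RR}$ for all three components, precisely to exhibit what right-composition with $\IIsrd$ does when those stronger conditions fail. Indeed, the paper remarks just after Theorem~\ref{sec:rdes-seq} that the clean sequential law holds only for $\healthy{NSRD}$ predicates, and that the $\healthy{SRD}$ version ``follows the form given in Theorem~\ref{thm:srdskip}''; so the dependency runs the other way. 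Relatedly, your claim that the constructor $\rc{\cdot}{\cdot}{\cdot}$ ``implicitly applies $\healthy{RC1}$'' is false: by Definition~\ref{def:rcontract} it is just $\Rs(\design{P_1}{P_2 \wcond P_3})$, and $\Rs = \healthy{R1} \circ \healthy{R2}_c \circ \healthy{R3}_h$ does nothing of the sort to $P_1$. That is exactly why the $\healthy{RC1}$-closure must come from the composition with $\IIsrd$ itself, not from the wrapper.

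The approach in your final paragraph is the right one, and it is essentially what the paper does. The paper does not spell out a full proof (it defers to the Isabelle mechanisation), but the discussion and Example immediately following the theorem statement carry out the precondition part explicitly: one takes the $ok \land \neg ok' \land \neg wait$ slice of $P \relsemi \IIsrd$, observes that $\IIsrd[false/ok] = \truer$ by $\healthy{RD1}$, and obtains $(\negr P_1) \relsemi \truer$ as the divergent behaviour, whose reactive negation gives the stated precondition. Your argument for the pericondition --- that $P$ ending with $wait' = true$ feeds $\IIsrd$ with $wait = true$, where it behaves as $\exists st @ \II$ by $\healthy{R3}_h$, yielding $\exists \state' @ P_2$ --- is exactly the mechanism the paper attributes to $\healthy{R3}_h$ in the paragraph after the theorem. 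So: drop the first two paragraphs of your plan and execute the third directly, computing each of $\preR{\cdot}$, $\periR{\cdot}$, $\postR{\cdot}$ on $\rc{P_1}{P_2}{P_3} \relsemi \IIsrd$ from Definitions~\ref{def:reahealths} and~\ref{def:preperipost}.
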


\noindent Right composition with $\IIsrd$ effectively imposes two additional requirements on the reactive contract: (1)
that the negated precondition does not refer to dashed variables other than $tr'$, which must be extension closed --- it
is $\healthy{RC1}$ (cf. Definition~\ref{def:RC}); and (2) that the pericondition does not refer to $\state'$, as
characterised by the existential quantifier. The latter follows as a direct consequence of $\healthy{R3}_h$~\cite{BGW09}
(cf. Definition~\ref{def:reahealths}), which requires that $\state$ must not be restricted in an intermediate
state. Thus, composition of $P$ with $\IIsrd$ has the same effect on $\state'$ in $P$ since $\IIsrd$ ignores any
intermediate states ($wait' = true$), rather than passing them forward to a successor process.

The former restriction is a direct consequence of healthiness condition $\healthy{RD1}$. To explain why, we consider the
reactive design $P \relsemi \IIsrd$ in the situation when $P$ has started ($ok \land wait$), but diverged ($\neg
ok'$). The following calculation shows the result.

\begin{example}[Divergence and $\healthy{RD1}$]
\begin{align*}
  &\quad~ (P \land ok \land \neg ok' \land \neg wait) \relsemi \IIsrd \\[.3ex]
  &= P[true, false, false/ok, ok', wait'] \relsemi \IIsrd[false/ok] & \text{[relational calculus]}  \\[.3ex]
  &= (\negr \preR{P}) \relsemi \IIsrd[false/ok] & \text{[Definition~\ref{def:preperipost}]} & \text{} \\[.3ex]
  &= (\negr \preR{P}) \relsemi (\healthy{RD1}(\IIsrd))[false/ok] & \text{[Theorem~\ref{thm:rcclosure}]} \\[.3ex]
  &= (\negr \preR{P}) \relsemi (ok \rimplies \IIsrd)[false/ok] & \text{[Definition~\ref{def:rdhcond}]} \\[.3ex]
  &= (\negr \preR{P}) \relsemi \truer & \text{[predicate calculus]}
\end{align*}
\end{example}

\noindent When $P$ diverges with $\neg ok'$, its behaviour is $\negr \preR{P}$: the precondition has been
violated. Moreover, if $ok'$ is false in $P$ then $ok$ is false in $\IIsrd$, which yields the relation
$\truer$. Actually, this follows for any $\healthy{RD1}$ relation, not just $\IIsrd$, since
$\healthy{RD1}[false/ok] = \truer$. Consequently, right composition with any $\healthy{SRD}$ relation yields the same
result: the precondition should be $\healthy{RC1}$.

Consequently, the right composition law motivates that we need to consider a refined theory for reactive contracts, in
order to support the desirable algebraic and calculational laws of Section~\ref{sec:contracts}.  This is the objective
of the following section.

\subsection{Normal Stateful Reactive Designs}
\label{sec:nsrd}

We refine $\healthy{SRD}$ by identifying the subclass of normal stateful reactive designs, $\healthy{NSRD}$, which
provides the theory domain of our reactive contracts.

\begin{definition}[Normal Stateful Reactive Designs Healthiness Conditions] \label{def:nsrddef} \isalink{https://github.com/isabelle-utp/utp-main/blob/90ec1d65d63e91a69fbfeeafe69bd7d67f753a47/theories/rea_designs/utp_rdes_normal.thy\#L11}
\begin{align*}
  \healthy{RD3}(P) &~~\defs~~ P \relsemi \IIsrd \\
  \healthy{NSRD}(P) &~~\defs~~ \healthy{RD1} \circ \healthy{RD3} \circ \healthy{R}_s
\end{align*}
\end{definition}

\noindent $\healthy{RD3}$ is analogous to $\healthy{H3}$: it requires that skip is a right unit. This ensures that the
precondition is a reactive condition and the pericondition does not depend on $\state'$. The use of the word ``normal''
here is therefore by analogy with normal designs~\cite{Guttman2010}. $\healthy{NSRD}$ does not explicitly invoke
$\healthy{RD2}$ as it is subsumed by $\healthy{RD3}$, as the following theorem demonstrates.

\begin{theorem}[$\healthy{RD3}$ subsumes $\healthy{RD2}$] \label{thm:rd3subsumes} $\healthy{RD2}(\healthy{RD3}(P)) = \healthy{RD3}(\healthy{RD2}(P)) = \healthy{RD3}(P)$ \isalink{https://github.com/isabelle-utp/utp-main/blob/90ec1d65d63e91a69fbfeeafe69bd7d67f753a47/theories/rea_designs/utp_rdes_normal.thy\#L31}
\end{theorem}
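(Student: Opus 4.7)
The plan is to reduce both identities to a single observation about $\IIsrd$ and $\ckey{J}$: namely, that $\IIsrd$ absorbs $\ckey{J}$ on both sides, $\IIsrd \relsemi \ckey{J} = \ckey{J} \relsemi \IIsrd = \IIsrd$. Granted this, associativity of sequential composition (Theorem~\ref{thm:rellaws}) dispatches both halves of the claim in a single line each:
\begin{align*}
  \healthy{RD2}(\healthy{RD3}(P)) &= (P \relsemi \IIsrd) \relsemi \ckey{J} = P \relsemi (\IIsrd \relsemi \ckey{J}) = P \relsemi \IIsrd = \healthy{RD3}(P), \\
  \healthy{RD3}(\healthy{RD2}(P)) &= (P \relsemi \ckey{J}) \relsemi \IIsrd = P \relsemi (\ckey{J} \relsemi \IIsrd) = P \relsemi \IIsrd = \healthy{RD3}(P).
\end{align*}
So the proof collapses to two calculational lemmas about $\IIsrd$ and $\ckey{J}$, neither of which refers to $P$.

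For $\IIsrd \relsemi \ckey{J} = \IIsrd$, I would unfold $\IIsrd$ from Definition~\ref{def:reahealths} and case-split on $ok$, and within the $ok$-branch on $wait$. When $ok$ holds, $\IIsrd$ forces $ok' = true$ with the remaining variables identified by $\II$ (if $\neg wait$) or $\exists st @ \II$ (if $wait$); since $\ckey{J}$ is equivalent to $ok' = true \land v' = v$ once $ok = true$, post-composing merely reiterates what $\IIsrd$ already says. When $\neg ok$, $\IIsrd$ reduces to $\ckey{R1}(\true)$, leaving $ok'$, $wait'$, $st'$, and the other variables unconstrained apart from $tr \le tr'$; composing with $\ckey{J}$ still allows both values of $ok'$ with $v' = v$ preserved, so the result is still $\ckey{R1}(\true)$. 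The proof of $\ckey{J} \relsemi \IIsrd = \IIsrd$ is the symmetric case analysis: when $ok$ holds initially, $\ckey{J}$ forces the intermediate $ok$ true and preserves all other variables, so $\IIsrd$ evaluated at the intermediate state coincides with $\IIsrd$ at the initial state; when initially $\neg ok$, $\ckey{J}$ admits either value for the intermediate $ok$, and the disjunction of the resulting $\II$, $\exists st @ \II$, and $\ckey{R1}(\true)$ branches collapses to $\ckey{R1}(\true)$ because the first two both entail $tr' = tr$.

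The main obstacle is that last collapse in the $\neg ok$ case: one must verify that both $\II$ and $\exists st @ \II$ refine $\ckey{R1}(\true)$, which holds since each entails $tr' = tr$ and hence $tr \le tr'$. The remaining work is routine existential manipulation and substitution, which Isabelle's relational-calculus tactics should dispose of once the definitions of $\IIsrd$ and $\ckey{J}$ are unfolded. Notably the argument is alphabet-polymorphic: nothing beyond $v' = v$, on which $\ckey{J}$ and each identity branch of $\IIsrd$ agree, is used about the extensible observational variables.
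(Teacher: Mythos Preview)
Your approach is exactly the paper's: it proves the theorem by the one-line observation that $\ckey{J} \relsemi \IIsrd = \IIsrd \relsemi \ckey{J} = \IIsrd$, and your reduction via associativity is the implicit step. Your additional case analysis for those absorption identities is sound and more detailed than what the paper records, which simply asserts them.
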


\begin{proof}
  This follows since $\ckey{J} \relsemi \IIsrd = \IIsrd \relsemi \ckey{J} = \IIsrd$.
\end{proof}

\noindent Consequently, it is easy to show that every $\healthy{NSRD}$-healthy relation is also $\healthy{SRD}$-healthy,
and therefore the theorems of Section~\ref{sec:srd} remain valid. We can also prove the following theorem that follows
as a consequence of Theorems~\ref{thm:rd3subsumes} and \ref{thm:srdskip}.

\begin{theorem}[Normal Stateful Reactive Design Formulation] \label{thm:nrdesform} \isalink{https://github.com/isabelle-utp/utp-main/blob/90ec1d65d63e91a69fbfeeafe69bd7d67f753a47/theories/rea_designs/utp_rdes_normal.thy\#L153}
  \begin{align*}
    \healthy{NSRD}(P) &= \rc{\healthy{RC1}(\preR{P})}{(\exists \state' @ \periR{P})}{\postR{P}}
  \end{align*}
\end{theorem}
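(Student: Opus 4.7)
The plan is to reduce $\healthy{NSRD}(P)$ to a composition of the SRD form of $P$ with $\IIsrd$, and then read off the triple using the already-proved Theorem~\ref{thm:srdskip}.

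First I would establish the intermediate identity $\healthy{NSRD}(P) = \healthy{SRD}(P) \relsemi \IIsrd$. Unfolding Definition~\ref{def:nsrddef} gives $\healthy{NSRD}(P) = \healthy{RD1}(\healthy{R}_s(P) \relsemi \IIsrd)$, while $\healthy{SRD}(P) \relsemi \IIsrd = \healthy{RD1}(\healthy{RD2}(\healthy{R}_s(P))) \relsemi \IIsrd$. By Theorem~\ref{thm:rd3subsumes} and its underlying observation that $\ckey{J} \relsemi \IIsrd = \IIsrd$, the $\healthy{RD2}$ factor is absorbed by the following $\IIsrd$, leaving $\healthy{RD1}(\healthy{R}_s(P)) \relsemi \IIsrd$. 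The remaining step is to commute the outer $\healthy{RD1}$ with right composition by $\IIsrd$: this follows by a case split on $ok$, exploiting the identity $\IIsrd[false/ok] = \truer$ (as already used in the calculation in Section~\ref{sec:srd}), so that the $ok$-false branch of $\healthy{RD1}$ produces $\truer$ on both sides.

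With this intermediate identity in hand, I would apply Theorem~\ref{thm:rdesform} to rewrite $\healthy{SRD}(P)$ as the triple $\rc{\preR{P}}{\periR{P}}{\postR{P}}$, and then apply Theorem~\ref{thm:srdskip} to evaluate the composition with $\IIsrd$, obtaining $\rc{\negr((\negr \preR{P}) \relsemi \truer)}{\exists \state' @ \periR{P}}{\postR{P}}$. Finally, Definition~\ref{def:RC} identifies the precondition $\negr((\negr \preR{P}) \relsemi \truer)$ as precisely $\healthy{RC1}(\preR{P})$, yielding the stated triple.

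The main obstacle will be the commutation of the outer $\healthy{RD1}$ with right composition by $\IIsrd$; the argument depends on how $\IIsrd$ behaves when $ok$ is false, and needs to be done carefully because $\healthy{RD1}$ and $\relsemi$ do not a priori commute for arbitrary relations. All subsequent steps are a direct chaining of previously established equalities, with Theorem~\ref{thm:srdskip} doing the heavy lifting of computing the peri- and postcondition effects of appending the reactive skip, and Definition~\ref{def:RC} providing the syntactic identification of $\healthy{RC1}$.
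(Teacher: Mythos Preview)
Your proposal is correct and follows essentially the same route as the paper, which states only that the result ``follows as a consequence of Theorems~\ref{thm:rd3subsumes} and~\ref{thm:srdskip}''. You have simply made explicit the intermediate identity $\healthy{NSRD}(P) = \healthy{SRD}(P) \relsemi \IIsrd$ (together with the use of Theorem~\ref{thm:rdesform} to obtain the triple form of $\healthy{SRD}(P)$) and correctly flagged the $\healthy{RD1}$ commutation as the one step requiring care; the paper's mechanised proof handles this via auxiliary lemmas that are not spelled out in the text.
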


In addition to ensuring that the elements of the triple of $\healthy{RR}$ healthy, $\healthy{NSRD}$ also ensures that
the precondition is a reactive condition, and that the pericondition does not refer to $\state'$. 

\begin{corollary} \label{thm:nsrdintro} $P$ is $\healthy{NSRD}$ healthy provided that the following
  conditions hold: \isalink{https://github.com/isabelle-utp/utp-main/blob/90ec1d65d63e91a69fbfeeafe69bd7d67f753a47/theories/rea_designs/utp_rdes_normal.thy\#L304}
  \begin{enumerate}
    \item $P$ is $\healthy{SRD}$ healthy;
    \item $\preR{P}$ is $\healthy{RC}$ healthy;
    \item $\periR{P}$ does not mention $\state'$.
  \end{enumerate}
\end{corollary}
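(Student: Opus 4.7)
The plan is to show that $P$ is a fixed-point of $\healthy{NSRD}$, i.e.\ $\healthy{NSRD}(P) = P$, by rewriting $\healthy{NSRD}(P)$ via Theorem~\ref{thm:nrdesform} and collapsing the two modifications it performs back to identities under the stated hypotheses. Specifically, Theorem~\ref{thm:nrdesform} provides the formulation
$$\healthy{NSRD}(P) = \rc{\healthy{RC1}(\preR{P})}{(\exists \state' @ \periR{P})}{\postR{P}},$$
so it suffices to reduce this triple to $\rc{\preR{P}}{\periR{P}}{\postR{P}}$ and then invoke Theorem~\ref{thm:rdesform} in the reverse direction.

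First, I would dispose of the precondition component. Hypothesis (2) states that $\preR{P}$ is $\healthy{RC}$-healthy. Since $\healthy{RC}$ is defined as $\healthy{RC1} \circ \healthy{RR}$ and $\healthy{RC1}$ is idempotent (being built from a double reactive negation), every $\healthy{RC}$-healthy predicate is in the fixed-point set of $\healthy{RC1}$, and hence $\healthy{RC1}(\preR{P}) = \preR{P}$. Next, I would address the pericondition. By hypothesis (3), $\periR{P}$ contains no free occurrence of $\state'$, so the quantification $(\exists \state' @ \periR{P})$ collapses to $\periR{P}$ as a vacuous binding. The postcondition $\postR{P}$ is unchanged by Theorem~\ref{thm:nrdesform}. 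These three reductions together give $\healthy{NSRD}(P) = \rc{\preR{P}}{\periR{P}}{\postR{P}}$.

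To finish, hypothesis (1) asserts $\healthy{SRD}$-healthiness of $P$, which by Theorem~\ref{thm:rdesform} means $P = \rc{\preR{P}}{\periR{P}}{\postR{P}}$. Chaining with the previous equality yields $\healthy{NSRD}(P) = P$, which is $\healthy{NSRD}$-healthiness by definition.

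The principal potential obstacle is minor and conceptual rather than technical: one must notice that hypothesis (2) is phrased in terms of $\healthy{RC}$-healthiness while Theorem~\ref{thm:nrdesform} exposes only the $\healthy{RC1}$ factor. Bridging this small gap relies on idempotence of $\healthy{RC1}$, which is standard but worth citing explicitly. Otherwise the argument is a routine forward calculation that matches, component by component, the three hypotheses against the three slots of the triple produced by $\healthy{NSRD}$.
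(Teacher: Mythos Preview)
Your proposal is correct and follows exactly the route the paper intends: the statement is presented as an immediate corollary of Theorem~\ref{thm:nrdesform}, and your argument simply unpacks that dependency by collapsing each slot of the triple under the corresponding hypothesis, then invoking Theorem~\ref{thm:rdesform} via hypothesis~(1). Your remark on bridging $\healthy{RC}$ to $\healthy{RC1}$ via idempotence is fine; alternatively, one can observe that hypothesis~(1) already ensures $\preR{P}$ is $\healthy{RR}$ (by the corollary following Theorem~\ref{thm:rcclosure}), so $\healthy{RC}$-healthiness directly yields $\healthy{RC1}(\preR{P}) = \healthy{RC1}(\healthy{RR}(\preR{P})) = \preR{P}$.
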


\noindent A further corollary finally justifies the syntactic formulation given in Section~\ref{sec:contracts}:

\begin{corollary}[$\healthy{NSRD}$ contract closure] $ $ \label{thm:nrcclosure} \isalink{https://github.com/isabelle-utp/utp-main/blob/90ec1d65d63e91a69fbfeeafe69bd7d67f753a47/theories/rea_designs/utp_rdes_normal.thy\#L310}

\noindent If $P_1$ is $\healthy{RC}$, $P_2$ and $P_3$ are both $\healthy{RR}$, and $P_2$ does not refer to $\state'$, then $\rc{P_1}{P_2}{P_3}$ is $\healthy{NSRD}$ healthy. 
\end{corollary}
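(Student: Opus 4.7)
The plan is to invoke Corollary~\ref{thm:nsrdintro}, which reduces $\healthy{NSRD}$ healthiness to three obligations: (1) $\healthy{SRD}$ healthiness of the whole contract, (2) the precondition being $\healthy{RC}$, and (3) the pericondition not mentioning $\state'$. I would discharge each in turn, relying almost entirely on the extraction identities of Theorem~\ref{thm:exttrip} and the $\healthy{SRD}$ closure result of Corollary~\ref{thm:rcclosure}.

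First I would observe that, since $\healthy{RC} \defs \healthy{RC1} \circ \healthy{RR}$ (Definition~\ref{def:RC}), the hypothesis $P_1 \is \healthy{RC}$ entails $P_1 \is \healthy{RR}$. Together with the assumptions that $P_2$ and $P_3$ are $\healthy{RR}$, this lets me apply Corollary~\ref{thm:rcclosure} to obtain $\healthy{SRD}(\rc{P_1}{P_2}{P_3}) = \rc{P_1}{P_2}{P_3}$, discharging obligation~(1). Obligation~(2) is then immediate: Theorem~\ref{thm:exttrip}, whose premise we just verified, gives $\preR{\rc{P_1}{P_2}{P_3}} = P_1$, which is $\healthy{RC}$ by assumption.

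For obligation~(3), Theorem~\ref{thm:exttrip} similarly yields $\periR{\rc{P_1}{P_2}{P_3}} = P_1 \rimplies P_2$. I would then argue that this predicate contains no occurrence of $\state'$: by hypothesis $P_2$ does not mention $\state'$, and since $P_1$ is a reactive condition it does not refer to any dashed state variable either (its only dashed dependency is the trace extension $\trace$, as discussed after Definition~\ref{def:RC}). Hence $P_1 \rimplies P_2 = \negr P_1 \lor P_2$ is free of $\state'$, and obligation~(3) is satisfied.

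The proof is essentially a chain of invocations, so there is no genuinely hard step — the main piece of bookkeeping is the implicit use of $\healthy{RC} \subseteq \healthy{RR}$ to match the premise of Theorem~\ref{thm:exttrip} and Corollary~\ref{thm:rcclosure}, and the observation that a reactive condition does not constrain $\state'$, which is what ultimately transports the ``no $\state'$'' property from $P_2$ to the whole implication $P_1 \rimplies P_2$.
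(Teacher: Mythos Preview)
Your proposal is correct and matches the paper's intended route: the corollary is stated immediately after Corollary~\ref{thm:nsrdintro} precisely so that it follows by discharging its three obligations via Corollary~\ref{thm:rcclosure} and Theorem~\ref{thm:exttrip}, exactly as you do. The only bookkeeping points you flag---that $\healthy{RC}$ implies $\healthy{RR}$, and that a reactive condition cannot constrain $\state'$---are the ones needed and are handled correctly.
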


\noindent We now, therefore, have an adequate formulation of the reactive design theory. In spite of its restrictions,
$\healthy{SRD}$ remains useful as a means of obtaining algebraic theorems. In the remainder of this section we will
therefore explore the algebraic properties of the two theories.

\subsection{Algebraic Properties}
\label{sec:grd-alg}

$\healthy{SRD}$ and $\healthy{NSRD}$ are both idempotent and continuous, and therefore both form complete lattices, as
stated in the following theorem.

\begin{theorem}[Reactive Design Lattices] $\healthy{SRD}$ and $\healthy{NSRD}$ healthy predicates form complete lattices
  with $\top_\srdes \defs \healthy{SRD}(\false) = \Miracle$ and $\bot_\srdes \defs \healthy{SRD}(\true) =
  \Chaos$. \label{thm:rdes-lattice} \isalink{https://github.com/isabelle-utp/utp-main/blob/90ec1d65d63e91a69fbfeeafe69bd7d67f753a47/theories/rea_designs/utp_rdes_normal.thy\#L732}
\end{theorem}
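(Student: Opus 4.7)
The plan is to reduce the theorem to an application of the general continuous-theory framework summarised in Theorem~\ref{thm:contthm}. That theorem already tells us that any idempotent and continuous healthiness condition gives rise to a complete lattice whose top and bottom are obtained by applying the healthiness condition to $\false$ and $\true$ respectively. Hence the bulk of the work is to establish that both $\healthy{SRD}$ and $\healthy{NSRD}$ are idempotent and continuous, after which the identifications $\top_\srdes = \Miracle$ and $\bot_\srdes = \Chaos$ are settled by two direct calculations.

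First I would dispose of the continuity and idempotence of the constituent functions. We already have $\healthy{R}_s$ idempotent and continuous from Theorem~\ref{thm:reaprop}. For $\healthy{RD1}$, rewrite $\healthy{RD1}(P) = \neg ok \lor P$; disjunction with a constant distributes through arbitrary non-empty infima, and a straightforward calculation shows $\healthy{RD1}$ is idempotent. For $\healthy{RD2}(P) = P \relsemi \ckey{J}$ and $\healthy{RD3}(P) = P \relsemi \IIsrd$, continuity is immediate from the left-distribution of $\relsemi$ through infima (Theorem~\ref{thm:rellaws}), and idempotence follows from $\ckey{J} \relsemi \ckey{J} = \ckey{J}$ and $\IIsrd \relsemi \IIsrd = \IIsrd$. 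Since continuous functions are closed under composition, $\healthy{SRD}$ and $\healthy{NSRD}$ are both continuous. Idempotence of the composition requires the usual commutation arguments: that $\healthy{RD1}$, $\healthy{RD2}$ (resp.\ $\healthy{RD3}$) and $\healthy{R}_s$ commute pairwise up to absorption. In particular, Theorem~\ref{thm:rd3subsumes} already yields $\healthy{RD2} \circ \healthy{RD3} = \healthy{RD3}$, which is the critical step in showing $\healthy{NSRD}$ is idempotent; the analogous commutation with $\healthy{R}_s$ is a routine predicate calculation.

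Given idempotence and continuity, Theorem~\ref{thm:contthm} immediately yields that $\theoryset{\healthy{SRD}}$ and $\theoryset{\healthy{NSRD}}$ are complete lattices with top $\healthy{SRD}(\false)$ (resp.\ $\healthy{NSRD}(\false)$) and bottom $\healthy{SRD}(\true)$ (resp.\ $\healthy{NSRD}(\true)$). To identify the top with $\Miracle$ I would use Theorem~\ref{thm:rdesform}, which allows any $\healthy{SRD}$-healthy predicate to be rewritten in triple form, and then compute $\preR{\healthy{SRD}(\false)}$, $\periR{\healthy{SRD}(\false)}$ and $\postR{\healthy{SRD}(\false)}$ using Definition~\ref{def:preperipost}; each evaluates to $\truer$, $\false$, and $\false$ respectively, so by Theorem~\ref{thm:rdescore} the result equals $\Miracle$. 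The analogous calculation for $\healthy{SRD}(\true)$ gives pre-, peri-, and postconditions all equal to $\false$ (after applying reactive negation for the precondition extraction, since $\neg_r \true_r = \false$ modulo $\healthy{R1}$), matching the triple of $\Chaos$. The $\healthy{NSRD}$ case is identical modulo the additional $\healthy{RC1}$ and $\state'$-quantification requirements of Theorem~\ref{thm:nrdesform}, both of which are trivially satisfied by $\truer$ and $\false$.

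The main obstacle, I expect, is the bookkeeping around the commutation of the three component healthiness conditions when establishing idempotence of $\healthy{SRD}$ and $\healthy{NSRD}$ — in particular, showing that $\healthy{R}_s$ does not destroy $\healthy{RD1}$-healthiness and conversely, which relies on the fact that $\healthy{RD1}$ preserves the $tr \le tr'$ invariant when $ok$ is false because $\truer$ is itself $\healthy{R1}$. A secondary subtlety is the precondition extraction for $\healthy{SRD}(\true)$: naively $\preR{-}$ negates and substitutes, and we must be careful to obtain $\false$ rather than some unwanted residual predicate; this is where the $\healthy{R1}$-absorption in the definition of $\negr$ does the work cleanly.
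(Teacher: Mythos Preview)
Your proposal is correct and follows essentially the same approach as the paper: establish idempotence and continuity of the component healthiness conditions (the paper cites standard $\healthy{CSP1}$/$\healthy{CSP2}$ proofs for $\healthy{RD1}$/$\healthy{RD2}$, uses $\IIsrd \relsemi \IIsrd = \IIsrd$ for $\healthy{RD3}$, and invokes disjunctivity and distribution of $\relsemi$ through infima for continuity), then appeal to the continuous-theory framework. The paper's proof is considerably terser than yours --- it does not spell out the commutation bookkeeping needed for idempotence of the compositions, nor the calculation identifying $\healthy{SRD}(\false)$ and $\healthy{SRD}(\true)$ with $\Miracle$ and $\Chaos$ --- but you have correctly anticipated exactly the points the mechanisation must address.
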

\begin{proof} Standard proofs of idempotency for $\healthy{CSP1}$ and $\healthy{CSP2}$~\cite{Hoare&98,Cavalcanti&06}
  apply also to $\healthy{RD1}$ and $\healthy{RD2}$, respectively. $\healthy{RD3}$ is idempotent since
  $\IIsrd \relsemi \IIsrd = \IIsrd$. $\healthy{RD1}$ is continuous since it is disjunctive. $\healthy{RD2}$ and
  $\healthy{RD3}$ are both continuous since sequential composition distributes through infima to the left and right.
\end{proof}

\noindent We note that both $\Miracle$ and $\Chaos$, following the form given in Theorem~\ref{thm:nrcclosure}, are both
$\healthy{NSRD}$ healthy. Through the Knaster-Tarski theorem~\cite{Tarski55}, we also obtain weakest fixed-point
operators $\thlfp{R}$ and $\thlfp{N}$ for the two theories, which we will further explore in
Section~\ref{sec:recurse}. Since $\healthy{SRD}$ and $\healthy{NSRD}$ are also continuous, by Theorem~\ref{thm:contthm}
we can rewrite the weakest fixed-points to $\mu X @ F(\healthy{SRD}(X))$ and $\mu X @ F(\healthy{NSRD}(X))$,
respectively. Thus, we can reason about recursive reactive designs using the relational calculus lattice rather than the
theory specific ones. We can also show that reactive designs are closed under the standard relational calculus
operators, as the following theorems demonstrate.

\begin{theorem}[Reactive Design Composition] \label{sec:rdes-seq} If $P$ and $Q$ are $\healthy{NSRD}$ healthy then \isalink{https://github.com/isabelle-utp/utp-main/blob/90ec1d65d63e91a69fbfeeafe69bd7d67f753a47/theories/rea_designs/utp_rdes_normal.thy\#L346}
  \begin{align*}
  P \relsemi Q = \Rs 
    \left(\begin{array}{l}
            \preR{P} \land (\postR{P} \wpR \preR{Q}) \\[.5ex]
            \shows~~ \periR{P} \lor (\postR{P} \relsemi \periR{Q}) \\[.5ex]
            \wcond~~ \postR{P} \relsemi \postR{Q}
          \end{array}\right)
  \end{align*}
\end{theorem}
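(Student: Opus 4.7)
The plan is to reduce this claim to equation~\eqref{thm:rdes-seq} of Theorem~\ref{thm:rdes-comp}, which already gives the composition law for the \emph{syntactic} triple notation. The bridge is Theorem~\ref{thm:nrdesform}, which says any $\healthy{NSRD}$ predicate coincides with its reconstruction from the extraction functions of Definition~\ref{def:preperipost}.

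First I would rewrite $P$ and $Q$ in contract form. Since $P$ is $\healthy{NSRD}$ healthy, Theorem~\ref{thm:nrdesform} gives $P = \rc{\healthy{RC1}(\preR{P})}{(\exists \state' @ \periR{P})}{\postR{P}}$. But $\healthy{NSRD}$ healthiness also yields that $\preR{P}$ is already $\healthy{RC}$ (hence $\healthy{RC1}$-invariant) and that $\periR{P}$ does not mention $\state'$, by the converse direction of Corollary~\ref{thm:nsrdintro}; so the two absorbing operators may be dropped, and $P = \rc{\preR{P}}{\periR{P}}{\postR{P}}$. The same holds verbatim for $Q$.

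Next I would apply equation~\eqref{thm:rdes-seq} of Theorem~\ref{thm:rdes-comp} to the composition, which immediately yields
\begin{align*}
P \relsemi Q
~~=~~ \rc{\preR{P} \land (\postR{P} \wpR \preR{Q})}{\periR{P} \lor (\postR{P} \relsemi \periR{Q})}{\postR{P} \relsemi \postR{Q}}.
\end{align*}
Unfolding the definition of the contract triple (Definition~\ref{def:rcontract}) as $\rc{A}{B}{C} \defs \Rs(A \shows B \wcond C)$ produces exactly the right-hand side of the statement.

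The only nontrivial point is the first step: verifying that for an $\healthy{NSRD}$ predicate the healthiness wrappers $\healthy{RC1}$ and $\exists \state'$ from Theorem~\ref{thm:nrdesform} reduce to identities, so that the resulting triple satisfies the hypotheses of equation~\eqref{thm:rdes-seq}. This ought to follow directly from the definition of $\healthy{NSRD} = \healthy{RD1} \circ \healthy{RD3} \circ \Rs$ and its characterisations in Section~\ref{sec:nsrd}: $\healthy{RD3}$ forces the pericondition to conceal $\state'$ and the precondition to be extension-closed. Once this bookkeeping is done, the proof is a one-line appeal to the triple composition law followed by an unfolding of the contract notation.
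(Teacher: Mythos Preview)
Your approach is correct and matches the paper's intent. The paper does not give a detailed proof of this theorem; it simply remarks that ``Theorem~\ref{sec:rdes-seq} is essentially the same as Theorem~\ref{thm:rdes-seq}, but relies on healthiness of $P$ and $Q$, rather than their syntactic form,'' and defers the actual argument to the Isabelle mechanisation. Your reduction via the extraction functions and the syntactic composition law is exactly the bridge the paper is alluding to.

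Two small points of tidying. First, you invoke the ``converse direction of Corollary~\ref{thm:nsrdintro}'', but that corollary is stated only in the forward direction (sufficient conditions for $\healthy{NSRD}$). The facts you need, that $\preR{P}$ is $\healthy{RC}$ and $\periR{P}$ is independent of $\state'$ for $\healthy{NSRD}$-healthy $P$, follow instead by applying Theorem~\ref{thm:exttrip} to the form given by Theorem~\ref{thm:nrdesform}. Second, since every $\healthy{NSRD}$ predicate is also $\healthy{SRD}$, you can use Theorem~\ref{thm:rdesform} directly to obtain $P = \rc{\preR{P}}{\periR{P}}{\postR{P}}$ without first introducing and then eliminating the $\healthy{RC1}$ and $\exists\state'$ wrappers; the additional $\healthy{NSRD}$ properties are then only needed to check the side conditions of equation~\eqref{thm:rdes-seq}. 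Either route works.
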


\noindent Theorem~\ref{sec:rdes-seq} is essentially the same as Theorem~\ref{thm:rdes-seq}, but relies on healthiness of
$P$ and $Q$, rather than their syntactic form. As similar law holds for $\healthy{SRD}$-healthy predicates, though with
a more complex precondition and pericondition, following the form given in Theorem~\ref{thm:srdskip}. We finally show
closure of the theory under the main programming operators.

\begin{theorem}[Reactive Designs Closure]
  $\healthy{SRD}$ and $\healthy{NSRD}$ healthy predicates are closed under $\intchoice$, $\sqcup$, $\conditional{}{b}{}$,
  ~$\relsemi$~, $\IIsrd$, $\assignsR{\sigma}$, $\Miracle$, and $\Chaos$. \isalink{https://github.com/isabelle-utp/utp-main/blob/90ec1d65d63e91a69fbfeeafe69bd7d67f753a47/theories/rea_designs/utp_rdes_normal.thy\#L386}
\end{theorem}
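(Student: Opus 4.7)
The plan is to reduce each case to the syntactic-closure corollaries of Section~\ref{sec:grd}---Corollary~\ref{thm:rcclosure} for $\healthy{SRD}$ and Corollary~\ref{thm:nrcclosure} for $\healthy{NSRD}$---using the calculational laws of Section~\ref{sec:contracts}. Since any $\healthy{NSRD}$-healthy predicate is in particular $\healthy{SRD}$-healthy (via Theorem~\ref{thm:rd3subsumes}), it suffices to discharge each case at the $\healthy{NSRD}$ level, whence the $\healthy{SRD}$ closure follows as an immediate corollary.

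First, I would dispatch the constants and assignment. Theorem~\ref{thm:rdescore} already expresses $\IIsrd$, $\assignsR{\sigma}$, $\Miracle$, and $\Chaos$ as reactive design triples whose preconditions are $\truer$ or $\false$ (both easily $\healthy{RC}$) and whose peri- and postconditions are $\false$, $\IIr$, or $\assignsr{\sigma}$ (all $\healthy{RR}$, with the pericondition $\false$ trivially free of $\state'$). Corollary~\ref{thm:nrcclosure} then applies directly.

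Second, for the Boolean combinators $\intchoice$, $\sqcup$, and $\conditional{\cdot}{b}{\cdot}$, I would apply Theorem~\ref{thm:rdes-comp} to rewrite each composite as a triple whose components are Boolean combinations (conjunction, disjunction, reactive implication, or conditional on a state predicate $b$) of the constituents. As noted at the end of Section~\ref{sec:rrel}, the $\healthy{RC}$ class is closed under $\land$ and $\lor$; $\healthy{RR}$ is likewise closed under all of these constructors, and none of them can introduce a fresh reference to $\state'$ if none was present in the constituents. Hence Corollary~\ref{thm:nrcclosure} again applies.

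Third, sequential composition is the substantive case. Theorem~\ref{sec:rdes-seq} rewrites $P \relsemi Q$ as the triple with precondition $\preR{P} \land (\postR{P} \wpR \preR{Q})$, pericondition $\periR{P} \lor (\postR{P} \relsemi \periR{Q})$, and postcondition $\postR{P} \relsemi \postR{Q}$. The postcondition is $\healthy{RR}$ by closure of $\healthy{RR}$ under $\relsemi$; the pericondition is $\healthy{RR}$ by that same closure together with closure under $\lor$, and it contains no free $\state'$ because, by $\healthy{NSRD}$-healthiness of $P$ and $Q$, neither $\periR{P}$ nor $\periR{Q}$ does, while composition on the left with $\postR{P}$ existentially binds the intermediate dashed variables and cannot introduce a fresh $\state'$. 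The main obstacle is verifying that the precondition stays $\healthy{RC}$; by closure of $\healthy{RC}$ under $\land$, this reduces to showing that $\postR{P} \wpR \preR{Q}$ is $\healthy{RC}$. This is precisely the side-condition guarantee stated alongside Definition~\ref{def:wpR}, which holds because the double application of reactive negation $\negr$ in the definition of $\wpR$ restores the prefix-closure property that a single use of relational negation would otherwise destroy. With this in hand, Corollary~\ref{thm:nrcclosure} completes the argument.
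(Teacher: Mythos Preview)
Your overall strategy---represent each healthy predicate as a triple, apply the composition laws of Theorem~\ref{thm:rdes-comp}, and then invoke the syntactic closure corollaries---is the natural one and matches the spirit of the mechanised development. However, your opening reduction contains a genuine logical error. You claim that because every $\healthy{NSRD}$-healthy predicate is $\healthy{SRD}$-healthy, it suffices to prove closure at the $\healthy{NSRD}$ level, with $\healthy{SRD}$ closure as a corollary. This is backwards: the $\healthy{SRD}$ closure statement has a \emph{weaker} hypothesis (its inputs are merely $\healthy{SRD}$-healthy), so it is a stronger statement, not a consequence. An $\healthy{SRD}$-healthy $P$ need not satisfy $\healthy{RD3}$; its precondition $\preR{P}$ is only guaranteed to be $\healthy{RR}$, not $\healthy{RC}$, and its pericondition may mention $\state'$ (cf.\ Theorem~\ref{thm:srdskip}). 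You therefore cannot feed such a $P$ into the $\healthy{NSRD}$ closure result at all.

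The repair is to treat the two theories in parallel rather than deriving one from the other. For $\healthy{SRD}$, use Theorem~\ref{thm:rdesform} to write each input as a triple with $\healthy{RR}$ components, apply the composition laws to obtain another such triple, and conclude via Corollary~\ref{thm:rcclosure}, which demands only $\healthy{RR}$ components. For sequential composition in particular, note that Theorem~\ref{sec:rdes-seq} as stated requires $\healthy{NSRD}$ inputs; for $\healthy{SRD}$ you must use the more general form alluded to immediately after that theorem, with the more complex precondition and pericondition shaped like the right-hand side of Theorem~\ref{thm:srdskip}. Your argument for the $\healthy{NSRD}$ case then goes through essentially as written.
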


\noindent This closure theorem means that we can import the algebraic laws for the relational calculus operators from
core UTP~\cite{Hoare&98}, such as several equations of Theorem~\ref{thm:rellaws}. Finally, we note that our theory
admits the following familiar laws from relational calculus for assignment.

\begin{theorem}[Reactive Design Assignment Compositions] \label{thm:rdes-asn-comp} If $P$ is $\healthy{NSRD}$ healthy,
  $x$ is a state variable, and $v$ is an expression containing only state variables, then: \isalink{https://github.com/isabelle-utp/utp-main/blob/90ec1d65d63e91a69fbfeeafe69bd7d67f753a47/theories/rea_designs/utp_rdes_prog.thy\#L990}
  \begin{align}
    (\assignR{x}{v})    ~ \relsemi ~ P ~~=~~& P[v/x] \label{thm:rdes-asn-comp-1} \\[.5ex]
    \assignsR{\sigma} ~ \relsemi ~ P ~~=~~& \substapp{\sigma}{P} \label{thm:rdes-asn-comp-2}
  \end{align}
\end{theorem}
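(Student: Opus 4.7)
The plan is to reduce the statement to the already-established assignment law \ref{law:RA2} of Theorem~\ref{thm:rea-asn} combined with the structural characterisation of normal stateful reactive designs in Theorem~\ref{thm:nrdesform}. First, \eqref{thm:rdes-asn-comp-1} is the special case of \eqref{thm:rdes-asn-comp-2} obtained by setting $\sigma = \{x \mapsto v\}$, using the shorthand $\assignR{x}{v} \defs \assignsR{\{x \mapsto v\}}$ and the fact that $\substapp{\{x \mapsto v\}}{P}$ is precisely $P[v/x]$. So it suffices to establish the general law \eqref{thm:rdes-asn-comp-2}.

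For the general law, since $P$ is $\healthy{NSRD}$ healthy, Theorem~\ref{thm:nrdesform} rewrites $P$ in the canonical triple form $P = \rc{\preR{P}}{\periR{P}}{\postR{P}}$, where $\preR{P}$ is a reactive condition, and both $\periR{P}$ and $\postR{P}$ are reactive relations (with $\periR{P}$ free of $\state'$). Law \ref{law:RA2} of Theorem~\ref{thm:rea-asn} then applies directly and yields
$$\assignsR{\sigma} \relsemi P ~=~ \rc{\substapp{\sigma}{\preR{P}}}{\substapp{\sigma}{\periR{P}}}{\substapp{\sigma}{\postR{P}}}.$$

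The remaining step is to identify this right-hand side with $\substapp{\sigma}{P}$. Since $\sigma$ acts only on state variables, and the reactive contract notation $\rc{-}{-}{-}$ is built using the healthiness wrapper $\Rs$, the design connective $\shows$, and the intermediate-or-final choice $\wcond$, none of which bind or mention state variables in their own definitions, substitution commutes through the construction and is applied componentwise to the three constituents. Thus $\substapp{\sigma}{P} = \rc{\substapp{\sigma}{\preR{P}}}{\substapp{\sigma}{\periR{P}}}{\substapp{\sigma}{\postR{P}}}$, completing the chain of equalities.

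The main obstacle to this plan is really the proof of \ref{law:RA2} itself, on which the argument rests. That proof in turn unfolds $\assignsR{\sigma} = \rc{\truer}{\false}{\assignsr{\sigma}}$ via Theorem~\ref{thm:rdescore} and applies the sequential composition law \eqref{thm:rdes-seq}. The precondition then simplifies using the $\wpR$ assignment law \eqref{thm:rwp-assigns}, which gives $\substapp{\sigma}{\preR{P}}$, while the peri- and postcondition compositions reduce via the reactive-relation-level identity $\assignsr{\sigma} \relsemi Q = \substapp{\sigma}{Q}$, which is a short calculation using the definition of $\assignsr{\sigma}$ (noting that $tr$ is preserved and $\state$ is overwritten by $\sigma$, so the existential in the definition of $\relsemi$ collapses to a substitution). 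Given the groundwork already laid in Sections~\ref{sec:rrel} and \ref{sec:contracts}, each of these steps is routine, and the overall proof is mostly bookkeeping around the triple form guaranteed by normality.
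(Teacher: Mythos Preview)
Your approach is correct and is the natural calculational route: the paper itself does not spell out a proof for this theorem beyond noting that \eqref{thm:rdes-asn-comp-1} is the instance of \eqref{thm:rdes-asn-comp-2} with $\sigma = \{x \mapsto v\}$, deferring the details to the mechanisation. Reducing to the triple form via Theorem~\ref{thm:rdesform} (or \ref{thm:nrdesform}) and then invoking \ref{law:RA2} is precisely what the calculational machinery of Section~\ref{sec:contracts} is set up to support.

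One small imprecision is worth flagging. You justify the final identification $\substapp{\sigma}{\rc{P_1}{P_2}{P_3}} = \rc{\substapp{\sigma}{P_1}}{\substapp{\sigma}{P_2}}{\substapp{\sigma}{P_3}}$ by asserting that $\Rs$, $\shows$, and $\wcond$ ``do not mention state variables in their own definitions''. That is not quite true: $\Rs$ contains $\healthy{R3}_h$, and $\healthy{R3}_h$ refers to $st$ through $\IIsrd = \conditional{(\conditional{(\exists st @ \II)}{wait}{\II})}{ok}{\healthy{R1}(\true)}$. The commutation still holds, but for a slightly subtler reason: inside $\healthy{R3}_h(Q) = \conditional{\IIsrd}{wait}{Q}$, the $\IIsrd$ branch is conjoined with $wait$, which forces the inner conditional to select $\exists st @ \II$; there $st$ is bound, so the state substitution has no effect. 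Equivalently, $wait \land \substapp{\sigma}{\IIsrd} = wait \land \IIsrd$. This is the kind of detail the paper's discussion after the theorem alludes to when it says the law ``depends directly on the use of $\healthy{R3}_h$''. It is bookkeeping rather than a genuine gap, but you should state it explicitly rather than rely on the false claim that no state variables appear.
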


\noindent Theorem~\ref{thm:rdes-asn-comp-1}, the more usual law~\cite{Hoare87}, shows that a sequential assignment can
be turned into a substitution applied to its successor. It is an instance of the more general
Theorem~\ref{thm:rdes-asn-comp-2} when $\sigma = \{x \mapsto v\}$, crucially, provided that $x$ is a state variable and
not an arbitrary UTP variable. 

Theorem~\ref{thm:rdes-asn-comp} depends directly on the use of $\healthy{R3}_h$, and the resulting requirement that
periconditions do not refer to after state. It would not hold if allowed such references, for example by substituting
$\healthy{R3}_h$ with the original healthiness condition $\healthy{R3}$~\cite{Hoare&98}. In order to understand why,
consider the following corollary of Theorem~\ref{thm:rdes-asn-comp-1} involving a \Circus event prefix:

\begin{theorem}[Assignment and Events] \label{thm:asnevent}
   $$(\assignR{x}{e} \relsemi c \then P) ~~=~~ (c[e/x] \then (\assignR{x}{e} \relsemi P))$$
\end{theorem}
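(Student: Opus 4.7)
The plan is to unfold event prefix on both sides using $a \then P \defs (a \then \Skip) \relsemi P$ from Example~\ref{ex:prefix}, apply Theorem~\ref{thm:rdes-asn-comp} to commute the leading assignment with the event, and then refold the tail as a prefix. Using associativity of $\relsemi$, the two sides become $\assignR{x}{e} \relsemi (c \then \Skip) \relsemi P$ and $(c[e/x] \then \Skip) \relsemi \assignR{x}{e} \relsemi P$, so it suffices to prove the single commutation identity
\[
\assignR{x}{e} \relsemi (c \then \Skip) ~=~ (c[e/x] \then \Skip) \relsemi \assignR{x}{e}.
\]

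For this identity I would compute both sides as reactive contracts. The left-hand side is handled by Theorem~\ref{thm:rdes-asn-comp} equation~\eqref{thm:rdes-asn-comp-1}, which is applicable since $c \then \Skip$ is $\healthy{NSRD}$-healthy by Corollary~\ref{thm:nrcclosure}; it rewrites the expression as $(c \then \Skip)[e/x]$. Distributing this substitution into the contract of Example~\ref{ex:prefix} replaces $c$ by $c[e/x]$ in both the pericondition and the postcondition, and propagates the substitution of the initial $x$ through the conjunct $\state' = \state$, so that $x$ is overwritten by $e$ in the final state. The right-hand side is handled by Theorem~\ref{thm:rdes-comp} equation~\eqref{thm:rdes-seq}, using the contract of $\assignR{x}{e}$ from Theorem~\ref{thm:rdescore}: the preconditions remain $\truer$; the additional disjunct in the composed pericondition collapses because $\assignR{x}{e}$ has pericondition $\false$; and the composed postcondition is exactly the same $x$-overwrite paired with a single $c[e/x]$ event in the trace.

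The main obstacle is the bookkeeping around where the substitution lands once it is pushed inside the event prefix contract. The substitution $[e/x]$ targets only the initial value of $x$, but the $\state' = \state$ conjunct in the $\Skip$-like postcondition of $c \then \Skip$ propagates this through to the final state, producing exactly the same effect as a trailing $\assignR{x}{e}$ -- which is precisely the ingredient supplied on the right-hand side. Once this contract equality is verified, Theorem~\ref{thm:asnevent} is immediate: the tail $(c[e/x] \then \Skip) \relsemi \assignR{x}{e} \relsemi P$ refolds to $c[e/x] \then (\assignR{x}{e} \relsemi P)$ by the definition of event prefix.
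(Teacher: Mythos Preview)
Your proposal is correct and aligns with how the paper treats this result: the paper presents Theorem~\ref{thm:asnevent} explicitly as a corollary of Theorem~\ref{thm:rdes-asn-comp}\eqref{thm:rdes-asn-comp-1} and does not spell out a proof, so your elaboration via the commutation identity $\assignR{x}{e} \relsemi (c \then \Skip) = (c[e/x] \then \Skip) \relsemi \assignR{x}{e}$ is exactly the kind of detail the paper leaves implicit. Your handling of the key subtlety---that the substitution $[e/x]$ applied to $\state' = \state$ yields the same final-state update as post-composing with $\assignR{x}{e}$---is the crux of the argument and you have identified it correctly.
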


\noindent Theorem~\ref{thm:asnevent} allows us to push assignment through event prefixes, whilst making an appropriate
substitution.  It can be found in reactive languages like Occam~\cite{Roscoe1988}. This law does not hold in previous
\Circus semantics~\cite{Oliveira&09} as $\state'$ (there called $v'$) is revealed in intermediate states. Thus, whilst
the left hand side of this equation admits an intermediate observation where $x$ is updated to $e$, because the
pericondition does characterise state updates, the right hand side does not.
 
There are, however, costs to this simplification. A side effect is to prevent encoding of McEwan's \Circus interruption
operator~\cite{McEwan06}, which has the form $P \mathop{\triangle_i} Q$. This operator is similar to
$P \interleave (i \then Q)$, except that if the interruption event $i$ occurs before $P$ terminates, and whilst it is
quiescent, then the remaining behaviour of $P$ is pruned. If $P$ terminates, then conversely the behaviours of $Q$ are
lost. Crucially, the internal state of $P$ is retained following interruption~\cite{BGW09}, and is passed on to $Q$,
which is unique to McEwan's work (cf.~\cite{Sherif2010,Wei2013,Woodcock14}). For example, consider the following action:
$$(x := 5 \relsemi a \then \Skip) \mathop{\triangle_b}\, (x := x + x).$$
First, we note that the leading assignment $x := 5$ cannot be interrupted, since it does not have a quiescent
state. However, once this has occurred the left-hand side enters a quiescent state where $a$ is enabled. Through the
interruption operator, $b$ is also enabled at this point. If $a$ occurs then the entire action terminates in a final
state with $x = 5$. However, if $b$ occurs then the value of $x$ at this point is retained, and the action $x := x + x$
occurs, leading to final value of $10$ for $x$. 

We see, therefore, how the value of $x$ is retained and used by the interruption action. Such a retention of the state,
however, cannot be represented in the presence of $\healthy{RD3}$ as no intermediate state variables are
recorded. Therefore, if such an interruption operator is required, the loss of Theorem~\ref{thm:asnevent} must be
accepted, and $\healthy{R3}$ used as the base of reactive designs instead. This is a design choice depending on the kind
of reasoning and expressivity needed, as $\healthy{R3}$ reactive designs carry more information and so are more
distinguishing than $\healthy{R3}_h$ reactive designs. $\healthy{R3}$ is supported in Isabelle/UTP if its use is
desired.

\subsection{Recursion}
\label{sec:recurse}
In this section we will show how to calculate reactive contracts for a restricted class of recursive models, with the
particular aim of substantiating Theorem~\ref{thm:rdes-rec}. In Section~\ref{sec:grd-alg}, we have shown that
generalised reactive designs form a complete lattice. Thus, for any monotonic process constructor $F$ we can be sure
there exists fixed-points $\thlfp{R}\,F$ and $\thgfp{R}\,F$. However, in order to reason about recursive reactive
contracts generally, we need to calculate the pre, peri, and postconditions of such constructions.

In general, we are most interested in the weakest fixed-point for reactive designs, $\thlfp{R}\,F$, as the strongest
fixed-point yields miraculous behaviour for erroneous constructions~\cite{Cavalcanti04}. For example,
$(\thgfp{R}\, X @ X) = \Miracle$, whereas in reality an infinite loop is a programmer error that should yield $\Chaos$, which
$\thlfp{R}\,X @ X$ does. 

In order to calculate the reactive design of a weakest fixed-point we employ two results: (1) Hoare and He's proof that
guarded processes yield unique fixed-points~\cite[theorem 8.1.13, page 206]{Hoare&98}, and (2) Kleene's fixed-point
theorem~\cite{Lassez82}. The latter allows us to convert from a recursive construction with a strongest fixed-point to
an iterative construction, using a replicated internal choice of power constructions. Since we can calculate the
reactive design of replicated processes, we can therefore tackle recursion.

Hoare and He's theorem states, informally, that guarded functions on reactive processes have a unique fixed-point, that
is, if, for any relation $X$, $F(X)$ is guarded, then $\mu F = \nu F$. Guardedness is defined as follows.

\begin{definition}[Guarded Reactive Designs] \label{def:guarded} A function on reactive designs $F : \theoryset{\healthy{SRD}} \to \theoryset{\healthy{SRD}}$ is guarded provided that, for any
  $P \in \theoryset{\healthy{SRD}}$ and $n \in \mathbb{N}$, \isalink{https://github.com/isabelle-utp/utp-main/blob/90ec1d65d63e91a69fbfeeafe69bd7d67f753a47/theories/rea_designs/utp_rdes_guarded.thy\#L65}
  $$(F(P) \land gv(n+1)) = (F(P \land gv(n)) \land gv(n+1))$$ where
  $gv(n) \defs (tr \le tr' \land \#\trace < n)$ and $\# : \tset \to \mathbb{N}$ is a discrete trace measure function. We
  extend the trace algebra $(\tset, \tcat, \tempty)$ with the following axioms for the measure function; for
  $s, t \in \tset$:
  $$\#\tempty = 0 \qquad s > 0 \implies \# s > \tempty \qquad \#(s \tcat t) = \#s + \#t$$
\end{definition}

\noindent Definition~\ref{def:guarded} is similar to the one given in~\cite{Hoare&98}, but is generalised to allow a
variety of different discrete measure functions that satisfy the measure function axioms. An example measure function
that satisfies these axioms is the length function on sequences.

Given a reactive design $\mu X @ F(X)$, if $F$ is guarded according to Definition~\ref{def:guarded} then, intuitively,
before recursion variable $X$ can be reached, $F$ must have produced a non-empty portion of the trace. For example, in
CSP we may have a process $\mu X @ a \then X$, which is guarded since it must perform an $a$ before recursing. This is
ensured by requiring that the trace contribution of the function applied to a reactive design $F(P)$ yields a trace
strictly longer than that produced by $P$. This is the purpose of $gv$: if we observe $F(P)$ in a context where the
trace is longer than $n+1$ (enforced by $gv(n+1)$) then we can conclude that the trace contributed by $P$ must be no
longer than $n$, and thus we can conjoin it with $gv(n)$. We can use this to prove Hoare and He's theorem.

\begin{theorem}[Unique Fixed-Points] \label{thm:uniqfp}
  If $F$ is guarded then $\mu F = \nu F$. \isalink{https://github.com/isabelle-utp/utp-main/blob/90ec1d65d63e91a69fbfeeafe69bd7d67f753a47/theories/rea_designs/utp_rdes_guarded.thy\#L73}
\end{theorem}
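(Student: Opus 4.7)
The plan is to follow the classical Hoare--He argument for uniqueness of guarded fixed-points in a complete lattice. Since $\mu F$ is the weakest fixed-point of $F$, any other fixed-point, and in particular $\nu F$, must refine it; hence $\mu F \refinedby \nu F$ is immediate and only the reverse refinement (and therefore equality) needs work. The key intuition is that guardedness forces $\mu F$ and $\nu F$ to agree on every observation whose trace contribution has bounded length, and since both are $\healthy{SRD}$-healthy, hence $\healthy{R1}$-healthy, every admissible observation is covered by some such bound.

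The core of the argument is an induction on $n \in \mathbb{N}$ showing that $\mu F \land gv(n) = \nu F \land gv(n)$. The base case $n = 0$ collapses both sides to $\false$, since the measure axioms of Definition~\ref{def:guarded} yield $\#\trace \ge 0$ and thus make $gv(0)$ unsatisfiable. For the inductive step I would unfold $\mu F = F(\mu F)$ and $\nu F = F(\nu F)$, apply the guardedness equation to replace $F(\mu F) \land gv(n+1)$ by $F(\mu F \land gv(n)) \land gv(n+1)$, invoke the induction hypothesis inside $F$, and then apply guardedness in reverse on the $\nu F$ side. Closure of $\healthy{SRD}$ under the relevant operators and monotonicity of $F$ on $\theoryset{\healthy{SRD}}$ make each rewrite discharge without extra side conditions.

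To pass from this family of partial agreements to outright equality, I would observe that $\mu F$ and $\nu F$ both imply $tr \le tr'$ by $\healthy{R1}$, so every observation they admit carries a well-defined natural-number trace length $\#\trace$, and therefore falls within $gv(n)$ for some sufficiently large $n$. Consequently $\bigvee_{n \in \mathbb{N}} gv(n)$ behaves as $\truer$ on both predicates, and taking the disjunction of the indexed equalities collapses the infinite family of partial agreements into $\mu F = \nu F$.

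The principal obstacle is the limit step rather than the induction itself, whose calculation is essentially forced by the form of the guardedness axiom. One must carefully justify that the family $\{gv(n)\}_{n \in \mathbb{N}}$ genuinely covers all observations that matter, combining the measure axioms on $\#$ with $\healthy{R1}$-healthiness to exclude the irrelevant observations where $\neg(tr \le tr')$ and so would not lie in any $gv(n)$. This is precisely where our generalisation from the sequence-based trace model of Hoare and He to an arbitrary trace algebra equipped with a discrete measure earns its keep: the axioms $\#\tempty = 0$, monotonicity on non-empty traces, and additivity over $\tcat$ are exactly what is needed to substitute for the length function on sequences in the original proof.
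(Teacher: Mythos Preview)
Your proposal is correct and follows essentially the same approach as the paper, which defers to Hoare and He's approximation-chain argument (their theorem 8.1.13) and mechanises it via the bounded-trace predicates $gv(n)$ exactly as you outline. The one point to be careful about is that Definition~\ref{def:guarded} states the guardedness equation only for $P \in \theoryset{\healthy{SRD}}$, and the $\mu,\nu$ here are the \emph{relational}-lattice operators, so before you can apply guardedness to $\mu F$ in the inductive step or invoke $\healthy{R1}$-healthiness in the limit step you must separately argue that these fixed points are themselves $\healthy{SRD}$-healthy---which the paper secures through the continuity of $\healthy{SRD}$ and the typing $F : \theoryset{\healthy{SRD}} \to \theoryset{\healthy{SRD}}$.
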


\noindent Technically, the $\mu$ and $\nu$ operators we use here are those of the relational calculus lattice, and not
of an arbitrary UTP theory (hence the lack of subscripts). The proof given in~\cite[theorem 8.1.13, page 206]{Hoare&98},
which our mechanised proof follows, omits the step that transitions the theory fixed-point operator ($\thlfp{R}$) to the
relational one ($\mu$). However, this step is necessary in order to employ their approximation chain
theorem~\cite[theorem 2.7.6, page 63]{Hoare&98}. Thus, in order to employ Theorem~\ref{thm:uniqfp} for reactive designs,
we first have to use Theorems~\ref{thm:rdes-lattice} and \ref{thm:contthm} to convert the reactive design fixed-point
operator. Continuity of $\healthy{SRD}$ is thus an important property. Using that, for any guarded process, we can
convert a recursive construction using a weakest fixed-point to one using a strongest fixed-point.

In order to make use of Theorem~\ref{thm:uniqfp}, it is necessary to prove guardedness theorems for the operators of the
target language. In general, this can be quite complicated; in many cases, however, we can shortcut guardedness and
instead focus on tail-recursive fixed-point constructions of the form $\thlfp{R}\, X @ P \relsemi X$, where $X$ is not
mentioned in $P$, as employed by Theorem~\ref{thm:rdes-rec}. This pattern, though restrictive, covers a large number of
specifiable \Circus processes, for instance. In this case, guardedness can be shown simply by showing that $P$ always
produces events before it terminates. Of course, $P$ may not terminate at all, but in this case the recursion variable
$X$ is unreachable and thus $\thlfp{R}\, X @ P \relsemi X$ reduces to $P$. Whether or not $P$ terminates, productivity
is the criterion needed (see Definition~\ref{def:productive}), and from this we can prove the following theorem.

\begin{theorem} \label{thm:productive}
  If $P$ is productive then the function $\lambda X @ P \relsemi X$ is guarded. \isalink{https://github.com/isabelle-utp/utp-main/blob/90ec1d65d63e91a69fbfeeafe69bd7d67f753a47/theories/rea_designs/utp_rdes_guarded.thy\#L156}
\end{theorem}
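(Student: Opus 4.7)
The plan is to reduce guardedness of $F(X) \defs P \relsemi X$ to a single trace-measure calculation on sequential compositions with $\postR{P}$. The central lemma I would prove first is that, for any reactive relation $Y$ and any $n \in \mathbb{N}$, if $\postR{P}$ is $\ckey{R4}$-healthy then $(\postR{P} \relsemi Y) \land gv(n+1) = (\postR{P} \relsemi (Y \land gv(n))) \land gv(n+1)$. The right-to-left inclusion is immediate by monotonicity of $\relsemi$. For the reverse, unfold sequential composition via Definition~\ref{def:utp-prog} to expose an intermediate trace value $tr_0$ with $\postR{P}[tr_0/tr']$ and $Y[tr_0/tr]$. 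Productivity of $\postR{P}$ gives $tr < tr_0$, so by the measure-function axioms of Definition~\ref{def:guarded} (positivity on non-empty traces and additivity over $\tcat$) we have $\#(tr_0 - tr) \ge 1$; combining with $\#(tr' - tr) < n+1$ yields $\#(tr' - tr_0) < n$, which is exactly $gv(n)[tr_0/tr]$, supplying the missing conjunct on $Y$.

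With this lemma in hand, I would then unfold $P \relsemi X$ using Theorem~\ref{sec:rdes-seq}, noting that the only occurrences of $X$ are through $\postR{P} \wpR \preR{X}$ in the precondition, $\postR{P} \relsemi \periR{X}$ in the pericondition, and $\postR{P} \relsemi \postR{X}$ in the postcondition. The peri- and postcondition cases follow directly from the lemma. For the precondition, the unfolding $\postR{P} \wpR \preR{X} = \negr(\postR{P} \relsemi \negr\preR{X})$ from Definition~\ref{def:wpR} lets me push the trace constraint through the inner sequential composition using the same lemma, after which the outer reactive negation can be moved back out. The subterms $\preR{P}$ and $\periR{P}$ do not mention $X$ and thus contribute nothing to the guardedness equation.

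The main obstacle will be bridging the syntactic and semantic views of $X$: guardedness is phrased as a plain relational identity comparing $F(X)$ with $F(X \land gv(n))$, but the composition law manipulates the extracted triples $\preR{X}$, $\periR{X}$, and $\postR{X}$. I therefore expect to spend some care showing that conjoining $X$ with $gv(n)$ corresponds, after the substitutions of Definition~\ref{def:preperipost}, to conjoining each of its three components with the appropriately substituted $gv(n)$ --- here it helps that $gv(n)$ does not mention $ok$, $ok'$, $wait$, or $wait'$, so the substitutions leave it intact. Once this correspondence is set up, assembling the three component identities through Theorem~\ref{thm:rdesform} and the closure of $\healthy{R}_s$ under conjunction with $gv(n+1)$ yields the required equation $(F(X) \land gv(n+1)) = (F(X \land gv(n)) \land gv(n+1))$, completing the proof of guardedness.
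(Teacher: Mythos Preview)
The paper gives no textual proof of this theorem, deferring entirely to the Isabelle mechanisation, so there is no written argument to compare against. Your central lemma --- that if $\postR{P}$ is $\healthy{R4}$ then $(\postR{P} \relsemi Y) \land gv(n+1) = (\postR{P} \relsemi (Y \land gv(n))) \land gv(n+1)$ --- is correct and is the arithmetic heart of the result.

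The gap is in the assembly. To compare $P \relsemi X$ with $P \relsemi (X \land gv(n))$ via Theorem~\ref{sec:rdes-seq} you must unfold the latter as well, but that theorem requires both operands to be $\healthy{NSRD}$, and $X \land gv(n)$ is not: conjoining with $gv(n)$ breaks $\healthy{RD1}$ (in the $\neg ok$ branch the behaviour must be all of $\truer$, which $gv(n)$ genuinely restricts), and for the same reason Theorem~\ref{thm:rdesform} does not recover $X \land gv(n)$ from its extracted triple --- it returns $\healthy{SRD}(X \land gv(n))$, a different predicate. A symptom is that the correspondence you anticipate fails for the precondition: $\preR{X \land gv(n)} = \negr((\negr\preR{X}) \land gv(n)) = \preR{X} \lor \negr gv(n)$, a disjunction rather than the conjunction your lemma expects. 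In effect the triple route handles only the branch where $P$ has actually terminated into $X$; the divergent and not-yet-started ($\neg ok$) branches of the intermediate state need a separate relational argument, using $\healthy{RD1}$ healthiness of $P$ together with extension closure of $\negr\preR{P}$ (from $\healthy{RC}$) to slide the intermediate trace point up towards $tr'$ so that the $gv(n)$ constraint on $X$ becomes vacuous. With those side cases dispatched directly, your lemma then closes the terminating case.
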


\noindent So, by Theorem~\ref{thm:uniqfp} we can map the weakest fixed-point to the strongest fixed-point. This brings
us to Kleene's fixed-point theorem, which allows us to calculate an iterative construction for the strongest fixed-point
of a continuous function.

\begin{theorem}[Kleene's fixed-point theorem] \label{thm:kleene} If $F$ is a continuous function then the strongest
  fixed-point can be calculated by iteration: $$\nu F ~~=~~ \bigsqcap_{i\in\nat} F^i(\false)$$
\end{theorem}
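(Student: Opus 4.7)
\noindent The plan is to follow the standard proof of Kleene's fixed-point theorem, carefully accounting for the fact that in the UTP refinement lattice $\false$ is the top, $\true$ is the bottom, and $\bigsqcap$ is the meet (coinciding with predicate disjunction). Under these conventions the formula $\nu F = \bigsqcap_{i\in\nat} F^i(\false)$ is the descending iteration that converges to the greatest fixed-point, and the UTP notion of continuity (preservation of non-empty $\bigsqcap$, as defined in Section~\ref{sec:utp}) is exactly what is needed.

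\noindent First I would establish that the iterates $F^i(\false)$ form a descending chain in refinement. The base case $F(\false) \refinedby \false$ is immediate because $\false$ is the top, and the inductive step $F^{i+1}(\false) \refinedby F^i(\false)$ follows by monotonicity of $F$, which in turn follows from continuity. Next, setting $L \defs \bigsqcap_{i \in \nat} F^i(\false)$, I would apply continuity to the non-empty family $\{F^i(\false) \mid i \in \nat\}$ to get
\begin{align*}
F(L) ~~=~~ F\!\left(\bigsqcap_i F^i(\false)\right) ~~=~~ \bigsqcap_i F^{i+1}(\false) ~~=~~ \bigsqcap_{i \ge 0} F^i(\false) ~~=~~ L,
\end{align*}
where the penultimate equality uses $F^0(\false) = \false$ being the top, so including or excluding it from the infimum has no effect. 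Hence $L$ is a fixed-point of $F$.

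\noindent To conclude $L = \nu F$, I would show $L$ is the greatest fixed-point in $\refinedby$. Given any $X$ with $F(X) = X$, a straightforward induction establishes $X \refinedby F^i(\false)$ for every $i$: the base $X \refinedby \false$ is trivial, and the step uses $X = F(X) \refinedby F(F^i(\false)) = F^{i+1}(\false)$ via monotonicity. Taking the meet over $i$ yields $X \refinedby L$, completing the argument.

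\noindent The only real pitfall is notational: keeping the direction of refinement straight, since ``greatest fixed-point'' in $\refinedby$ means the strongest, and what is usually the \emph{co}-continuity hypothesis in standard order theory corresponds exactly to the UTP paper's \emph{continuity} hypothesis (because $\bigsqcap$ is the lattice meet here, not the join). Once the conventions are in place the argument is routine, and the non-emptiness side condition required by continuity is satisfied automatically by $\nat$.
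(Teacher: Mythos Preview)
The paper does not actually supply a proof of this theorem: it is stated as a standard result, with only the remark that the usual Scott-continuity hypothesis is here replaced by the paper's stronger notion of continuity (preservation of non-empty $\bigsqcap$), and with the blanket claim that all theorems are mechanically checked in Isabelle/UTP. There is therefore nothing to compare against at the level of proof text.

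Your argument is the standard Kleene proof, correctly transposed to the UTP lattice conventions. You rightly observe that $\false$ is the top, that $\bigsqcap$ is the meet, and that the paper's continuity hypothesis is exactly the (co-)continuity needed for the descending-chain version of the theorem. The three steps---descending chain by monotonicity, fixed-point via continuity and the harmless inclusion of the top term $F^0(\false)$, and maximality by induction showing every fixed point lies below each iterate---are all sound, and the non-emptiness side condition on continuity is handled. Your closing remark about the notational pitfall is apt and matches the paper's own comment that their continuity is stronger than Scott-continuity. In short: correct, and essentially the proof one would expect the Isabelle mechanisation to follow.
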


\noindent Kleene's fixed-point theorem often employs Scott-continuity as its antecedent, which is based on complete
partial orders rather than complete lattices. We employ our stronger notion of continuity in Theorem~\ref{thm:kleene},
since we do have a complete lattice in our setting. Theorem~\ref{thm:kleene} allows us to calculate the fixed-point by
iterating $F$, starting from $\false$, which is the top ($\top$) of the relational lattice. Now, for our simplified pattern
$\nu X @ P \relsemi X$ we automatically have continuity since relational composition is continuous, that is
$$(\bigsqcap i @ P(i)) \relsemi Q = (\bigsqcap i @ P(i) \relsemi Q)$$ is a theorem of relational calculus (see Theorem~\ref{thm:rellaws}). Combining this
property with Theorem~\ref{thm:rdes-comp-pow}, which includes the calculation for a power construction, we are now in
the position to substantiate Theorem~\ref{thm:rdes-rec} for calculating iterative reactive contracts. We give the
calculational proof below explicitly since it uses and illustrates several of our results.

\begin{proof}[Proof of Theorem~\ref{thm:rdes-rec}] \isalink{https://github.com/isabelle-utp/utp-main/blob/90ec1d65d63e91a69fbfeeafe69bd7d67f753a47/theories/rea_designs/utp_rdes_guarded.thy\#L279}
\begin{align*}
  &\quad~ \thlfp{R}\, X @ \rc{P}{Q}{R} \relsemi X  \\[1ex]
  &= \mu X @ \rc{P}{Q}{R} \relsemi \healthy{SRD}(X) & \text{[Theorem~\ref{thm:contthm-lfp}]} \\[1ex]
  &= \nu X @ \rc{P}{Q}{R} \relsemi \healthy{SRD}(X) & \text{[Theorems~\ref{thm:uniqfp} and \ref{thm:productive}]}\\[1ex]
  &= \bigsqcap_{i \in \nat} (\lambda X @ \rc{P}{Q}{R} \relsemi \healthy{SRD}(X))^i(\false) & \text{[Theorem \ref{thm:kleene}]} \\[1ex]
  &= \bigsqcap_{i \in \nat} (\lambda X @ \rc{P}{Q}{R} \relsemi \healthy{SRD}(X))^{i+1}(\false)&  \text{[Unfold: $f^0(\false) = \false)$]} \\[1ex]
  &= \bigsqcap_{i \in \nat} (\rc{P}{Q}{R}^{i+1} \relsemi \healthy{SRD}(\false)) & \text{[Induction on $i$]} \\[1ex]
  &= \bigsqcap_{i \in \nat} (\rc{P}{Q}{R}^{i+1} \relsemi \Miracle) & \text{[Theorem~\ref{thm:utp-lattice-top}]} \\[1ex]
  &= \bigsqcap_{i \in \nat} \rc{\bigwedge_{i \le n} \left(R^i \wpR P\right)}{\bigvee_{i \le n} R^i \relsemi Q}{R^{n+1}} \relsemi \Miracle & \text{[Theorem~\ref{thm:rdes-comp-pow}]} \\[1ex]
  &= \bigsqcap_{i \in \nat} \rc{\bigwedge_{i \le n} \left(R^i \wpR P\right)}{\bigvee_{i \le n} R^i \relsemi Q}{\false} & \text{[Theorem~\ref{law:RD9}]} &  \\[1ex]
  &= \rc{\bigwedge_{i \in \nat} \left(R^i \wpR P\right)}{\bigvee_{i \in \nat} R^i \relsemi Q}{\false} 
  & \left[\begin{array}{r} \text{Theorem~\ref{thm:rdes-comp-inf} and} \\ \text{relational calculus}
    \end{array}\right] & \qedhere
\end{align*}
\end{proof}

\noindent This proof demonstrates the necessity of a large corpus of theorems we have proved from the UTP theories and
reactive designs in order to reason about recursion. We now have a complete constructive approach for calculating the
reactive contract for a variety of recursive specifications. Extension to deal with mutual recursion is laborious, but
not challenging.

\subsection{Parallel Composition}
\label{sec:parallel}

In this section we introduce the parametric parallel composition operator for reactive designs, and substantiate a
number of preliminary results for the operator, including well-formedness conditions supported by a novel healthiness
condition. Parallel composition in UTP is expressed in terms of the parallel-by-merge scheme $P \mathop{\parallel_M} Q$,
whereby the final states of concurrent separated processes $P$ and $Q$ are merged by predicate $M$. We adopt a slightly
simplified definition of parallel-by-merge, originally presented in~\cite{Foster16a}, which assumes that the alphabet of
both $P$ and $Q$ is the same, but is otherwise semantically equivalent to the standard UTP definition~\cite[chapter
7]{Hoare&98}.

\begin{definition}[Parallel-by-Merge] \label{def:par-by-merge} \isalink{https://github.com/isabelle-utp/utp-main/blob/07cb0c256a90bc347289b5f5d202781b536fc640/utp/utp_concurrency.thy\#L275}
  $$P \parallel_M Q ~\defs~ (\psep{P}{0} \land \psep{Q}{1} \land v' = v) \relsemi M$$
\end{definition}

\noindent The intuition for parallel-by-merge is given in Figure~\ref{fig:pbm}. It splits the state space into three
parts, one of which is passed to $P$, the second to $Q$, and the third is simply copied from the original input
state. Assuming both $P$ and $Q$ produce an output, the merge predicate $M$ computes the overall output that should be
given by merging the three states. For imperative programs, a simple example merge predicate may pick two disjoint
subsets of the variables output by $P$ and $Q$ to produce an overall output. The value of such a scheme for parallel
composition is that different concurrency schemes can be supported, and properties of the parallel composition operator
often reduce to properties of the merge predicate~\cite[theorem 7.2.10, page 173]{Hoare&98}.

To formalise this in Definition~\ref{def:par-by-merge}, we require that $P$ and $Q$ are homogeneous relations with the
alphabet $\{v, v'\}$, where $v$ is a vector of variables. $\psep{P}{0}$ and $\psep{Q}{1}$ rename the dashed variables by
adding indices $0$ and $1$, respectively, so they can be distinguished by the merge predicate $M$\footnote{These are
  called ``separating simulations'' in~\cite[page 172]{Hoare&98}, and are denoted using special relations called $U0$
  and $U1$.}. $M$ is a heterogeneous alphabetised relation whose alphabet is $\{ 0.v, 1.v, v, v'\}$. It takes three
copies of the variables: the renamed dashed variables of $P$ and $Q$, and all undashed variables. The separated
processes are conjoined, which is acceptable because of their disjoint output alphabets, along with a predicate that
copies all initial variables ($v' = v$). The resulting conjunctive relation is then composed in sequence with $M$. This
merge predicate calculates the overall final state in terms of the initial state, and final states of $P$ and $Q$. Thus,
the overall composition $P \parallel_M Q$ has alphabet $\{v, v'\}$.

For our theory of reactive designs, the objective, as indicated by Section~\ref{sec:parcontract} is to specialise
parallel-by-merge so that it acts only on the trace, state, and other semantic observational variables (like
$\refu$). We therefore define the parametric merge predicate $\mathcal{M}_R(M)$ that merges $ok$ and $wait$ variables,
so handling divergence and intermediate observations. It defers merging of the states and traces to an ``inner merge
predicate'' $M$, such as $M_c$ described in Example~\ref{ex:interleave}. We require that $M$ does not refer to $ok$ or
$wait$ nor decorations thereof.

We define three auxiliary merge operators that construct the ``outer merge predicate'', by showing how $ok$ and $wait$
are merged, and imposition the reactive design healthiness conditions. These operators are then used to define the
reactive design parallel composition operator.

\begin{definition}[Reactive Designs Parallel Composition] \label{def:rdpar} \isalink{https://github.com/isabelle-utp/utp-main/blob/90ec1d65d63e91a69fbfeeafe69bd7d67f753a47/theories/rea_designs/utp_rdes_parallel.thy\#L107}
  \begin{align*}
    \mathcal{N}_0(M) &~\defs~ (wait' = (0.wait \lor 1.wait) \land tr \le tr' \land M) \\[1ex]
    \mathcal{N}_1(M) &~\defs~ (ok' = (0.ok \land 1.ok) \land \mathcal{N}_0(M)) \\[1ex]
    \mathcal{M}_R(M) &~\defs~ \healthy{RD3}(\healthy{RD1}(\healthy{R3}_h(\mathcal{N}_1(M)))) \\[1ex]
    P \rcpar{M} Q    &~\defs~ P \parallel_{\mathcal{M}_R(M)} Q
  \end{align*}
\end{definition}

\noindent The auxiliary merge functions $\mathcal{N}_0$ and $\mathcal{N}_1$ conjoin the inner merge predicate $M$ with
three conjuncts. $\mathcal{N}_0$ handles merging of the $wait$ variables. If either $P$ or $Q$ admits an intermediate
observation ($0.wait$ or $1.wait$), then also the composite observation is intermediate, and thus we take the
disjunction of the $wait$ variables to determine $wait'$. The third conjunct ensures the resulting merge is
$\healthy{R1}$ healthy. $\mathcal{N}_1$ handles merging of the $ok$ variables. If either $P$ or $Q$ diverges, then we
require that their composition also diverges and thus $ok'$ takes the conjunction of both $ok$ variables. Using
$\mathcal{N}_1$, we then define the overall merge predicate $\mathcal{M}_R$ by application of three healthiness
conditions to construct a stateful reactive design. The parametric merge of two reactive designs, $P \rcpar{M} Q$ with
inner merge predicate $M$, is simply a parallel-by-merge using $\mathcal{M}_R(M)$.

It remains to prove that $\healthy{NSRD}$ is closed under reactive design parallel composition. In order to prove this,
we need to restrict the form of the merge predicate using new healthiness conditions. Firstly, we need a modified
version of $\healthy{R2}_c$ that is applicable to merge predicates, first defined in~\cite{Foster17a}.

\begin{definition}[$\healthy{R2}_c$ for Merge Predicates] \isalink{https://github.com/isabelle-utp/utp-main/blob/90ec1d65d63e91a69fbfeeafe69bd7d67f753a47/theories/reactive/utp_rea_parallel.thy\#L21}
  \begin{align*}
    \healthy{R2}_m(M) &\defs (P[\tempty,tr'\!-\!tr,0.tr\!-\!tr,1.tr\!-\!tr/tr,tr',0.tr,1.tr]) \infixIf tr \le tr' \infixElse P
  \end{align*}
\end{definition}

\noindent Merge predicates have three ways of accessing the trace history through the three respective copies of the
trace variable. Thus, it is necessary to delete the history in $0.tr$, $1.tr$, and $tr$ in the healthiness condition
$\healthy{R2}_m$ to ensure that this does not occur. This allows us to prove the following theorem

\begin{theorem} \label{thm:parreaclos}
  $P \parallel_M Q$ is $\healthy{R1}$ and $\healthy{R2}_c$ healthy provided that $P$ and $Q$ are both $\healthy{R1}$ and
  $\healthy{R2}_c$, and $M$ is $\healthy{R1}$ and
  $\healthy{R2}_m$. \isalink{https://github.com/isabelle-utp/utp-main/blob/90ec1d65d63e91a69fbfeeafe69bd7d67f753a47/theories/reactive/utp_rea_parallel.thy\#L114}
\end{theorem}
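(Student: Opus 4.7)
The plan is to establish the two healthiness conditions separately, with $\healthy{R1}$ closure being essentially immediate and $\healthy{R2}_c$ closure requiring more care.

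For $\healthy{R1}$ closure, I would unfold Definition~\ref{def:par-by-merge} to obtain $P \parallel_M Q = (\psep{P}{0} \land \psep{Q}{1} \land v' = v) \relsemi M$ and observe that, in the composition, the left-hand operand transmits the initial $tr$ unchanged into the intermediate state via the $v' = v$ conjunct. Since $M$ is $\healthy{R1}$, we have $tr \le tr'$ across the second half of the composition, where $M$'s initial $tr$ coincides with the original $tr$ of the composite. By transitivity this yields $tr \le tr'$ for the entire composition, so $P \parallel_M Q$ is $\healthy{R1}$ regardless of any assumptions on $P$ or $Q$ themselves.

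For $\healthy{R2}_c$ closure, the cleanest strategy is to apply Theorem~\ref{thm:trcontr} to rewrite each of $P$ and $Q$ in the canonical trace-contribution form $\exists t @ P[\tempty, t/tr, tr'] \land tr' = tr \tcat t$, and to use the analogous characterisation for $\healthy{R2}_m$-healthy merge predicates, where the history is simultaneously deleted from all three trace copies $tr$, $0.tr$, and $1.tr$. Substituting these forms into Definition~\ref{def:par-by-merge} and pushing the existential witnesses through the separating simulations yields a presentation in which the initial $tr$ appears only on the left of the final $\tcat$ reconstructing $tr'$. This shows that $P \parallel_M Q$ itself fits the canonical $\healthy{R1}$-$\healthy{R2}_c$ form of Theorem~\ref{thm:trcontr}, which gives the desired healthiness.

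The main obstacle will be managing the three copies of $tr$ inside the merge, because each is dashed differently and each carries its own prefix relation to the output $tr'$. The definition of $\healthy{R2}_m$ is tailored precisely to erase history from all three simultaneously, so the key technical step is to match its substitution pattern with the substitutions induced on $\psep{P}{0}$ and $\psep{Q}{1}$ by the $\healthy{R2}_c$-healthiness of $P$ and $Q$. Once this alignment is achieved, the remaining calculation is a routine manipulation of existentials together with associativity and unit laws \eqref{law:tassoc} and \eqref{law:tunit} of the trace algebra, plus the cancellation laws \eqref{law:tcancl1} and \eqref{law:tcancl2} to handle the equations of the form $tr' - tr = t$ that arise when passing between the two normal forms.
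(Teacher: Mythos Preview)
The paper does not present a written proof of this theorem; it is stated and linked directly to the Isabelle mechanisation, with the surrounding text only motivating why $\healthy{R2}_m$ must erase history from all three trace copies. There is therefore no explicit paper proof to compare against. Your proposed argument is sound and is the natural way to establish the result: the $\healthy{R1}$ case follows exactly as you describe from the $v' = v$ conjunct passing $tr$ unchanged to $M$, and the $\healthy{R2}_c$ case via the trace-contribution normal form of Theorem~\ref{thm:trcontr} correctly exploits the design of $\healthy{R2}_m$ so that $0.tr - tr$ and $1.tr - tr$ reduce to the contributions of $P$ and $Q$ after the separating simulations, leaving the composite dependent on $tr$ only through $tr' = tr \tcat t$.
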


This theorem provides the circumstances under which a parallel-by-merge constructs a healthy reactive process. From
this, we define the following healthiness condition for reactive-design inner merge predicates.

\begin{definition}[Healthy Inner Merge Predicate] \isalink{https://github.com/isabelle-utp/utp-main/blob/90ec1d65d63e91a69fbfeeafe69bd7d67f753a47/theories/rea_designs/utp_rdes_parallel.thy\#L128}
  $$\healthy{RDM}(M) \defs \healthy{R2}_m(\exists 0.ok, 1.ok, ok, ok', 0.wait, 1.wait, wait, wait' @ M)$$
\end{definition}

\noindent $\healthy{RDM}$ contains $\healthy{R2}_m$ and additionally formalises the requirement that the inner merge
predicate does not contain $ok$, $wait$, and decorations thereof, through existential quantification. These variables
have already been handled by the outer merge predicate, and so the inner merge predicate should not refer to them. $M_c$
in Example~\ref{ex:interleave} is clearly $\healthy{RDM}$, for example, since it refers only to $0.\trace$ and
$1.\trace$, thus satisfying $\healthy{R2}_m$, and does not mention $ok$ or $wait$ in any way.

Using this definition of healthy reactive inner merges, we can then prove the following closure theorem for parallel
composition:

\begin{theorem}[Parallel Composition Closure] \label{thm:parclos} If $P$ and $Q$ are $\healthy{SRD}$ healthy, and $M$ is
  $\healthy{RDM}$ healthy, then $P \parallel_M Q$ is $\healthy{NSRD}$ healthy. \isalink{https://github.com/isabelle-utp/utp-main/blob/90ec1d65d63e91a69fbfeeafe69bd7d67f753a47/theories/rea_designs/utp_rdes_parallel.thy\#L153}
\end{theorem}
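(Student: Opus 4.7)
The plan is to derive $\healthy{NSRD}$-closure by showing that each of the five constituent invariants making up $\healthy{NSRD} = \healthy{RD1} \circ \healthy{RD3} \circ \healthy{R}_s$, namely $\healthy{R1}$, $\healthy{R2}_c$, $\healthy{R3}_h$, $\healthy{RD1}$, and $\healthy{RD3}$, is preserved by the parallel-by-merge $P \parallel_{\mathcal{M}_R(M)} Q$. The critical observation is that $\mathcal{M}_R(M)$ is constructed in Definition~\ref{def:rdpar} by applying the very chain $\healthy{RD3} \circ \healthy{RD1} \circ \healthy{R3}_h$ to $\mathcal{N}_1(M)$, so the outer merge is ready-made for the theory; this is what allows $\healthy{RD3}$-normality of the composite to arise from the merge alone, even though $P$ and $Q$ are only assumed $\healthy{SRD}$-healthy rather than $\healthy{NSRD}$-healthy.

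First I would discharge $\healthy{R1}$- and $\healthy{R2}_c$-closure by appealing to Theorem~\ref{thm:parreaclos}. Since $\healthy{SRD}$ subsumes $\healthy{R}_s$, and hence $\healthy{R1}$ and $\healthy{R2}_c$, and since $\healthy{RDM}$ entails $\healthy{R2}_m$ for $M$, the hypotheses of Theorem~\ref{thm:parreaclos} are satisfied once I verify that the additional applications of $\healthy{R3}_h$, $\healthy{RD1}$, and $\healthy{RD3}$ inside $\mathcal{M}_R$ preserve $\healthy{R1}$ and $\healthy{R2}_m$ of the outer merge. This holds because those wrappers touch only $ok$ and $wait$ and impose the trivially $\healthy{R2}_m$-compatible conjunct $tr \le tr'$.

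For $\healthy{RD3}$-closure I would exploit associativity of $\relsemi$ together with the identity $\mathcal{M}_R(M) \relsemi \IIsrd = \mathcal{M}_R(M)$, which follows from $\healthy{RD3}$ being applied outermost in $\mathcal{M}_R$; expanding Definition~\ref{def:par-by-merge} then gives $(P \parallel_{\mathcal{M}_R(M)} Q) \relsemi \IIsrd = (\psep{P}{0} \land \psep{Q}{1} \land v' = v) \relsemi \mathcal{M}_R(M) \relsemi \IIsrd = P \parallel_{\mathcal{M}_R(M)} Q$. Analogous absorption arguments handle $\healthy{RD1}$ and $\healthy{R3}_h$: the auxiliary $\mathcal{N}_0$ and $\mathcal{N}_1$ ensure that waiting and divergence on either operand propagate to $wait'$ and $ok'$ respectively, while the outermost $\healthy{R3}_h$ and $\healthy{RD1}$ wrappers of $\mathcal{M}_R$ are transmitted to the overall composition through the trailing sequential composition with the merge.

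The main obstacle I anticipate is precisely the $\healthy{R3}_h$ step: one must verify that when $wait$ is true on input, the composite collapses to $\IIsrd$ in a way determined by the $\healthy{R3}_h$ wrapping of $\mathcal{M}_R(M)$, rather than being perturbed by the separating simulations $\psep{P}{0}$ and $\psep{Q}{1}$. Although $P$ and $Q$ are themselves $\healthy{R3}_h$-healthy, since $\healthy{SRD}$ includes $\healthy{R}_s$, the interaction between their separated forms, the copied-input conjunct $v' = v$, and the merge's own $\healthy{R3}_h$-wrapping requires careful bookkeeping, particularly to ensure that state is uniformly hidden in intermediate observations as demanded by Theorem~\ref{thm:nrdesform}. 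Once this step is discharged, assembling the five closure results yields $\healthy{NSRD}$-healthiness, either directly or via Corollary~\ref{thm:nsrdintro}.
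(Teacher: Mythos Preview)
The paper does not include a textual proof of this theorem; it is stated and deferred to the Isabelle mechanisation via the accompanying link. Your proposal is a coherent strategy that is consistent with the paper's supporting infrastructure: it correctly identifies Theorem~\ref{thm:parreaclos} as the source of $\healthy{R1}$/$\healthy{R2}_c$-closure, and correctly exploits the fact that $\mathcal{M}_R(M)$ is built by applying $\healthy{RD3} \circ \healthy{RD1} \circ \healthy{R3}_h$ so that the normality conditions can be pulled through the trailing sequential composition in Definition~\ref{def:par-by-merge}. Your flagged obstacle for $\healthy{R3}_h$ is the right one: the merge predicate has an enlarged input alphabet $\{0.v, 1.v, v\}$, so the $\IIsrd$ appearing inside $\healthy{R3}_h(\mathcal{N}_1(M))$ must be interpreted at that alphabet, and one must check that the separating simulations and the copy conjunct $v' = v$ combine with it to yield the homogeneous $\IIsrd$ at the outer alphabet when $wait$ holds; this is bookkeeping but not automatic. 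With that caveat, nothing in your plan is at odds with the paper's development, and the alternative closing step via Corollary~\ref{thm:nsrdintro} is also available once the $\healthy{SRD}$-closure and the $\healthy{RD3}$-imposed shape of the pre- and pericondition are established.
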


\noindent As seen in Definition~\ref{def:rdpar}, parallel composition, like sequential composition, is not defined
explicitly as a reactive design using the contract syntax. However, in order to calculate the meaning of a parallel
reactive program it is necessary to know how to calculate a contract form for it. By Theorem~\ref{thm:parclos} we know
that parallel composition is a normal stateful reactive design ($\healthy{NSRD}$), and therefore we can invoke
Theorem~\ref{thm:nrdesform} to split it into a reactive design triple, toward substantiation of
Theorem~\ref{thm:rdespar}. It then suffices to calculate its pre-, peri-, and postcondition. In order to do this, we
need to characterise interference between parallel processes using a notion of weakest rely condition. This is the
weakest context under which two parallel composed processes do not violate one another's assumptions.

\begin{definition}[Weakest Rely Condition] \label{def:wppR} \isalink{https://github.com/isabelle-utp/utp-main/blob/90ec1d65d63e91a69fbfeeafe69bd7d67f753a47/theories/rea_designs/utp_rdes_parallel.thy\#L667}
  \begin{align*}
    P \wppR{M} Q ~~\defs~~& \negr ((\negr Q) \parallel_{\hbox{\small $M$} \relsemi \truer} P)
  \end{align*}
\end{definition}

\noindent The operator $P \wppR{M} Q$ is essentially the concurrent case of the reactive weakest precondition $\wpR$
(cf. Definition~\ref{def:wpR}). It represents the weakest context where reactive relation $P$ does not lead to the
violation of reactive condition $Q$. We achieve this by first merging the possible traces of negated $Q$ with those of
$P$ using the merge predicate $M$ composed with $\truer$. This determines all the behaviours permitted by merge
predicate $M$ that are enabled by $P$ and yet violate $Q$. The composition with $\truer$ ensures that resulting reactive
relation is extension closed. We then negate the resulting relation to obtain the overall reactive precondition.

We can show that the weakest rely condition constructor forms a reactive condition, which is essential to our using it
in the reactive contract precondition:

\begin{theorem}[Weakest Rely Condition forms a Reactive Condition] If $P$ and $Q$ are $\healthy{RR}$ healthy, and $M$ is
  $\healthy{RDM}$ healthy, then $P \wppR{M} Q$ is a reactive condition. \label{thm:wrcrc} \isalink{https://github.com/isabelle-utp/utp-main/blob/90ec1d65d63e91a69fbfeeafe69bd7d67f753a47/theories/rea_designs/utp_rdes_parallel.thy\#L711}
\end{theorem}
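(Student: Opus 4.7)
The plan is to unfold the $\healthy{RC}$ healthiness condition into its two constituents, $\healthy{RR}$ and $\healthy{RC1}$, and to discharge each separately. Recall that by Definition~\ref{def:RC} we have $\healthy{RC} = \healthy{RC1} \circ \healthy{RR}$, so it suffices to show that $P \wppR{M} Q$ is a reactive relation and that $\truer$ is a right unit of the reactive complement of $P \wppR{M} Q$.

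First I would handle the $\healthy{RR}$ closure. The outer $\negr$ directly delivers $\healthy{R1}$ healthiness. For $\healthy{R2}_c$, I would invoke Theorem~\ref{thm:parreaclos} on the inner parallel composition $(\negr Q) \parallel_{M \relsemi \truer} P$: the operands $\negr Q$ and $P$ are $\healthy{R1}$ and $\healthy{R2}_c$ healthy by assumption (and by the closure of $\negr$ over \healthy{RR}), so the remaining obligation is that $M \relsemi \truer$ is $\healthy{R1}$ and $\healthy{R2}_m$, both of which follow from $M$ being $\healthy{RDM}$ together with the fact that $\truer = \healthy{R1}(\true)$. Finally, absence of $ok$, $ok'$, $wait$, $wait'$ in $P \wppR{M} Q$ follows because $\healthy{RDM}$ quantifies away these variables from $M$, and the $\healthy{RR}$ operands likewise do not mention them.

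Next I would tackle $\healthy{RC1}$ healthiness, which is the crux of the proof. The key algebraic lemma is
\begin{equation*}
\bigl((\negr Q) \parallel_{M \relsemi \truer} P\bigr) \relsemi \truer ~=~ (\negr Q) \parallel_{M \relsemi \truer} P.
\end{equation*}
This follows from Definition~\ref{def:par-by-merge} by associativity of sequential composition, pushing the trailing $\relsemi \truer$ into the merge, and using $\truer \relsemi \truer = \truer$ (an easy consequence of transitivity of $tr \le tr'$). Given this identity, the $\healthy{RC1}$ calculation is straightforward: unfolding the definition of $\healthy{RC1}(P \wppR{M} Q)$, applying double negation to eliminate the inner $\negr\negr$, and then collapsing the trailing $\relsemi \truer$ by the lemma, yields $P \wppR{M} Q$ itself.

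The main obstacle is the algebraic step that absorbs $\relsemi \truer$ into the parametric merge. It requires care to verify that the reordering of operators in parallel-by-merge is sound, i.e.\ that $(X \relsemi Y) \relsemi Z = X \relsemi (Y \relsemi Z)$ applies cleanly when one of the factors is a merge predicate with its tripled alphabet. Once this identity is in hand, both the $\healthy{RR}$ and $\healthy{RC1}$ obligations reduce to routine reactive calculus, and the theorem follows immediately.
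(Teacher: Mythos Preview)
Your proposal is correct and follows essentially the same route as the paper: invoke Theorem~\ref{thm:parreaclos} for the $\healthy{R1}$/$\healthy{R2}_c$ part, and then discharge $\healthy{RC1}$ by unfolding, cancelling the double reactive negation, and absorbing the trailing $\relsemi \truer$ into the merge predicate via associativity of $\relsemi$ and $\truer \relsemi \truer = \truer$. Your treatment is in fact slightly more explicit than the paper's on the $\healthy{RR}$ side, since you spell out the $ok$/$wait$ absence argument that the paper leaves implicit.
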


\begin{proof}
  By Theorem~\ref{thm:parreaclos} we can show that $P \wppR{M} Q$ is $\healthy{R1}$ and $\healthy{R2}_c$. It
  therefore suffices to show that it is also $\healthy{RC1}$ healthy, which we do below.
  \begin{align*}
    \healthy{RC1}(P \wppR{M} Q) 
    =~& \healthy{RC1}(\negr ((\negr P) \parallel_{\hbox{\small M} \relsemi \truer} Q)) & \text{[$\wppR{M}$ definition]} \\
    =~& \negr ((\negr \negr ((\negr P) \parallel_{\hbox{\small M} \relsemi \truer} Q)) \relsemi \truer) & \text{[$\healthy{RC1}$ definition]} \\
    =~& \negr (((\negr P) \parallel_{\hbox{\small M} \relsemi \truer} Q) \relsemi \truer) & \text{[predicate calculus]} \\
    =~& \negr (((\negr P) \parallel_{\hbox{\small M}} Q) \relsemi \truer \relsemi \truer) & \text{[$\parallel_{\hbox{\small M}}$ definition]} \\
    =~& \negr (((\negr P) \parallel_{\hbox{\small M}} Q) \relsemi \truer) & \text{[relational calculus]} \\
    =~& \negr ((\negr P) \parallel_{\hbox{\small M} \relsemi \truer} Q) & \text{[$\parallel_{\hbox{\small M}}$ definition]} \\
    =~& P \wppR{M} Q & \text{[$\wppR{M}$ definition]}
  \end{align*}
  This essentially follows due the imposition of extension closure in the merge predicate. \qedhere
\end{proof}

\noindent We can now use the weakest rely condition to calculate the precondition for parallel composition. We enumerate
all the ways that divergence could arise from the composition, and request that this not happen using weakest rely
conditions.

\begin{theorem}[Parallel Precondition] \label{thm:parpre} \isalink{https://github.com/isabelle-utp/utp-main/blob/90ec1d65d63e91a69fbfeeafe69bd7d67f753a47/theories/rea_designs/utp_rdes_parallel.thy\#L744}
  $$\preR{P \rcpar{M} Q} = 
    \left(\begin{array}{l} 
      \preR{P} \wppR{M} \periR{Q} \land \preR{P} \wppR{M} \postR{Q} \land \\[.5ex]
      \preR{Q} \wppR{M} \periR{P} \land \preR{Q} \wppR{M} \postR{P}
    \end{array}\right)$$
\end{theorem}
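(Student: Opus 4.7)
The plan is to expand $\preR{P \rcpar{M} Q}$ by substitution and then perform case analysis on the ways in which the composition can diverge. By Definition~\ref{def:preperipost}, we have $\preR{P \rcpar{M} Q} = \negr(P \rcpar{M} Q)[true, false, false/ok, ok', wait]$. Unfolding $P \rcpar{M} Q$ via Definitions~\ref{def:rdpar} and \ref{def:par-by-merge}, this becomes, up to the substitution, $\negr((\psep{P}{0} \land \psep{Q}{1} \land v' = v) \relsemi \mathcal{M}_R(M))$. Since $P$ and $Q$ are $\healthy{NSRD}$-healthy, Theorem~\ref{thm:rdesform} allows us to rewrite each side as a reactive design triple, so that the divergence information is encoded exactly through their pre-, peri-, and postconditions.

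Next, I would analyse how $ok' = false$ can arise in the composition. The core of $\mathcal{M}_R(M)$ is $\mathcal{N}_1(M)$, which sets $ok' = (0.ok \land 1.ok)$; thus $ok' = false$ after the merge holds precisely when $0.ok$ is false (so $P$ has diverged into $\negr \preR{P}$) or $1.ok$ is false (so $Q$ has diverged into $\negr \preR{Q}$). Inside each disjunct, a further case split on $0.wait$ or $1.wait$ in $\mathcal{N}_0(M)$ determines whether the non-diverging component is observed in an intermediate state (yielding its pericondition) or a final state (yielding its postcondition). Together, this produces four symmetric disjuncts, one for each combination of (which side diverges) $\times$ (intermediate/final observation of the other).

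The final step is to recognise each of the four disjuncts as a weakest rely condition. Distributing the outer reactive negation over the disjunction produces a conjunction of four negated parallel-by-merge expressions. In each of these, the $\healthy{RD1}$ and $\healthy{R3}_h$ wrappers in $\mathcal{M}_R(M)$ effectively compose with $\truer$ when one side is diverging (because after divergence only $\truer$-style behaviour remains), and the remaining inner merge matches the pattern $(\negr Q') \parallel_{M \relsemi \truer} P'$ of Definition~\ref{def:wppR}. Matching the four combinations directly gives $\preR{P} \wppR{M} \periR{Q}$, $\preR{P} \wppR{M} \postR{Q}$, $\preR{Q} \wppR{M} \periR{P}$, and $\preR{Q} \wppR{M} \postR{P}$.

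The principal obstacle will be the bookkeeping of substitutions and renamings through the nested healthiness conditions in $\mathcal{M}_R(M)$ and the separating simulations $\psep{-}{0}$, $\psep{-}{1}$. In particular, we must verify that the $\healthy{RD3}$, $\healthy{RD1}$, and $\healthy{R3}_h$ wrappers do not introduce spurious behaviours in the $ok' = false$ case, and that the extension-closure tail $\relsemi\,\truer$ in $\wppR{M}$ is exactly what emerges from composition with a diverged side. This is where the extension-closure reasoning used in the proof of Theorem~\ref{thm:wrcrc} — together with $\healthy{RDM}$-healthiness of $M$ via Theorem~\ref{thm:parreaclos} — does the heavy lifting and ties the calculation together.
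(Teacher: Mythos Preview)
Your approach is essentially the one the paper outlines: the paper does not give a detailed proof of this theorem but only the informal justification that ``divergence can arise in four possible ways: when an intermediate or final observation of $Q$ leads to a state where the precondition of $P$ is violated, and the converse situation for $P$ and $Q$'', deferring the actual argument to the Isabelle mechanisation. Your expansion via the $ok' = (0.ok \land 1.ok)$ merge in $\mathcal{N}_1$, followed by the $wait$ case split from $\mathcal{N}_0$, and the final identification with Definition~\ref{def:wppR}, is exactly the structure behind that sentence.

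One point worth tightening: your case analysis names only the four ``one side diverges, the other is in peri/post'' cases, but the disjunction $\neg 0.ok \lor \neg 1.ok$ also admits the case where \emph{both} sides diverge. This is in fact subsumed, because by Theorem~\ref{thm:exttrip} we have $\negr\preR{P} \Rightarrow \periR{P}$ (and similarly for $Q$), so the double-divergence observations are already absorbed into the peri-observation conjuncts; but you should say so explicitly rather than leave it implicit in ``four symmetric disjuncts''. Apart from that, the bookkeeping you anticipate around $\healthy{RD3}$, $\healthy{RD1}$, $\healthy{R3}_h$ and the emergence of the $\relsemi\,\truer$ tail is indeed the substantive technical content, and your reference to the extension-closure argument of Theorem~\ref{thm:wrcrc} is the right lever for it.
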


\noindent Divergence can arise in four possible possible ways: when an intermediate or final observation of $Q$ leads to
a state where the precondition of $P$ is violated, and the converse situation for $P$ and $Q$.

The theorem for calculating the pericondition of parallel composition requires that we merge the traces, but not the
state variables as these are concealed in intermediate observations. So, we define the following derived parallel
composition operator.

\begin{definition}[Intermediate Merge] $P \rcmergee{M} Q ~~\defs~~ P \parallel_{\exists st' @ M} Q$
\end{definition}

\noindent $P \rcmergee{M} Q$ merges the traces of $P$ and $Q$, whilst hiding the merged post-state using an existential
quantifier. It is used in the following calculation of parallel pericondition.

\begin{theorem}[Parallel Pericondition and Postcondition] \label{thm:parperipost} \isalink{https://github.com/isabelle-utp/utp-main/blob/90ec1d65d63e91a69fbfeeafe69bd7d67f753a47/theories/rea_designs/utp_rdes_parallel.thy\#L437}
  \begin{align*}
    \periR{P \rcpar{M} Q} ~=~~& \preR{P \rcpar{M} Q} \rimplies
    \left(\begin{array}{l} 
       \periR{P} \rcmergee{M} \periR{Q} \lor \\ 
       \postR{P} \rcmergee{M} \periR{Q} \lor \\
       \periR{P} \rcmergee{M} \postR{Q}
    \end{array}\right) \\[1ex]
    \postR{P \rcpar{M} Q} ~=~~& \preR{P \rcpar{M} Q} \rimplies (\postR{P} \parallel_M \postR{Q})
  \end{align*}
\end{theorem}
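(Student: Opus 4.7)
The plan is to compute the peri- and postcondition extractions of $P \rcpar{M} Q = P \parallel_{\mathcal{M}_R(M)} Q$ directly, by unfolding Definitions~\ref{def:par-by-merge} and \ref{def:rdpar}, applying the substitutions of Definition~\ref{def:preperipost}, and performing a case analysis on the Boolean constraints that $\mathcal{N}_1(M)$ imposes on $ok'$ and $wait'$. Since $P \rcpar{M} Q$ is $\healthy{NSRD}$ by Theorem~\ref{thm:parclos}, it admits a contract form, so by Theorem~\ref{thm:exttrip} its extracted peri- and postcondition will naturally take the shape $\preR{P \rcpar{M} Q} \rimplies (\ldots)$; it therefore suffices to identify the body of this implication with the stated disjunction (for peri) and single merge (for post).

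After substitution, the merge predicate is constrained by $ok' = (0.ok \land 1.ok)$ and $wait' = (0.wait \lor 1.wait)$ from $\mathcal{N}_1(M)$. Fixing $ok' = true$ forces both $0.ok$ and $1.ok$ to $true$, which restricts $\psep{P}{0}$ and $\psep{Q}{1}$ to their non-divergent behaviour, namely their peri and post parts modulo their own preconditions. For the pericondition we also fix $wait' = true$, which partitions $(0.wait \lor 1.wait)$ into three sub-cases $(true,true)$, $(true,false)$, $(false,true)$, corresponding respectively to the peri--peri, peri--post, and post--peri combinations of $P$ and $Q$, giving the three disjuncts of the theorem. For the postcondition, $wait' = false$ forces both $0.wait$ and $1.wait$ to $false$, leaving the single post--post case.

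The state-hiding that distinguishes $\rcmergee{M}$ from $\parallel_M$ arises from the $\healthy{RD3}$ wrapper in $\mathcal{M}_R(M)$, which right-composes with $\IIsrd$. Since $\IIsrd$ reduces to $\exists st @ \II$ when its input $wait$ is $true$, and this input is identified with the merge's output $wait'$ by sequential composition, the peri case (with $wait' = true$) causes $st'$ to be existentially quantified --- matching the definition of $\rcmergee{M}$. The post case (with $wait' = false$) makes $\IIsrd$ behave as the ordinary identity, so $st'$ is preserved and the merge remains $\parallel_M$. The $\healthy{R3}_h$ wrapper has no further effect here because the substituted input $wait$ is $false$, which disables its $\IIsrd$ branch.

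The main obstacle will be discharging the precondition bookkeeping: each sub-case carries its own precondition-implications coming from the peri and post extractions of $P$ and $Q$, and these must be collected into the overall $\preR{P \rcpar{M} Q}$ given by Theorem~\ref{thm:parpre}. This amounts to combining the weakest-rely laws of Theorem~\ref{thm:wrlaws} with the absorption of $\healthy{RD1}$ under $ok = true$, and exploiting the fact that outside the overall precondition the extracted peri collapses to $\truer$. Once the preconditions are reconciled in this way, the remaining equalities reduce to routine relational-calculus manipulations that can be discharged mechanically in Isabelle/UTP.
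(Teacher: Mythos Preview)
Your proposal is correct and aligns with the paper's own (brief) justification, which likewise appeals to the disjunctive merge of $wait'$ to obtain the three peri cases and the single post case, and to the state-concealment in intermediate observations to explain why $\rcmergee{M}$ rather than $\parallel_M$ appears in the pericondition. The paper defers the detailed calculation to the Isabelle/UTP mechanisation, so your unfolding-and-substitution plan is exactly the kind of argument that the mechanised proof carries out.
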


\noindent The merge predicate calculates the value of $wait'$ from the disjunction of $0.wait$ and $1.wait$, and so the
overall observation of a parallel composition is intermediate if either $P$ or $Q$ is. The parallel postcondition, for
the same reason, requires that both $P$ and $Q$ have reached their final states. Unlike for the pericondition, the
normal parallel by merge operator is used, since the state is no longer concealed.

We have now shown the laws for calculating the pre-, peri-, and postconditions of parallel reactive contracts. Combining
Theorems~\ref{thm:parpre} and \ref{thm:parperipost} with Theorem~\ref{thm:exttrip} allows us to finally substantiate
Theorem~\ref{thm:rdespar}. This then completes our preliminary results on parallel composition for reactive designs.

\vspace{1ex}

In this section, we have laid the foundations for our theory of reactive contracts, considering its foundations, links
to other theories, and recursion and parallel composition. We next consider its mechanisation in Isabelle/HOL, which
allowed us to mechanically prove the all of the laws previously presented.

\section{Mechanised Proof with Isabelle/UTP}
\label{sec:mech}
In this section, we give an overview of the mechanisation work in Isabelle/UTP~\cite{Foster16a,Zeyda16}, which has
supported the development of our theory of reactive contracts, and has allowed us to turn these theories into a
prototype verification tool.

The theory hierarchy described in Sections~\ref{sec:rrel}, \ref{sec:contracts}, and \ref{sec:grd}, has been developed
almost entirely mechanically. Whilst we have planned the main high-level theorems using pen and paper prior to
mechanisation, all the low-level lemmas have been proved automatically using the proof tactics we have developed for
UTP. This has significantly helped our theory engineering. We have been able to progress at a speedy pace, with proof
support aiding the exploration of theory properties and discovery of missing or implicit assumptions in our
theorems. This has had a positive impact on the precision of our theories by preventing formalisation gaps in proofs,
and improves our overall confidence in the correctness of our theorems. This is particularly the case due to the LCF
architecture of Isabelle/HOL, which ensures that our theories are all sound with respect to the axioms of higher order
logic. The most difficult technical challenge of this work has been mechanising the parallel-by-merge
construct~\cite{Hoare&98}, which requires that we perform non-trivial manipulations on alphabets; for details please see
the accompanying mechanised artefacts in Section~\ref{sec:parallel}.

We have also used our theory hierarchy to produce a refinement-based verification tactic for reactive contracts.  This,
in particular, allows us to harness a number of existing automated proof tactics and strategies~\cite{Blanchette2011} in
verifying reactive programs, such as \textsf{auto}, which automates logical deduction, and \textsf{sledgehammer}, which
integrates external automated theorems and SMT solvers.

Each of the rules for contract calculation given in Section~\ref{sec:contracts} have been proved correct with respect to
our underlying operator definitions of relational calculus. These theorems then act as inputs to the proof tactic, which
proceeds by first calculating the contract for a reactive program, and then attempting to automatically prove that it
refines a contractual specification. A proof goal about a potentially complex reactive program can be effectively
reduced into three simpler reactive relations characterising assumptions, intermediate behaviours, and final behaviours,
which can be more readily discharged and so favours greater automation. This is the approach we have prototyped in our
verification example in Section~\ref{sec:mondex}.

The initial particular problem to be solved is bridging the gap between UTP predicates and HOL predicates, which are
distinct types, albeit strongly related. In particular, the majority of predicate and relational operators of
Isabelle/UTP are defined, via the \textsf{lifting} package~\cite{Huffman13}, in terms of operators of HOL. The major
difference is the addition of lenses~\cite{Foster16a}, which accounts for the state variables and alphabet that is not
explicitly present in HOL predicates. Nevertheless, because UTP predicates are effectively an enrichment of HOL
predicates, laws and tactics applicable to the former often can be adapted to the latter. This, therefore, prevents the
need to ``reinvent the wheel'' when conducting automated reasoning in Isabelle/UTP.

Our approach to reasoning about alphabetised predicates and relations therefore extends the approach we have first
described in~\cite{Foster14}. The basic approach is two step: firstly apply the \textsf{transfer} tactic of the
\textsf{lifting} package~\cite{Huffman13} to interpret a UTP predicate as a HOL predicate, and secondly apply
Isabelle/HOL's built in automated reasoning tactics. We have extended this approach to provide explicit support for
lenses, such that each UTP variable can be collapsed and rewritten to a HOL variable with a similar name. Effectively
this means that reasoning about UTP predicates and relations can often be entirely reduced to reasoning about HOL
predicates, which improves proof automation.

In particular, for relations, proof reduces to a satisfiability problem, since sequential composition boils down to
existential quantification, substitution, and conjunction. We provide the \textsf{rel-auto} tactic, which uses this
approach, combined with HOL's \textsf{auto} tactic, to deal with conjectures in the predicate and relational
calculus. It can discharge the majority of core Isabelle/UTP laws automatically, though in some cases a subsequent call
to automated theorem provers using \textsf{sledgehammer} is also necessary.

Reasoning about predicates and relations in this way is very powerful and efficient, but it can be further optimised by
making use of specific patterns imposed by a UTP theory. Each reactive contract is written in terms of the $\Rs$
healthiness condition, and the design turnstile, which adds a large relational overhead particularly when dealing with
large composite definitions. As we indicated, this machinery can be collapsed using the theorems of Sections
\ref{sec:contracts} and \ref{sec:grd}, so that proof is performed over the pre-, peri-, and postconditions, which are
all simply reactive relations. Therefore, we have created a series of additional tactics to automate this process, and
in particular the tactics \textsf{rdes-refine} and \textsf{rdes-eq} that employ the calculational laws, and
Theorems~\ref{thm:rdesrefine} and \ref{thm:rdesequiv}, to prove (or refute) refinement conjectures and equality
conjectures.

In particular, this tactic can be used to the automate contract verification problem highlighted in
Section~\ref{sec:contracts}, which takes the form of $$\rc{P_1}{P_2}{P_3} \sqsubseteq Q$$ and states that some reactive
program $Q$, defined using operators of the object language, satisfies the given contractual specification.

This \textsf{rdes-refine} and \textsf{rdes-eq} tactics are generic, in that they can be applied to prove conjectures of
any reactive contract based language, so long as denotational definitions and theorems are available that show how to
expand the operators to contracts. The proof process has three stages:

\begin{enumerate}
\item \textbf{Calculate and Simplify Reactive Design Contracts}. The operators of the object language, for example
  \Circus, are first expanded into their contractual specifications. For a large reactive program this could result in
  three very large predicates, which are addressed separately in the next stage. This stage requires that appropriate
  calculation theorems have been proved and are already available. The algebraic laws of reactive designs, for example
  Theorem~\ref{thm:rdes-comp}, are applied to convert a composite reactive contract into a monolithic one of the
  syntactic form $\rc{P_1}{P_2}{P_3}$. The application of these theorems and the associated simplifications can be
  performed separately by a tactic called \textsf{rdes-simp}.
\item \textbf{Apply the Refinement Theorem}. Theorem~\ref{thm:rdesrefine} is applied to break the refinement conjecture
  into three proof obligations about the pre-, peri-, and postconditions. As usual, we require that the precondition is
  weakened, and the peri- and postcondition is strengthened.
\item \textbf{Apply Relational Transfer and Reasoning Tactics}. The three proof obligations are purely relational, do
  not refer to observational variables $ok$ and $wait$, and thus are relatively lightweight in nature. Thus we can now
  apply \textsf{rel-auto}, and potentially other Isabelle/HOL tactics, to try and complete the proof. Of course, we can
  also at this point try to generate a counterexample, which can be traced back to a fault in the pre-, peri-, or
  postcondition.
\end{enumerate}

\noindent We have found this approach extremely useful for automated reasoning, and it has been used to validate all the
example calculations, in particular those related to the cash-card example\footnote{See
  \url{https://github.com/isabelle-utp/utp-main/blob/master/tutorial/utp_csp_mini_mondex.thy} for our complete
  mechanised theory} from Section~\ref{sec:mondex}. The tactic is also very general, in that any language whose
operators can be expressed using contracts, automatically receives proof support. In this sense, whilst so far we have
applied it only to \Circus, it can also be applied to a variety of related
languages~\cite{Wei2013,Woodcock14,He94,Zhan2008}, and therefore Isabelle/UTP constitutes a generic toolkit for
verification tools.

\section{Related Work}
\label{sec:related}
Meyer~\cite{Meyer92} coined the term ``design-by-contract'' when arguing the need for precisely specified assertions of
a program's behaviour to ensure reliability of component-based software systems. These assertions come in three forms:
preconditions, postconditions, and invariants, which are used to annotate methods and attributes within classes in his
object-oriented programming language, Eiffel~\cite{Meyer88}.

Meyer's work has its foundation in that of Floyd~\cite{Floyd67} and Hoare~\cite{Hoare69} on proving correctness of
programs in terms of the Hoare triple $\hoaretriple{p}{Q}{r}$. This asserts that if program $Q$ is started in a state
satisfying predicate $p$ then, provided $Q$ terminates, the final state satisfies predicate $r$. Effectively, the Hoare
triple sets up a contract on state variables for $Q$ that mandates a particular relationship between inputs and outputs.

Refinement calculi~\cite{Back1998,Mor1996,Morris1987} promote such pre/postcondition specifications to statements of the
language itself. A programming language is extended with an abstract specification statement, such as $w\text{:}[pre,
post]$~\cite{Mor1996}, where $pre$ is the precondition, $post$ the postcondition, and $w$ the ``frame'' -- that is, the
set of variables that are allowed to change. A specification statement can be transformed into a block of executable
code through a series of correctness-preserving refinements that divide the specification into constituent parts, and
eventually introduce atomic commands. Morgan~\cite{Mor1996} and Back~\cite{Back1998} both refer to such specification
statements as ``contracts'' between the specifier and the implementer, who is allowed to weaken the precondition and
strengthen the postcondition. This makes the implementation more deterministic and thus predictable, whilst fulfilling
the original contract that specifies it.

Hoare logic and refinement calculus can be unified either algebraically, using Kleene
algebras~\cite{Armstrong2015,Gomes2016}, or else denotationally~\cite{Cavalcanti04,He2006,Oliveira&09}, using the UTP
theory of designs~\cite{Hoare&98,Cavalcanti04}, which is foundational for our reactive contracts. The unifying nature of
UTP is achieved through encoding all conceivable programs, not as bespoke mathematical structures, but as elements
of the alphabetised predicate calculus.

Benveniste et al.~\cite{Benveniste2007,Benvenuti2008} provide a formal contract framework that has been very influential
in the area of contract-based design~\cite{Vincentelli2012,Bauer2012,Cimatti2015}. They give a denotational framework
that describes contracts as pairs of predicates $C \defs (A, G)$ where $A$ is the assumption, and $G$ is the
guarantee. $A$ and $G$ both characterise the set of program variable traces that prescribe valid behaviours of the
context and given component, respectively. They define a notion of refinement $C_1 \preceq C_2$ called ``dominance''
--- named to distinguish it from refinement to implementation, but otherwise technically the same --- that requires that
the assumption be weakened and the guarantee be strengthened. They also define an algebra for contracts including
parallel composition $C_1 \parallel C_2$, disjunction $C_1 \sqcap C_2$, and conjunction $C_1 \sqcup C_2$.  Dominance
corresponds to the universal notion of refinement in UTP: $P \refinedby Q$.

We note that Benveniste's theory has a striking resemblance to the theory of designs, although designs do not explicitly
account for traces. Designs effectively encode the pair of assumption-guarantee predicates into a single relation via
the $ok$ and $ok'$ variables. Thus, many of the laws in \cite{Benveniste2007} have very similar design theorems, in
particular those for the lattice operators. This implies that derivatives of the UTP design theory, including ours,
rightly fit into their general contract framework~\cite{Benveniste2007,Benvenuti2008}, and the resulting work
stream~\cite{Vincentelli2012,Bauer2012,Cimatti2015}.

Works building on Benveniste's framework~\cite{Cimatti2012,Cimatti2015} allow step-wise refinement, and verification
based on temporal logic properties. Previous works~\cite{vonKarger1995,vonKarger2000} indicate that such temporal logics
can be readily characterised in our domain. Our extensible semantic model is more expressive than a purely trace-based
model, and we are limited to neither trace nor failures-divergences refinement. Our contract model also explicitly
distinguishes terminating behaviours, and thus supports sequential as well as parallel composition. We also handle state
variables, so that they need not be modelled as a sequence of updates in the trace.

On its own, the UTP theory of designs only accounts for imperative behaviour and thus various specialisations have been
made for other paradigms. Notable is the theory of reactive designs~\cite{Cavalcanti&06,Oliveira&09}, which adds program
histories represented by traces, alongside an account of non-terminating behaviours. Effectively, the theory of reactive
designs combines the theory of designs with the theory of reactive processes. This integration of stateful and reactive
behaviour enabled Oliveira et al.~\cite{Oliveira2005-PHD,Oliveira&09} to give a UTP semantics to the \Circus language,
which combines CSP~\cite{Hoare85,Roscoe2005}, the Z notation~\cite{Spivey89}, and guarded command
language~\cite{Dijkstra75}.

Reactive designs arise as a direct consequence of a result due to Hoare and He~\cite[Theorem 8.2.2]{Hoare&98}, and was
named as such by Cavalcanti and Woodcock~\cite{Cavalcanti&06} who highlighted a normal form:
$\healthy{R}(\design{P_1}{P_2})$, which is intuitively a design made reactive with assumption $P_1$ and guarantee
$P_2$.. They can be used as a semantic domain for reactive languages with assumptions and guarantees, and is the
foundational idea behind our theory. The innovation of our work is to subdivide $P_2$ into two parts, for intermediate
and final observations, and prove a large set of calculational theorems that support automated verification. Our
reactive designs theory also implements a contract framework with a generic trace model, based on trace
algebra~\cite{Foster17a}, similar to \cite{Benveniste2007,Cimatti2015}. However, we embed the traces directly into the
design predicate through \healthy{R1} and $\healthy{R2}_c$. This means that the relational calculus operators are
directly applicable without redefinition, and substantiates the unifying nature of our theory.

Our work also overcomes a number of technical limitations in the existing reactive designs
theory~\cite{Oliveira2005-PHD,Oliveira&09}. Firstly, we explicitly characterise internal state with an observational
variable, and require that the after state is invisible in intermediate states. This enables correct interaction with
stateful behaviours for operators like external choice~\cite{BGW09}. Secondly, we relax a restriction of
\cite{Oliveira2005-PHD,Oliveira&09} on reactive design assumptions, which in their previous work cannot mention traces
and thus are unsuitable to represent reactive assumptions.

Butterfield et al.~\cite{BGW09} study state visibility in the theory of reactive designs, noting that the observation of
program state is miraculous while waiting for an external choice to be resolved for $\healthy{R3}$ predicates. This is
because in the original account~\cite{Hoare&98} $\healthy{R3}$ leaves intermediate valuations of state variables
observable, however, CSP's external choice operator conjoins the intermediate observations since, until an event occurs,
all such behaviours have potential to resolve. Therefore, inconsistent intermediate values for state variables lead to
miraculous behaviour since, for example, $(x := 1 \land x := 2) ~=~ \false$. Thus, in \cite{BGW09}, \healthy{R3} is
replaced by a variant called $\healthy{R3}_h$, which causes internal state updates to be abstracted in intermediate
observations. They note that this curtails the ability to interrupt an action and pass its intermediate state valuation
onto the interrupting action~\cite{BGW09,McEwan06}. Nevertheless, we adopt $\healthy{R3}_h$ in our work, as it leads to
a simpler handling of state and has several advantages for automated reasoning. We also note that it is not difficult to
adapt our mechanisation to a version of reactive contracts based on $\healthy{R3}$ instead.

Reactive designs have been further extended to account for discrete time in the UTP theories for the languages
\CircusTime~\cite{Sherif2010,Wei2013} and \CML~\cite{Woodcock14,Canham15}. They augment the trace with time events,
albeit in slightly different ways. Moreover, a theory similar to that of reactive designs has been used to give a
denotational semantics to a hybrid version of CSP~\cite{He94} and the dynamic systems modelling language
Modelica~\cite{Foster16b}. In that work, each continuous variable has a continuous-time trace. Our reactive contract
theory unifies these discrete and hybrid theories. Canham and Woodcock~\cite{Canham15} explicitly distinguish
non-terminating and terminating behaviours in the postcondition, resulting in a triple of the form
$\healthy{R}(\design{P}{Q_1 \wcond Q_2})$ that we generalise and adopt. A key result of our trace algebraic foundations
is that all the aforementioned extensions can be unified by our UTP theory.

Another extension of the theory of designs to account for contractual behaviour is found in
rCOS~\cite{He2005,He2006,Zhan2008}, a refinement calculus and UTP theory for object-oriented and component systems. A
contract in rCOS is expressed as a quadruple consisting of (1) the component's interface, that is, the types of its
attributes and methods; (2) an initialiser or constructor for the component; (3) a collection of specifications for the
methods; and (4) the valid traces of calls to methods that can be made. The initialiser and methods are specified using
a form of UTP design also called a ``reactive design''~\cite{He2005}. A contract's dynamic behaviour is given a
semantics in the style of CSP's failures-divergences, with method names as events.

In rCOS~\cite{He2005,He2006,Zhan2008}, the valid traces are used to protect the method calls that can deadlock if their
preconditions are violated, and thus act as a form of environmental assumption. Their notion of reactive design uses
only the healthiness condition \healthy{R3}. Moreover, their contract notion is not embedded in UTP's relational
calculus, but are explicit quadruples. This means they cannot reuse operators like sequential composition and
nondeterministic choice, hampering composition with other UTP theories. Nevertheless, their language, being based on a
CSP-style failures-divergence semantics augmented with stateful operations with design semantics, can be readily
embedded into \Circus, and hence our reactive design theory.

The rely-guarantee technique of Jones et al.~\cite{Jones81,Jones2003,Hayes2016a} provides contract-based reasoning for
programs with shared variable concurrency. A rely-guarantee quintuple $\{p,r\}c\{g,q\}$ states that, assuming the
initial state satisfies predicate $p$, and each atomic step of the environment satisfies relation $r$ -- the rely
condition -- then program $c$ terminates in a state satisfying $q$ and guarantees $g$ at each of its atomic steps. The
rely condition specifies how much interference the process must tolerate from its siblings, and in return, the guarantee
condition puts a limit on the interference the process can cause through its manipulation of the shared variables. A
co-existence proof obligation requires that each parallel program is guaranteeing enough for what the other relies
upon. Another proof obligation is that the postcondition is implemented by the steps taken by the two programs and their
environment.

It might seem that the rely-guarantee technique, being based on shared variables, is not directly applicable to
languages like \Circus, which is based on a communication model with no variable sharing. However, our contracts are
based on an abstract notion of trace that can encompass shared variable updates as events. In our reactive design
contracts, the precondition can refer to both the trace and initial value of state variables, and therefore encompasses
both a precondition $p$ and rely condition $r$. Our pericondition broadly corresponds to a guarantee condition $g$,
since it states what must be true at each intermediate step. However, our postcondition can also refer to the completed
trace at the point of termination, along with the final valuation of state variables, which may afford us additional
expressivity.

Comparison with rely-guarantee becomes easier if we consider a more abstract algebraic level. Hayes et al. have explored
algebras to characterise rely-guarantee reasoning~\cite{Hayes2016a,Hayes2016b,Colvin2017}, which they call Concurrent
Refinement Algebra (CRA). Like the theory of designs, CRA is based on a refinement lattice with nondeterministic
choice. It also provides operators for sequential composition, parallel composition, and iteration. In \cite{Hayes2016b},
they identify a subalgebra of CRA that corresponds to atomic steps. In addition to characterising atomic state updates,
the algebra is also used to characterise CSP-style events, which more readily compares with our work. 

Hayes' semantics for CRA is based on Aczel traces~\cite{Colvin2017}, which explicitly distinguish passive environmental
events and active program events. The former is used to encode that a process is willing to allow a particular event to
occur, whilst the latter expresses active engagement. The resulting semantic model is very different to the standard
failures-divergence model of CSP and \Circus, in that an event prefix $a \then P$ explicitly prescribes that all other
possible events can occur before $a$ occurs in the trace, by using environment events. This allows a unified notion of
parallel composition, which simply reduces the possible environmental and program events.

Their model does not currently account for external choice, though this should be representable with the choice being
resolved only through a program event, and not environment events. Nevertheless, the relation to the standard model of
CSP remains to be seen and a soundness result would be necessary to use their semantic model as a drop-in
replacement. On the other hand, our semantic model is sufficiently general to encompass finite Aczel traces, though for
this paper we focus on (though do not mandate) the standard failures-divergences model in examples.

\section{Conclusions}
\label{sec:conclusion}
We have developed a comprehensive and generalised UTP theory of reactive designs that can be applied to contract-based
modelling and automated verification of sequential and parallel reactive systems. Our language of contracts allows to
both compose specifications for reactive programs, and also give denotational semantics to reactive languages in the
programs-as-predicates approach~\cite{Hehner93}. At a specification level, our contracts allow us to restrict
permissible behaviours of the environment using preconditions, and specify possible behaviours in intermediate and final
observations using peri- and postconditions. This is supported by a theory of reactive relations
(Section~\ref{sec:rrel}), with which we are able to specify predicates and relations that refer to both state variable
valuations and the trace of interactions. This theory provides an abstract algebraic account of traces, and therefore
supports both discrete event sequences and also piecewise continuous functions. The theory is therefore applicable to
both discrete time and hybrid systems.

From our theory, we derived a set of calculational laws in Section~\ref{sec:contracts} for reactive contract composition
using operators like sequential composition, non-deterministic choice, recursion, and parallel composition. We also
proved theorems that demonstrate refinement and equivalence between two contracts, and use these to develop a prototype
procedure for automated verification of reactive programs by calculation. This procedure was applied to a small
cash-card verification example in Section~\ref{sec:mondex} based in the \Circus~\cite{Woodcock2001-Circus} language. The
underlying healthiness conditions for our UTP theory have also been formulated in Section~\ref{sec:grd}, and a number of
key theorems for tail recursion and parallel composition have been expounded. Our contract theory is practically
supported by a mechanisation in Isabelle/UTP, which provides both confidence in the proven theorems, and also automated
verification facilities through a number of proof tactics. The vast majority of definitions and theorems from
Section~\ref{sec:rrel} onwards in this paper are novel; they exceptions are Theorems~\ref{thm:uniqfp}, \ref{thm:kleene},
and \ref{def:par-by-merge}.

There are a number of important areas for future work. In the area of assume-guarantee reasoning, there are a number of
works that have much in common with our framework~\cite{Benveniste2007,Benvenuti2008,Cimatti2015}, particularly at the
level of the fundamental UTP theory of designs. In the future, it would be interesting to fully explore these links, for
example, through formalisation of their theories in UTP and formation of suitable Galois connections~\cite{Hoare&98},
especially with recent work on hybrid system contracts~\cite{Cimatti2015}. Such a unification could also consider
formalisation of the Aczel trace model~\cite{Colvin2017}, a semantic model that facilitates event
frames~\cite{Hayes2016b}, which distinguish events \emph{engaged in} from those simply permitted by a system constituent. This
in turn could permit a much simpler, but no less expressive, definition of parallel composition provided by CSP and
\Circus and thus improve support for compositional reasoning with contracts.

Also with respect to parallel composition, the weakest-rely-condition calculus ($\wppR{M}$) will be further developed
to capture interference between concurrent processes, and establish concrete links with rely-guarantee
algebra~\cite{Hayes2016a} and concurrent Kleene algebra~\cite{Hayes2016b}. There is also need for a mechanised
refinement calculus, to support step-wise development of reactive programs. This should, in particular, support laws for
parallel composition that facilitate compositional reasoning through distributing invariants, which can draw on previous
work with \Circus~\cite{Oliveira2005-PHD}.

With respect to recursion, there is also potential future work. In this paper we have considered calculation of
contracts for tail recursive programs utilising Hoare and He's guardedness theorem~\cite{Hoare&98}, and Kleene's fixed
point theorem~\cite{Lassez82}. Whilst this encompasses a significant number of reactive programs, there is also
potential for more general recursion schemes. For example, the \Circus action
$$\mu X @ (a \then P \relsemi X) \extchoice b \then Q$$ which expresses a recursive body $P$ guarded by event $a$, with
escape event $b$ leading to $Q$, is clearly productive and guarded, but does not fit our tail recursive
pattern. Therefore, more general patterns could be identified and calculational laws proved utilising Theorems
\ref{thm:rdes-comp-pow} and \ref{thm:uniqfp}, and reusing our notion of productivity. One can potentially go
further and consider non-continuous recursion schemes to which Kleene's fixed-point theorem is not applicable anymore,
and can build on a further theorem by Hoare and He for calculating general recursive UTP designs~\cite[theorem 3.1.6,
page 81]{Hoare&98}.

In a different axis, we have yet to fully explore the use of our theory in the context of continuous-time trace
models. Previously, we have used such a model to give a UTP semantics to the dynamical systems modelling language
Modelica~\cite{Foster16b}, but in a setting without formal contracts. Thus, in the future, we could consider a contract
language for concurrent hybrid systems using the infrastructure present in this paper, together with a number of
specialisations. We already have some preliminary results on modelling block-based control law diagrams using hybrid
reactive contracts, and hope to publish this work in the near future.

Finally, more experience is needed in the use of our automated verification tool in order to explore applications from a
variety of domains. We would thus like to complete our mechanisation of \Circus, and apply it to verify a more
substantial case study. Clearly, this would rely upon the availability of parallel reasoning facilities highlighted
above. Moreover, verification tools for other UTP-based contract languages like \CML will be developed. We can then
start to gather results about the efficiency of our tool, and begin to work on improvements towards and applicability to
large and realistic industrial case studies.

\section*{Acknowledgements}

This research is funded by the CyPhyAssure project\footnote{CyPhyAssure Project:
  \url{https://www.cs.york.ac.uk/circus/CyPhyAssure/}}, EPSRC grant EP/S001190/1, the RoboCalc project\footnote{RoboCalc
  Project: \url{https://www.cs.york.ac.uk/circus/RoboCalc/}}, EPSRC grant EP/M025756/1, and the EU Horizon 2020 project
``INTO-CPS'', grant agreement 644047~\footnote{INTO-CPS Project Website: \url{http://projects.au.dk/into-cps/}.}. We
would like to thank our colleagues on this project, particularly those from Peter Gorm Larsen's group at {\AA}rhus
University, for their collaboration over the past three years, without which this work could not have come to
fruition. We also thank the anonymous reviewers of this article, whose suggestions have greatly improved the
presentation of our work.

\bibliographystyle{spmpsci}      %
\bibliography{grcontracts}   %

\end{document}